\newtheorem{theorem}{Theorem}[section]
\newtheorem{definition}{Definition}[section]
\newtheorem{lemma}[theorem]{Lemma}
\newtheorem{cor}[theorem]{Corollary}
\newtheorem{obs}[theorem]{Observation}
\newtheorem{prop}[theorem]{Proposition}
\newcommand{\Prob}[2]{\mathbf{P}_{#1} \left( #2 \right)}
\newcommand{\mobs}[2]{{\langle #1,#2\rangle}}
\newcommand{\x}{{\mathbf{x}}}
\newcommand{\y}{{\mathbf y}}
\newcommand{\z}{{\mathbf z}}
\newcommand{\w}{{\mathbf w}}
\newcommand{\tx}{{\tilde{x}}}
\newcommand{\txx}{{\tilde{\x}}}
\newcommand{\ty}{{\tilde{y}}}
\newcommand{\tyy}{{\tilde{\y}}}
\newcommand{\tm}{{t_{\rm mix}}}
\renewcommand{\leq}{\leqslant}
\renewcommand{\geq}{\geqslant}
\renewcommand{\epsilon}{\varepsilon}
\newcommand{\tv}[1]{\left\|#1\right\|_{\rm TV}}
\newcommand{\OO}{{\mathcal O}}
\newcommand{\calM}{{\mathcal M}}
\newcommand{\G}{\mathcal{G}}
\newcommand{\Diff}{\mathsf{Diff}}
\newcommand{\MonoC}{\mathsf{MonoC}}
\renewcommand{\deg}{\mathsf{deg}}
\title{Logit Dynamics with Concurrent Updates for Local Interaction Games\thanks{Vincenzo Auletta and Giuseppe Persiano are supported by Italian MIUR under the PRIN 2010-2011 project \emph{ARS TechnoMedia -- Algorithmics for Social Technological Networks}. Diodato Ferraioli and Francesco Pasquale are supported by EU FET project MULTIPLEX 317532.}}
\author{
Vincenzo Auletta\thanks{Universit\`a di Salerno, Italy. Email: \texttt{auletta@dia.unisa.it}.}
\and
Diodato Ferraioli\thanks{``Sapienza'' Universit\`a di Roma, Italy. Email: \texttt{ferraioli@dis.uniroma1.it}.}
\and
Francesco Pasquale\thanks{``Sapienza'' Universit\`a di Roma, Italy. Email: \texttt{pasquale@di.uniroma1.it}.}
\and
Paolo Penna\thanks{Autonomous Researcher, Italy. Email: \texttt{paolo.penna@gmail.com}.}
\and
Giuseppe Persiano\thanks{Universit\`a di Salerno, Italy. Email: \texttt{giuper@dia.unisa.it}.}
}
\date{}
\begin{document}
\setcounter{page}{0}
\maketitle
\thispagestyle{empty}

\begin{abstract}
{\em Logit choice} dynamics are a family of randomized best response dynamics based on the \emph{logit choice function} \cite{McFadden74}
that is used for modeling players with limited rationality and knowledge.
In this paper we study the \emph{all-logit dynamics}, where at {\em each} time step {\em all} players {\em concurrently} update their strategies according to the logit choice function. In the well studied {\em one-logit dynamics} \cite{blumeGEB93}
instead at each step {\em only one} randomly chosen player is allowed to update.

We study properties of the all-logit dynamics in the context of \emph{local interaction games}, a class of games that has been used to model complex social phenomena \cite{blumeGEB93,youngTR00,msFOCS09}
and physical systems~\cite{llpPTRF10}.
In a local interaction game, players are the vertices of a {\em social graph} whose edges are two-player potential games. Each player picks one strategy to be played for all the games she is involved in and the payoff of the player is the sum of the payoffs from each of the games. We prove that local interaction games characterize the class of games for which the all-logit dynamics is reversible.

We then compare the stationary behavior of one-logit and all-logit dynamics.
Specifically, we look at the expected value of a notable class of observables, that we call
{\em decomposable} observables.
We prove that the difference between the expected values of the observables at stationarity
for the two dynamics depends only on the \emph{rationality} level $\beta$ and
on the distance of the social graph from  a bipartite graph.
In particular, if the social graph is bipartite then decomposable observables have the same expected
value.
Finally, we show that the mixing time of the all-logit dynamics has the same twofold behavior that has been highlighted in the case of the one-logit:
for some games it exponentially depends on the rationality level $\beta$, whereas for other games it can be upper bounded by a function independent from $\beta$.
\end{abstract}

\newpage
\pagenumbering{arabic}

\section{Introduction}\label{sec::intro}
In the last decade,
we have observed an increasing interest in understanding phenomena
occurring in complex systems consisting of a large number of simple
networked components that operate autonomously guided by their own objectives and influenced by the behavior of the neighbors.
Even though (online) social networks are a primary example of
such systems, other remarkable typical instances can be found in
Economics (e.g., markets),
Physics (e.g., Ising model and spin systems) and Biology (e.g., evolution of life).
A common feature of these systems is that the behavior of each component
depends only on the interactions with a limited
number of other components (its neighbors) and
these interactions are usually very simple.

Game Theory is the main tool used to model the behavior of agents
that are guided by their own objective
in contexts where their gains depend also on the choices made by
neighboring agents.
Game theoretic approaches have been often proposed for modeling phenomena
in a complex social network,
such as the formation of the social network itself
\cite{jwJET1996,bgECO2000,adtwSTOC2003,flmpsPODC2003,cpPODC2005,bckmrrsWAW2010,bcdlICS2011},
the formation of opinions~\cite{koSTOC2011,bkoFOCS2011,fgvSAGT2012}
and the spread of innovation~\cite{youngPUP98,youngTR00,msFOCS09}.
Many of these models are based on \emph{local interaction games}~\cite{morris},
where agents are represented as vertices on a {\em social graph} and
the relationship between two agents is represented by a simple
two-player game played on the edge joining the corresponding vertices.

We are interested in the \emph{dynamics} that govern such phenomena and
several dynamics have been studied in the literature like, for example,
the best response dynamics~\cite{ftMIT91},
the logit dynamics~\cite{blumeGEB93},
fictitious play~\cite{flMIT98}
or no-regret dynamics \cite{hmjet}.
Any such dynamics can be seen as made of two components:
\begin{itemize}
\item \emph{Selection rule:}
by which the set of players that update their
state (strategy) is determined;
\item \emph{Update rule:}
by which the selected players update their strategy.
\end{itemize}
For example,
the classical best response dynamics compose the {\em best response} update
rule with a selection rule that selects one player at the time.
In the best response update rule,
the selected player picks the strategy that, given
the current strategies of the other players, guarantees the highest utility.
The Cournot dynamics~\cite{cournot} instead combine the
best response update rule with the selection rule that selects all players.
Other dynamics in which all players concurrently update their strategy
are fictitious play~\cite{flMIT98} and the no-regret dynamics~\cite{hmjet}.

In this paper,
we study a specific class of randomized update rules called
the {\em logit choice function} \cite{McFadden74,blumeGEB93,wolpert} which
is a type of noisy best response that models in a clean and
tractable way the limited knowledge (or bounded rationality) of
the players in terms of a parameter $\beta$ called {\em inverse noise}.
In similar models studied in Physics,
$\beta$ is the inverse of the temperature.
Intuitively, a low value of $\beta$ (that is, high temperature)
models a noisy scenario in which players choose their strategies
``nearly at random'';
a high value of $\beta$ (that is, low temperature) models
a scenario with little noise in which players
pick the strategies yielding higher payoffs with higher probability.

The logit choice function can be coupled with different selection rules so to give
different dynamics.
For example, in the {\em logit} dynamics \cite{blumeGEB93}
at every time step a single player is selected uniformly at random and the
selected player updates her strategy according to the logit choice function.
The remaining players are not allowed to revise their strategies in this time step.
One of the appealing features of the logit dynamics is that it naturally describes an ergodic Markov chain.
This means that the underlying Markov chain admits a
\emph{unique} \emph{stationary distribution} which we take as solution concept.
This distribution describes the long-run behavior of the system (which states appear  more frequently
over a long run).
The interplay between the noise and the underlying game naturally determines the system behavior:
(i) As the noise becomes ``very large'' the equilibrium point is ``approximately'' the uniform distribution; (ii) As the noise vanishes the stationary distribution concentrates on so called stochastically stable states \cite{sandholmMIT10} which, for certain classes of games, correspond to pure Nash equilibria \cite{blumeGEB93, afnGEB10}.

While the logit choice function is a very natural behavioral model for approximately rational agents,
the specific selection rule that selects one single player per time step
avoids any form of concurrency.
Therefore a natural question arises
\begin{quote}
 What happens if \emph{concurrent} updates are allowed?
\end{quote}
For example, it is easy to construct games for which the best response converges to a Nash equilibrium
when only one player is selected at each step and does not converge to any state when more players are
chosen to concurrently update their strategies.

In this paper we study how the logit choice function behave in an extreme case of concurrency. Specifically, we couple this update rule
with a selection rule by which \emph{all} players update their strategies at every time step.
We call such dynamics \emph{all-logit}, as opposed to the classical (\emph{one-})logit dynamics
in which only one player at a time is allowed to move.
Roughly speaking,
the all-logit are to the one-logit what the Cournot dynamics are to the best
response dynamics.

\paragraph{Our contributions.}
We study the all-logit dynamics for local interaction games \cite{morris,ellisonECO93,msFOCS09}.
Here players are vertices of a graph, called the {\em social graph},
and each edge is a two-player (exact) potential game.
We remark that games played on different edges by a player may be different but,
nonetheless, they have the same strategy set for the player.
Each player picks one strategy that is used for all of
her edges and the payoff is a (weighted) sum of the payoffs obtained from each game.
This class of games includes coordination games on a network \cite{ellisonECO93}
that have been used to model the spread of innovation and of new technology in social networks \cite{youngPUP98,youngTR00}, and
the Ising model \cite{Mart1999}, a model for magnetism.
In particular, we study the all-logit dynamics on local interaction games
for every possible value of the inverse noise $\beta$ and we are interested on properties of the
original one-logit dynamics that are preserved by the all-logit.

As a warm-up, we discuss two classical two-player games
(these are trivial local interaction games played on a graph with two vertices and one edge):
the coordination game and
the prisoner's dilemma. Even though for both games
the stationary distribution of the one-logit and of the all-logit
are quite different, we identify three similarities.
First, for both games, both Markov chains are reversible.
Moreover, for both games,
the expected number of players playing a certain strategy at the stationarity of
the all-logit is exactly the same as if the expectation was taken
on the stationary distribution of the one-logit.
Finally, for these games the mixing time is asymptotically the same regardless of the selection rule.
In this paper we will show that none of these findings is accidental.

We first study the \emph{reversibility} of the all-logit dynamics,
an important property of stochastic processes that is useful
also to obtain explicit formulas for the stationary distribution.
We \emph{characterize}
the class of games for which the all-logit dynamics
(that is, the Markov chain resulting from the all-logit dynamics) are reversible and
it turns out that this class coincides with the class of local interaction games.
This implies that the all-logit dynamics of all two-player potential games are reversible;
whereas not all potential games have reversible all-logit dynamics.
This is to be compared with the well-known result saying that one-logit dynamics
of every potential game are reversible with respect to the Gibbs measure~\cite{blumeGEB93}.
One of the tools we develop for our characterization yields a closed formula for
the stationary distribution of reversible all-logit  dynamics.

Then, we focus on the \emph{observables} of local interaction games.
An observable is a function of the strategy profile
(that is the sequence of strategies adopted by the players) and
we are interested in its expected values at stationarity for both the one-logit and the all-logit.
A prominent example of observable is the difference $\Diff$ between the
number of players adopting two given strategies in a game.
In a local interaction game modeling the spread of innovation on a social network
this observable counts the difference between the
number of adopters of the new and old technology whereas in the Ising model
it is the magnetic field of a magnet.

We show that there exists a class of observables whose expectation at stationarity of
the all-logit is the same as the expectation at stationarity of the one-logit as long as
the social network underlying the local interaction game is bipartite
(and thus trivially for all two-player games).
This class of observables includes the $\Diff$ observable. It is interesting to note
that the Ising game has been mainly studied for bipartite graphs (e.g., the two-dimensional
and the three-dimensional lattice).
This implies that, for the Ising model,
the all-logit dynamics are compatible with the observations and it is arguably
more natural than the one-logit (that postulate that at any given time step only one particle
updates its status and that the update strategy is instantaneously propagated).
We extend this result by showing that for general graphs,
the extent at which the expectations of these observables differ
can be upper and lower bounded  by a function of $\beta$ and of the distance
of the social graph from a bipartite graph.

Finally, we give the first bounds on the mixing time of the all-logit.
We start by giving a {\em general} upper bound on the mixing time of the all-logit
in terms of the {\em cumulative utility} of the game.
We then look at two specific classes of games: graphical coordination games and games with a dominant profile.
For \emph{graphical coordination games}, we prove an upper bound to the mixing time that exponentially depends on $\beta$.
Note that it is known~\cite{afpppSPAA11} that the one-logit also take a time exponential in $\beta$ for converging to the stationary distribution.
For \emph{games with a dominant profile}, we instead prove that the mixing time can be bounded by a function independent from $\beta$.
Thus, also for these games the mixing time of the all-logit has the same behavior of the one-logit mixing time.

\paragraph{Related works on logit dynamics.}
The all-logit dynamics for strategic games have been studied by Alos-Ferrer and Netzer~\cite{afnGEB10}.
Specifically, in \cite{afnGEB10} the authors study the logit-choice function combined with {general} selection rules
(including the selection rule of the all-logit)
and investigate conditions
for which a state is \emph{stochastically stable}.
A stochastically stable state is a state that has non-zero probability
as $\beta$ goes to infinity \cite{sandholmMIT10}.
We focus instead on a specific selection rule that is used by several
remarkable dynamics considered
in Game Theory (Cournot, fictitious play, and no-regret)
and consider the whole range of values of $\beta$.

The one-logit dynamics have been actively studied starting from the
work of Blume~\cite{blumeGEB93}
that showed that for $2\times 2$ coordination games,
the risk dominant equilibria (see \cite{hsMIT88})
are stochastically stable.
Much work has been devoted to the study of the one-logit for
local interaction games with the aim of modeling and understanding
the spread of innovation in a social network~\cite{ellisonECO93,youngTR00}.
A general upper bound on the mixing time of the one-logit dynamics for
this class of games is given by Berger et al.~\cite{bkmpPTRF05}.
Montanari and Saberi \cite{msFOCS09} instead studied the hitting time
of the highest potential configuration and relate this quantity to a
connectivity property of the underlying network.
Asadpour and Saberi \cite{asWINE09} considered the same problem for congestion games.
The mixing time and the metastability of the one-logit
dynamics for strategic games have been
studied in~\cite{afpppSPAA11,afppSODA12}.

\section{Definitions}\label{sec::logit}
In this section we formally define the local interaction games and the Markov chain induced by the all-logit dynamics.

\paragraph{Strategic games.}
Let $\mathcal{G}=\left([n],S_1,\dots,S_n,u_1,\dots,u_n\right)$ be a finite
{normal-form strategic game}. The set $[n]=\{1,\ldots,n\}$ is the player set,
$S_i$ is the set of \emph{strategies} for player $i\in[n]$,
$S=S_1\times S_2\times\cdots\times S_n$ is the set of
\emph{strategy profiles} and
$u_i\colon S\rightarrow\mathbb{R}$ is
the \emph{utility} function of player $i \in [n]$.

We adopt the standard game-theoretic notation and denote by $S_{-i}$ the set $S_{-i}=S_1\times\ldots\times S_{i-1}\times S_{i+1}\times\ldots S_n$ and, for $\x=(x_1, \dots, x_{i-1}, x_{i+1}, \dots, x_n)\in S_{-i}$ and $y \in S_i$, we denote by $(\x,y)$ the strategy profile $(x_1, \dots, x_{i-1}, y, x_{i+1}, \dots, x_n) \in S$.
Also, for a subset $L\subseteq[n]$ and strategy profile 
$\x$, we denote by $\x_L$ the components of $\x$ 
corresponding to players in $L$.

\paragraph{Potential games.}
We say that function $\Phi \colon S\rightarrow\mathbb{R}$ is an \emph{exact potential} (or simply a \emph{potential}) for game $\G$ if for every $i\in[n]$ and every $\x \in S_{-i}$
$$u_i(\x,y)-u_i(\x,z)=\Phi(\x,z)-\Phi(\x,y)$$
for all $y,z\in S_i$.
A game $\G$ that admits a potential is called a \emph{potential game}~\cite{MS96}.

The following is an important characterization of potential games in terms of the utilities. 
A \emph{circuit} $\omega=\langle s_0,\ldots,s_\ell\rangle $ of length $\ell$ is a sequence 
of strategy profiles such that $s_0=s_\ell$, $s_h\ne s_k$ for $1\leq h\ne k\leq\ell$ and, for $k=1,\ldots,\ell$, 
there exists player $i_k$ such that $s_{k-1}$ and $s_{k}$ differ only for player $i_k$.
For such a circuit $\omega$ we define the \emph{utility improvement} $I(\omega)$ as
$$
I(\omega)=\sum_{k=1}^\ell \left[ u_{i_k}(s_k)-u_{i_k}(s_{k-1})\right].
$$
The following theorem then holds.
\begin{theorem}[{\cite[Thm~2.8]{MS96}}]
\label{MS96}
A game $\G$ is a potential game if and only if $I(\omega)=0$ for all circuits of length $4$.
\end{theorem}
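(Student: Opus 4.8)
The plan is to prove both implications of the statement. The forward direction (``potential game $\Rightarrow$ all length-$4$ circuits have $I=0$'') is a one-line telescoping argument; the reverse direction splits into a combinatorial reduction showing that length-$4$ circuits already control \emph{all} circuits, followed by the standard path-integral construction of a potential.

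\emph{Forward direction.} Assume $\G$ has a potential $\Phi$ and let $\omega=\langle s_0,\dots,s_\ell\rangle$ be any circuit. Writing $s_{k-1}=(\x,z)$ and $s_k=(\x,y)$ for the unique player $i_k$ in which these two profiles differ, the defining identity of $\Phi$ gives $u_{i_k}(s_k)-u_{i_k}(s_{k-1})=\Phi(s_{k-1})-\Phi(s_k)$. Summing over $k=1,\dots,\ell$ the right-hand side telescopes to $\Phi(s_0)-\Phi(s_\ell)=0$ because $s_0=s_\ell$, so $I(\omega)=0$; in particular this holds for circuits of length $4$.

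\emph{Reverse direction, from all circuits to a potential.} Suppose first that $I(\omega)=0$ for \emph{every} circuit. Form the configuration graph $\Gamma$ on vertex set $S$ joining two profiles exactly when they differ in one coordinate; $\Gamma$ is a Cartesian product of cliques, hence connected. Orient each edge $s\to s'$ and give it weight $u_i(s')-u_i(s)$, with $i$ the differing player (reversing the orientation negates the weight), and extend additively along walks, so that $I(\omega)$ is the total weight around $\omega$. Every closed walk decomposes into simple circuits together with pairs of back-and-forth traversals of a single edge (which contribute $0$), so $I$ vanishes on all closed walks. Fixing a base profile $s^\ast$ and setting $\Phi(s):=-I(\gamma)$ for \emph{any} walk $\gamma$ from $s^\ast$ to $s$ is therefore well defined, and appending a single edge $s\to s'$ yields $\Phi(s')-\Phi(s)=-(u_i(s')-u_i(s))$, which is exactly the identity $u_i(\x,y)-u_i(\x,z)=\Phi(\x,z)-\Phi(\x,y)$. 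Thus it only remains to upgrade the hypothesis on length-$4$ circuits to all circuits.

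\emph{Reverse direction, length $4$ suffices.} I would prove by induction on $\ell$ that $I=0$ on all length-$4$ circuits forces $I(\omega)=0$ for every circuit $\omega$ of length $\ell$. For $\ell\le 3$ the edge constraints force every move of the circuit to be made by the \emph{same} player, so $I(\omega)$ telescopes to $0$ with no hypothesis needed; for $\ell=4$ a circuit is either of this single-player type (again automatically $0$) or is the $2\times 2$ square spanned by two players, which is precisely the hypothesis. For $\ell\ge 5$, inspect a consecutive pair of moves $i_k,i_{k+1}$. If $i_k=i_{k+1}$, the two moves merge into one (or cancel, if $s_{k+1}=s_{k-1}$), giving a circuit of length $\ell-1$ or $\ell-2$ with the same value of $I$, and induction finishes. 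If $i_k\ne i_{k+1}$ for \emph{every} $k$, then $s_{k-1},s_k,s_{k+1}$ together with the fourth corner $s_k'$ of the square on players $i_k,i_{k+1}$ form a length-$4$ circuit, so the hypothesis says its $I$ is $0$; equivalently, replacing $s_k$ by $s_k'$ swaps the order of the two moves without changing $I(\omega)$. Since $\ell\ge 5$ some player is moved at least twice (every moved coordinate is moved at least twice), so I can shuffle one such occurrence leftward by these order-swaps --- each swap being legal because no intervening move is by that player --- until it sits next to the other occurrence, and then apply the previous case.

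\emph{Main obstacle.} The delicate point is entirely the bookkeeping in the last step: each merge or order-swap must be verified to produce a \emph{genuine} circuit (intermediate profiles still pairwise distinct), the degenerate sub-cases (a swapped corner coinciding with another vertex, or $s_{k+1}=s_{k-1}$) must be isolated and treated separately, and the scheduling claim ``swaps can always bring two equal players adjacent'' must be made precise, including the cyclic re-indexing needed when the relevant moves straddle the seam $s_\ell=s_0$. One may instead phrase the whole reduction cohomologically --- the cycle space of $\Gamma$ is generated by within-factor triangles and two-factor squares, and $I$ viewed as a $1$-cochain vanishes on the former automatically and on the latter by hypothesis --- but establishing that generating set requires essentially the same combinatorial work.
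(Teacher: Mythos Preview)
The paper does not prove this statement; it is simply quoted from Monderer--Shapley \cite{MS96} and used as a black box, so there is no in-paper argument to compare against.

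Your proposal is a correct outline of the standard proof from that reference: telescoping for the forward direction, path-integration for the construction of $\Phi$ once $I$ vanishes on all closed walks, and the merge/swap induction to reduce arbitrary-length circuits to length~$4$. You have also correctly located the only genuine technicality --- that a swap $s_k\mapsto s_k'$ may collide with another vertex of the circuit, so the reduced object need not be a simple circuit in the paper's sense --- and the cleanest fix is the one you already suggest in your last paragraph: drop the distinctness requirement and run the induction on closed walks instead, after first observing that any length-$4$ closed walk which is not a simple circuit decomposes into shorter closed walks on which $I$ vanishes automatically by your single-player telescoping.
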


\paragraph{Local interaction games.}
In a \emph{local interaction game} $\G$, each player $i$, with strategy set $S_i$, is represented by a vertex of a graph $G=(V,E)$ (called \emph{social graph}).
For every edge~$e=(i,j) \in E$ there is a two-players game $\G_e$ with potential function $\Phi_e$ in which the set of strategies of endpoints are exactly $S_i$ and $S_j$.
We denote with $u_i^e$ the utility function of player $i$ in the game $\G_e$.
Given a strategy profile $\x$, the utility function of player $i$ in the local interaction game $\G$ sets
$$
 u_i(\x) = \sum_{e=(i,j)} u_i^e(x_i,x_j).
$$
It is easy to check that the function $\Phi = \sum_e \Phi_e$ is a potential function for the local interaction game $\G$.
Note that we assume that the graph $G$ is unweighted. However, it is immediate to see that weights do not give any modeling power.

\paragraph{Logit choice function.}
We study the interaction of $n$ players of a strategic game
$\G$ that update their strategy according to the
\emph{logit choice function}~\cite{McFadden74,blumeGEB93,wolpert}
described as follows:
from profile $\x \in S$ player $i \in [n]$ updates her strategy
to $y \in S_i$ with probability
\begin{equation}\label{eq:logitrule}
\sigma_{i}(y \mid \x) =\frac{e^{\beta u_i(\x_{-i}, y)}}{\sum_{z \in S_i} e^{\beta u_i(\x_{-i},z)}}.
\end{equation}
In other words, the logit choice function leans towards strategies
promising higher utility. The parameter $\beta\geq 0$ is a measure of
how much the utility influences the choice of the player.

\paragraph{All-logit.}
In this paper we consider the \emph{all-logit} dynamics,
by which {\em all} players {\em concurrently} update their strategy
using the logit choice function.
Most of the previous works have focused on
dynamics where at each step {\em one} player is chosen
uniformly at random and
she updates her strategy by following the logit choice function.
We call those dynamics \emph{one-logit},
to distinguish it from the \emph{all-logit}.

The all-logit dynamics induce a Markov chain over the set of strategy profiles whose transition probability $P(\x,\y)$
from profile $\x = (x_1, \dots, x_n)$ to profile $\y = (y_1, \dots, y_n)$ is
\begin{equation}\label{eq:alllogitchain}
P(\x,\y) = \prod_{i=1}^n \sigma_i(y_i \mid \x)
= \frac{e^{\beta\sum_{i=1}^n u_i(\x_{-i},y_i)}} {\prod_{i=1}^n\sum_{z\in S_i} e^{\beta u_i(\x_{-i},z)}}.
\end{equation}
Sometimes it is useful to write the transition probability from $\x$ to $\y$ in terms of the
\emph{cumulative utility} of $\x$ with respect to $\y$ defined as
$U(\x,\y)=\sum_i u_i(\x_{-i},y_i)$.
Indeed, by observing that
$$
 \prod_{i=1}^{n} \sum_{z \in S_i} e^{\beta u_i(\x_{-i},z)} = \sum_{\z \in S} \prod_{i=1}^{n} e^{\beta u_i(\x_{-i},z_i)},
$$
we can rewrite \eqref{eq:alllogitchain} as
\begin{equation}\label{eq:transition}
 P(\x,\y) = \frac{e^{\beta U(\x,\y)}}{T(\x)},
\end{equation}
where $T(\x)=\sum_{\z \in S} e^{\beta U(\x,\z)}$.
For a potential game $\mathcal{G}$ with potential $\Phi$,
we define for each pair of profiles $(\x, \y)$ the quantity
\begin{equation}
 \label{eq:kappa}
 K(\x,\y) = \sum_i\Phi(\x_{-i},y_i) - (n-2) \Phi(\x) = 2 \Phi(\x) + \sum_i \left(\Phi(\x_{-i},y_i) - \Phi(\x)\right).
\end{equation}
Simple algebraic manipulations show that, for a potential game,
we can rewrite the transition probabilities in~\eqref{eq:transition} as
$$
P(\x,\y) = \frac{e^{-\beta K(\x,\y)}}{\gamma_A(\x)},
$$
where $\gamma_A(\x)=\sum_{\z \in S} e^{-\beta K(\x,\z)}$.

It is easy to see that a Markov chain with transition matrix \eqref{eq:alllogitchain} is ergodic. Indeed, for example, ergodicity follows from the fact that all entries of the transition matrix are strictly positive.

\paragraph{Reversibility, Observables, Mixing time.}
In this work we focus on three features of the all-logit dynamics, that we formally define here.

Let $\calM$ be a Markov chain with transition matrix $P$ and state set $S$.
$\cal M$ is \emph{reversible} with respect to a distribution $\pi$
if, for every pair of states $x,y\in S$, the following
{\em detailed balance condition} holds
\begin{equation}
\label{eq:rev_det_bal}
 \pi(x)P(x,y) = \pi(y) P(y,x).
\end{equation}
It is easy to see that if $\calM$ is reversible with respect to $\pi$ then $\pi$ is also stationary.

An {\em observable} $O$ is a function $O \colon S\rightarrow {\mathbb R}$, i.e. it is a function that assigns a value to each strategy profile of the game.

An ergodic Markov chain has a unique stationary distribution $\pi$ and for every starting profile $\x$ the distribution $P^t(\x,\cdot)$ of the chain at time $t$ converges to $\pi$ as $t$ goes to infinity. The \emph{mixing time} is a measure of how long it takes to get close to the stationary distribution from the \emph{worst-case} starting profile, and it is defined as
$$
\tm(\varepsilon) = \inf \left\{t \in \mathbb{N} \colon \tv{P^t(\x, \cdot) - \pi} \leqslant \varepsilon \mbox{ for all } \x \in S \right\},
$$
where $\tv{P^t(\x, \cdot) - \pi} = \frac{1}{2} \sum_{\y \in S} |P^t(\x, \y) - \pi(\y)|$ is the \emph{total variation distance}. We will usually use $\tm$ for $\tm(1/4)$. We refer the reader to~\cite{lpwAMS08} for a more detailed description of notational conventions about Markov chains and mixing times.

\section{Warm-up: two-player games}
\label{sec::examples}
\newsavebox\tpbox
\begin{lrbox}{\tpbox}
\begin{game}{2}{2}
 & $-$ & $+$ \\
$-$ & $a,a$ & $c,d$ \\
$+$ & $d,c$ & $b,b$
\end{game}
\end{lrbox}

In this section we compare the behavior of the one- and the all-logit dynamics for two simple two-player potential games (thus two simple local information games): a \emph{coordination game} and the \emph{Prisoner's Dilemma}. The analysis of these games highlights that the stationary distribution of the two dynamics can significantly differ. However, it turns out that for both games the Markov chain induced by the all-logit is reversible, just as for the one-logit dynamics. More surprisingly, we see that the expected number of players taking a certain action in each one of these games is exactly the same regardless whether the expectation is taken according the stationary distribution of the all-logit or of the one-logit. Finally, we observe that the mixing time of the all-logit dynamics is asymptotically the same than the mixing time of the one-logit.
Next sections will show that these results are not accidental.

\paragraph{Two-player coordination games.}
These are games in which the players have an advantage in selecting the same strategy.
They are often used to model the spread of a new technology \cite{youngTR00}:
two players have to decide whether to adopt or not a new technology.
Each player prefers to adopt the same technology as the other player.
We denote by $-1$ the strategy of adopting the new technology and by $+1$ the strategy of
adopting the {old} technology.
The game is formally described by the following  payoff matrix
\begin{equation}
\label{eq:coorddef}
\usebox{\tpbox}
\end{equation}
We assume that $a>d$ and $b>c$ (meaning that players prefer to coordinate)
and that $a-d = b-c = \Delta$ (meaning that there is not a risk dominant strategy \cite{flMIT98}).
It is easy to see that this game is a potential game.
It is well known that the stationary distribution of the
one-logit of a potential game is the Gibbs distribution, that assigns to $\x\in S$ probability
$e^{-\beta\Phi(\x)}/Z$, where $Z=\sum_{\x\in S} e^{-\beta\Phi(\x)}$
is the {\em partition function}.

The transition matrix of the Markov chain induced by the all-logit dynamics is
$$
P =
\left(
\begin{array}{c|cccc}
   & -- & -+ & +- & ++\\
\hline
-- & (1-p)^2 & p(1-p) & p(1-p) & p^2\\
-+ & (1-p)p & p^2 & (1-p)^2 & (1-p)p\\
+- & p(1-p) & (1-p)^2 & p^2 & p(1-p)\\
++ & p^2 & p(1-p) & p(1-p) & (1-p)^2
\end{array}
\right)
$$
where $p = 1 / (1+e^{\Delta\beta})$. Observe that this transition matrix
is doubly-stochastic, that implies that the stationary distribution of the all-logit is uniform (and hence very different from the one-logit case).
However, it is easy to check that the chain is reversible and the mixing time is $\Theta\left(e^{\Delta \beta}\right)$ (as in the one-logit case).
Moreover, the expected number of players adopting the new strategy at stationarity is $+1$, both when considering the one- and the all-logit dynamics.

\paragraph{Prisoner's Dilemma.}
The Prisoner's Dilemma game is described by the payoff matrix given in \eqref{eq:coorddef}, where with $-1$ we denote the strategy \texttt{Confess} and with $+1$ the strategy \texttt{Defect}. Moreover, payoffs satisfy the following conditions: (i) $a > d$ (so that $--$ is a Nash equilibrium); (ii) $b < c$ (so that $++$ is not a Nash equilibrium); (iii) $2a < c+d < 2b$ (so that $++$ is the social optimum and $--$ is the worst social profile). It is easy to check that the game is a potential game.

The transition matrix of the Markov chain induced by the all-logit dynamics is
$$
P =
\left(
\begin{array}{c|cccc}
   & -- & -+ & +- & ++\\
\hline
-- & (1-p)^2 & p(1-p) & p(1-p) & p^2\\
-+ & (1-p)(1-q) & p(1-q) & q(1-p) & pq\\
+1 & (1-p)(1-q) & q(1-p) & p(1-q) & pq\\
++ & (1-q)^2 & q(1-q) & q(1-q) & q^2
\end{array}
\right)
$$
where we let $p = 1 / (1+e^{(a-d)\beta})$ be the probability a player does not confess given the other player is currently confessing
and $q = 1/(1+ e^{(c-b) \beta})$ be the probability a player does not confess given the other player is currently not confessing.
Note that both $p$ and $q$ go to $0$ as $\beta$ goes to infinity.

It is easy to check that the transition matrix is reversible (as for the one-logit). The stationary distribution is
$$
\pi(--)=\frac{(1-q)^2}{(1+p-q)^2} \qquad \pi(++)=\frac{p^2}{(1+p-q)^2} \qquad \pi(-+)=\pi(+-)=\frac{p(1-q)}{(1+p-q)^2}.
$$
Moreover, we can see that that the mixing time is upper bounded by a constant independent of $\beta$ (as for the one-logit).
You may also check that the expected number of confessing prisoners is exactly the same in the stationary distribution of the one- and of the all-logit.

\section{Reversibility and stationary distribution}\label{sec::reversibility}
Reversibility is an important property of Markov chains and,
in general, of stochastic processes.
Roughly speaking, for a reversible Markov chain the stationary frequency of transitions from a state $x$ to a state $y$ is equal to the stationary frequency of transitions from $y$ to $x$.
It is easy to see that the one-logit for a game $\mathcal{G}$ are reversible if and only if $\mathcal{G}$ is a potential game.
This does not hold for the all-logit.
Indeed, we will prove that the class of games for which the all-logit are reversible is exactly the class of local interaction games.

\subsection{Reversibility criteria}
As previously stated, a Markov chain $\calM$ is reversible if 
there exists a distribution $\pi$ such that
the detailed balance condition~\eqref{eq:rev_det_bal} is satisfied.
The Kolmogorov reversibility criterion allows us to establish the
reversibility of a process directly from the transition probabilities.
Before stating the criterion, we introduce the following notation.
A \emph{directed path} $\Gamma$ from state $x\in S$ to state $y\in S$
is a sequence of states $\langle x_0, x_1, \ldots, x_\ell\rangle$
such that $x_0=x$ and $x_\ell =y$.
The probability $\Prob{}{\Gamma}$ of path $\Gamma$  is defined
as $\Prob{}{\Gamma} = \prod_{j= 1}^\ell P(x_{j-1},x_j)$.
The \emph{inverse of path}
$\Gamma=\langle x_0,x_1,\ldots,x_\ell\rangle$ is the path
$\Gamma^{-1}=\langle x_\ell, x_{\ell-1},\ldots,x_0\rangle$.
Finally, a cycle $C$ is simply a path from a state $x$ to itself.
We are now ready to state Kolmogorov's reversibility criterion
(see, for example, \cite{Kelly79}).

\begin{theorem}[Kolmogorov's Reversibility Criterion]
An irreducible Markov chain $\calM$ with state space $S$ and transition matrix $P$
is reversible if and only if for every cycle $C$ it holds that
$$
\Prob{}{C}=\Prob{}{C^{-1}}.
$$
\end{theorem}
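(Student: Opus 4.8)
I would treat the two implications separately, invoking only the detailed balance condition~\eqref{eq:rev_det_bal} and irreducibility (recall that irreducibility forces any stationary $\pi$ to be strictly positive on $S$). The ``only if'' direction is a telescoping argument: given a cycle $C=\langle x_0,x_1,\dots,x_\ell\rangle$ with $x_\ell=x_0$, multiply the detailed balance identities $\pi(x_{j-1})P(x_{j-1},x_j)=\pi(x_j)P(x_j,x_{j-1})$ over $j=1,\dots,\ell$; the factor $\prod_{j=1}^\ell \pi(x_{j-1})=\pi(x_0)\pi(x_1)\cdots\pi(x_{\ell-1})$ on one side equals $\prod_{j=1}^\ell \pi(x_j)=\pi(x_1)\cdots\pi(x_{\ell-1})\pi(x_\ell)$ on the other since $x_0=x_\ell$, so cancelling these nonzero factors leaves exactly $\Prob{}{C}=\Prob{}{C^{-1}}$ (this works even when $\Prob{}{C}=0$).

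For the ``if'' direction the idea is to reconstruct a candidate stationary distribution from the transition probabilities alone. First I would record an auxiliary observation: the cycle condition together with irreducibility forces $P(x,y)>0$ if and only if $P(y,x)>0$ --- indeed, if $P(x,y)>0$, pick by irreducibility a directed path $\Gamma$ from $y$ to $x$ with $\Prob{}{\Gamma}>0$, prepend the step $x\to y$ to obtain a cycle $C$ with $\Prob{}{C}>0$, and note that $\Prob{}{C^{-1}}=\Prob{}{C}>0$ is possible only if $P(y,x)>0$. Then, fixing a reference state $x_0$, for every $y$ I would choose a positive-probability directed path $\Gamma_y$ from $x_0$ to $y$ (it exists by irreducibility) and set $\tilde\pi(y)=\Prob{}{\Gamma_y}/\Prob{}{\Gamma_y^{-1}}$, the denominator being positive by the auxiliary observation. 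Two checks remain: (i) $\tilde\pi(y)$ is independent of the chosen path --- concatenating one such path with the reverse of another is a cycle, and the cycle condition is exactly what makes the ratio invariant; and (ii) $\tilde\pi$ satisfies detailed balance --- when $P(x,y)>0$, extending $\Gamma_x$ by the step $x\to y$ is an admissible path to $y$, which at once gives $\tilde\pi(y)/\tilde\pi(x)=P(x,y)/P(y,x)$, while when $P(x,y)=0$ detailed balance is trivial by the auxiliary observation. Normalizing $\tilde\pi$ over the (finite) state space then produces the distribution $\pi$ witnessing reversibility.

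I expect the delicate point to be the well-definedness of $\tilde\pi$ in the ``if'' direction. One must make sure that the chosen paths, \emph{and their reverses}, all carry strictly positive probability --- this is precisely where the observation ``$P(x,y)>0$ iff $P(y,x)>0$'' is needed, and it is easy to overlook --- and one must apply the cycle hypothesis to the concatenation of two paths rather than to a single path. Once this positivity bookkeeping is in place, both the path-independence of $\tilde\pi$ and the detailed balance verification reduce to one-line manipulations of products of transition probabilities.
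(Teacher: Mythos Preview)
The paper does not actually prove Kolmogorov's criterion; it merely states the theorem and refers the reader to Kelly~\cite{Kelly79}, so there is no ``paper's own proof'' to compare against. Your argument is correct and is essentially the standard textbook proof; in fact, the construction you use in the ``if'' direction (fix a base state, define $\tilde\pi(y)$ as the path/inverse-path ratio, then verify detailed balance by appending one edge) is exactly the manoeuvre the paper later carries out in its proof of Lemma~\ref{lemma:rev_eq}.
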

The following lemma will be very useful for proving reversibility conditions for the all-logit dynamics and for stating a closed expression for its stationary distribution.
\begin{lemma}
\label{lemma:rev_eq}
Let $\calM$ be an irreducible Markov chain with transition probability $P$ and state space $S$.
$\calM$ is reversible if and only if for every pair of states $x,y\in S$,
there exists a constant $c_{x,y}$ such that for all paths $\Gamma$ from $x$ to $y$,
it holds that
$$
\frac{\Prob{}{\Gamma}}{\Prob{}{\Gamma^{-1}}} = c_{x,y}.
$$
\end{lemma}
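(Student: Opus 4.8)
The plan is to obtain the lemma directly from Kolmogorov's reversibility criterion, which has just been recalled, rather than building a stationary distribution from the ground up. Concretely, for the ``only if'' direction I would show that reversibility forces the path ratio to equal $\pi(y)/\pi(x)$, and for the ``if'' direction I would show that the path-ratio hypothesis implies Kolmogorov's cycle condition. Throughout I use that an irreducible chain has a stationary distribution of full support, so $\pi(x)>0$ for every $x\in S$, and the elementary remark that $\Prob{}{\Gamma}$ and $\Prob{}{\Gamma^{-1}}$ vanish together (an edge has probability zero iff, under any reversible $\pi$, its reverse does; more simply, one restricts attention to positive-probability paths, which exist between any two states by irreducibility, and the ratio is only meaningful there).

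For necessity, assume $\calM$ is reversible with respect to $\pi$. Given a path $\Gamma=\langle x_0,\dots,x_\ell\rangle$ from $x=x_0$ to $y=x_\ell$, I would rewrite each transition using the detailed balance condition~\eqref{eq:rev_det_bal} in the form $P(x_{j-1},x_j)=\frac{\pi(x_j)}{\pi(x_{j-1})}P(x_j,x_{j-1})$, substitute into $\Prob{}{\Gamma}=\prod_{j=1}^\ell P(x_{j-1},x_j)$, and let the $\pi$-factors telescope. This gives $\Prob{}{\Gamma}=\frac{\pi(y)}{\pi(x)}\prod_{j=1}^\ell P(x_j,x_{j-1})=\frac{\pi(y)}{\pi(x)}\,\Prob{}{\Gamma^{-1}}$, since the last product is by definition the probability of the reversed path. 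Hence $c_{x,y}=\pi(y)/\pi(x)$ is a constant independent of $\Gamma$, as required.

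For sufficiency, assume the path-ratio condition holds. By Kolmogorov's criterion it suffices to prove $\Prob{}{C}=\Prob{}{C^{-1}}$ for every cycle $C$. A cycle through a state $x$ is in particular a path from $x$ to $x$, and so is $C^{-1}$; applying the hypothesis to the pair $(x,x)$ yields $\Prob{}{C}=c_{x,x}\Prob{}{C^{-1}}$ and, because $(C^{-1})^{-1}=C$, also $\Prob{}{C^{-1}}=c_{x,x}\Prob{}{C}$. Multiplying these identities gives $c_{x,x}^2=1$ on positive-probability cycles, so $c_{x,x}=1$ and $\Prob{}{C}=\Prob{}{C^{-1}}$; on zero-probability cycles both sides vanish. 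Kolmogorov's criterion then gives reversibility. (Alternatively, one can bypass Kolmogorov and construct $\pi$ explicitly by fixing a reference state $x_0$, setting $\pi(x)\propto c_{x_0,x}$, and checking detailed balance using the multiplicativity $c_{x_0,x}\,c_{x,y}=c_{x_0,y}$ obtained by concatenating paths; this is the route that later produces the closed formula for the stationary distribution of the all-logit.)

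The lemma is essentially a reformulation of Kolmogorov's criterion, so there is no real obstacle of substance; the only care needed is bookkeeping — ensuring the ratio $\Prob{}{\Gamma}/\Prob{}{\Gamma^{-1}}$ is well defined (via irreducibility and full support of $\pi$) and verifying that the constant $\pi(y)/\pi(x)$ produced in the necessity direction genuinely does not depend on the chosen path, which is exactly the telescoping computation above. This is convenient for the all-logit application, where every transition probability in~\eqref{eq:alllogitchain} is strictly positive, so every path is admissible and the hypothesis of the lemma can be checked on arbitrary paths.
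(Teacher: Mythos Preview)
Your proof is correct but proceeds differently from the paper in both directions. For necessity, the paper never touches the stationary distribution: it takes two arbitrary paths $\Gamma_1,\Gamma_2$ from $x$ to $y$, forms the cycle $C_1=\Gamma_1\circ\Gamma_2^{-1}$ and its inverse $C_2=\Gamma_2\circ\Gamma_1^{-1}$, and applies Kolmogorov's criterion to obtain $\Prob{}{\Gamma_1}\Prob{}{\Gamma_2^{-1}}=\Prob{}{\Gamma_2}\Prob{}{\Gamma_1^{-1}}$, hence equal ratios. Your telescoping via detailed balance is more direct and has the bonus of identifying $c_{x,y}=\pi(y)/\pi(x)$ explicitly. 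For sufficiency the roles are reversed: you reduce to Kolmogorov via the observation that $c_{x,x}^2=1$, whereas the paper constructs $\tilde\pi(x)\propto c_{z,x}$ for a fixed reference state $z$ and verifies detailed balance by comparing a path $z\to x$ with its one-step extension through $y$. Your argument is slicker; the paper's explicit construction---which you mention parenthetically as an alternative---is what is actually exploited downstream to write the closed form for the all-logit stationary distribution.
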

\begin{proof}
Fix $x,y\in S$ and consider two paths,
$\Gamma_1$ and $\Gamma_2$, from $x$ to $y$. Let $C_1$ and $C_2$ be the cycles
$C_1=\Gamma_1\circ\Gamma_2^{-1}$ and
$C_2=\Gamma_2\circ\Gamma_1^{-1}$, where $\circ$ denotes the concatenation of paths.
If $\calM$ is reversible then, by the Kolmogorov Reversibility Criterion,
$\Prob{}{C_1}=\Prob{}{C_2}.$
On the other hand,
$$
\Prob{}{C_1}=\Prob{}{\Gamma_1}\cdot\Prob{}{\Gamma_2^{-1}}
\quad\text{and}\quad
\Prob{}{C_2}=\Prob{}{\Gamma_2}\cdot\Prob{}{\Gamma_1^{-1}}.
$$
Thus
$$
\frac{\Prob{}{\Gamma_1}}{\Prob{}{\Gamma_1^{-1}}}=
\frac{\Prob{}{\Gamma_2}}{\Prob{}{\Gamma_2^{-1}}}.
$$
For the other direction, fix $z\in S$ and, for all $x\in S$,
set $\tilde\pi(x)=c_{z,x}/Z$, where $Z=\sum_x c_{z,x}$ is the normalizing constant.
Now consider any two states $x,y\in S$ of $\calM$,
let $\Gamma_1$ be any path from $z$ to $x$ and
and set $\Gamma_2=\Gamma_1\circ\langle x,y\rangle$
(that is, $\Gamma_2$ is $\Gamma_1$ concatenated with the edge~$(x,y)$).
We have that
\begin{align*}
    \frac{\tilde\pi(x)}{\tilde\pi(y)} & = \frac{c_{z,x}}{c_{z,y}}\\
                                      & = \frac{\Prob{}{\Gamma_1}}{\Prob{}{\Gamma_1^{-1}}}\cdot
                                        \frac{\Prob{}{\Gamma_2}}{\Prob{}{\Gamma_2^{-1}}} \\
                                      & = \frac{\Prob{}{\Gamma_1}}{\Prob{}{\Gamma_1^{-1}}}\cdot
                         \frac{\Prob{}{\Gamma_1^{-1}}\cdot P(y,x)}{\Prob{}{\Gamma_1}\cdot P(x,y)}\\
                                      & = \frac{P(y,x)}{P(x,y)}
\end{align*}
and therefore $\calM$ is reversible with respect to $\tilde\pi$.
\end{proof}

\subsection{All-logit reversibility implies potential games}
In this section we prove that if the all-logit for a game $\mathcal{G}$ are
reversible then $\mathcal{G}$ is a potential game.

The following lemma shows a condition on the cumulative utility
of a game $\mathcal{G}$ that is necessary and sufficient for the reversibility
of the all-logit of $\mathcal{G}$.
\begin{lemma}
\label{lemma:rev_propU}
The all-logit for game $\mathcal{G}$ are reversible
if and only if the following property holds for every $\x,\y,\z \in S$:
\begin{equation}\label{eq:U_rev}
U(\x,\y) - U(\y,\x) = \Big(U(\x, \z) + U(\z,\y)\Big) - \Big(U(\y, \z) + U(\z,\x)\Big).
\end{equation}
\end{lemma}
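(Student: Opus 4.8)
The plan is to use the transition probability formula \eqref{eq:transition}, namely $P(\x,\y) = e^{\beta U(\x,\y)}/T(\x)$, together with Lemma~\ref{lemma:rev_eq}, which says reversibility is equivalent to the ratio $\Prob{}{\Gamma}/\Prob{}{\Gamma^{-1}}$ being the same for all paths $\Gamma$ from $x$ to $y$. The key observation is that for a path $\Gamma = \langle x_0, \dots, x_\ell \rangle$, the ratio $\Prob{}{\Gamma}/\Prob{}{\Gamma^{-1}} = \prod_j P(x_{j-1},x_j)/P(x_j,x_{j-1})$ has the normalizing factors $T(\cdot)$ canceling in a nice telescoping pattern: each state $x_j$ on the interior of the path contributes a factor $T(x_j)$ in both numerator and denominator, so $\Prob{}{\Gamma}/\Prob{}{\Gamma^{-1}} = \exp\big(\beta \sum_j [U(x_{j-1},x_j) - U(x_j,x_{j-1})]\big)$ times $T(x_\ell)/T(x_0)$, which depends only on the endpoints. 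Hence, after taking logarithms and dividing by $\beta$, reversibility is equivalent to: the quantity $A(\x,\y) := U(\x,\y) - U(\y,\x)$, viewed as a ``flow'' along edges of the transition graph, is such that its sum around every cycle is zero — equivalently, it is a potential difference.

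First I would make the above cancellation precise, reducing the Kolmogorov criterion (via Lemma~\ref{lemma:rev_eq}) to the statement that $\sum_{k=1}^\ell A(x_{k-1}, x_k) = 0$ for every cycle $\langle x_0, \dots, x_\ell = x_0\rangle$. Since the all-logit chain is ergodic and every profile is reachable from every other in one step, it suffices to restrict attention to cycles of length $3$: $A(\x,\y) + A(\y,\z) + A(\z,\x) = 0$ for all $\x,\y,\z$. Indeed, longer cycles decompose into triangles (triangulate the cycle through a fixed vertex), and the length-$2$ case $A(\x,\y)+A(\y,\x)=0$ is automatic from the antisymmetry of $A$. Rearranging $A(\x,\y) + A(\y,\z) + A(\z,\x) = 0$ and expanding $A$ gives exactly \eqref{eq:U_rev}: $U(\x,\y) - U(\y,\x) = -A(\y,\z) - A(\z,\x) = \big(U(\z,\y) - U(\y,\z)\big) + \big(U(\x,\z) - U(\z,\x)\big)$, which is the claimed identity after regrouping terms.

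For the converse direction, I would assume \eqref{eq:U_rev} holds and show it forces the triangle condition $\sum A = 0$ on every $3$-cycle, hence on every cycle by the triangulation argument, hence (by the telescoping computation and Lemma~\ref{lemma:rev_eq}) reversibility. This direction is essentially just reading the same algebra backwards: \eqref{eq:U_rev} is literally the antisymmetrization of the triangle identity. One clean way to organize both directions simultaneously is to define, with respect to a fixed reference profile $\z_0$, the function $\tilde\pi(\x) \propto e^{\beta(U(\z_0,\x) - U(\x,\z_0))} / T(\x)$ — wait, more carefully $\tilde\pi(\x) \propto T(\x) \cdot e^{\beta \sum \text{something}}$; I would follow the construction in the proof of Lemma~\ref{lemma:rev_eq} with $c_{\z_0,\x} = \Prob{}{\Gamma}/\Prob{}{\Gamma^{-1}}$ for an arbitrary path $\Gamma$ from $\z_0$ to $\x$, verify this is well-defined precisely when \eqref{eq:U_rev} holds, and conclude.

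The main obstacle I anticipate is purely bookkeeping: keeping straight which factors of $T(\cdot)$ appear where in the path ratio and confirming the interior-vertex cancellation is exact (the endpoint factors $T(x_0), T(x_\ell)$ survive but are absorbed into the constant $c_{x,y}$, so they do not affect the criterion). The other point requiring a small argument is the reduction from arbitrary cycles to triangles; this is standard (any cycle through vertices $v_0, \dots, v_{\ell}$ can be written as a telescoping sum of triangles $v_0 v_k v_{k+1}$ using antisymmetry of $A$), but I would state it explicitly since the transition graph is complete, so all the needed triangles are genuine cycles in the chain. Everything else is routine algebra.
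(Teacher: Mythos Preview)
Your proposal is correct and follows essentially the same approach as the paper. Both arguments rest on the transition formula $P(\x,\y)=e^{\beta U(\x,\y)}/T(\x)$ together with Lemma~\ref{lemma:rev_eq}, and both reduce reversibility to the ``triangle'' identity $A(\x,\y)+A(\y,\z)+A(\z,\x)=0$ for $A(\x,\y)=U(\x,\y)-U(\y,\x)$, which is a rearrangement of~\eqref{eq:U_rev}. The paper is slightly more economical: instead of invoking the Kolmogorov cycle criterion and then triangulating general cycles, it applies Lemma~\ref{lemma:rev_eq} directly to the two paths $\Gamma_1=\langle\x,\y\rangle$ and $\Gamma_2=\langle\x,\z,\y\rangle$, which yields~\eqref{eq:U_rev} in one line; for the converse it writes down the explicit stationary measure $\tilde\pi(\x)\propto P(\z,\x)/P(\x,\z)$ and checks detailed balance. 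Your telescoping/triangulation packaging is a valid and natural alternative organization of the same computation.
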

\begin{proof}
To prove the only if part,
pick any three $\x,\y,\z\in S$ and consider paths
$\Gamma_1=\langle \x,\y\rangle$
$\Gamma_2=\langle\x,\z,\y\rangle$.
From Lemma~\ref{lemma:rev_eq} we have that reversibility implies
$$
\frac{\Prob{}{\Gamma_1}}{\Prob{}{\Gamma_1^{-1}}}=
\frac{\Prob{}{\Gamma_2}}{\Prob{}{\Gamma_2^{-1}}}
$$
whence
$$
 \frac{e^{\beta U(\x,\y)}}{T(\x)} \frac{T(\y)}{e^{\beta U(\y,\x)}} = \frac{e^{\beta U(\x,\z)}}{T(\x)} \frac{e^{\beta U(\z,\y)}}{T(\z)} \frac{T(\y)}{e^{\beta U(\y,\z)}}  \frac{T(\z)}{e^{\beta U(\z,\x)}}.
$$
which in turn implies~\eqref{eq:U_rev}.

As for the if part,
let us fix state $\z\in S$ and define
$\tilde\pi(\x) = \frac{P(\z,\x)}{Z\cdot P(\x,\z)}$,
where $Z$ is the normalizing constant.
For any $\x,\y\in S$, we have
$$
 \frac{\tilde\pi(\x)}{\tilde\pi(\y)} = \frac{P(\z,\x)}{P(\x,\z)} \cdot \frac{P(\y,\z)}{P(\z,\y)} = \frac{e^{\beta U(\z,\x)}}{e^{\beta U(\x,\z)}} \cdot \frac{e^{\beta U(\y,\z)}}{e^{\beta U(\z,\y)}} \cdot \frac{T(\x)}{T(\y)} = \frac{e^{\beta U(\y,\x)}}{e^{\beta U(\x,\y)}} \cdot \frac{T(\x)}{T(\y)} = \frac{P(\y,\x)}{P(\x,\y)},
$$
where the first equality follows from the definition of $\tilde\pi$, the second and the fourth follow from \eqref{eq:transition} and the third follows from \eqref{eq:U_rev}.
Therefore, the detailed balance equation holds for $\tilde\pi$ and thus the
Markov chain is reversible.
\end{proof}
We are now ready to prove that the all-logit are
reversible only for potential games.
\begin{prop}
\label{prop:onlypot}
If the all-logit for game $\G$ are reversible then $\G$ is a potential game.
\end{prop}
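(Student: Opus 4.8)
The plan is to combine Lemma~\ref{lemma:rev_propU} with the Monderer--Shapley characterisation (Theorem~\ref{MS96}): it suffices to deduce from reversibility that $I(\omega)=0$ for every circuit of length four. Throughout I will write $g(\x,\y):=U(\x,\y)-U(\y,\x)$ for the antisymmetrised cumulative utility and $W(\x):=\sum_i u_i(\x)$ for the social welfare. The first observation is that the reversibility condition \eqref{eq:U_rev} of Lemma~\ref{lemma:rev_propU}, after collecting terms, is exactly the cocycle identity $g(\x,\y)=g(\x,\z)+g(\z,\y)$ for all $\x,\y,\z\in S$. Together with the trivial antisymmetry $g(\z,\y)=-g(\y,\z)$, fixing an arbitrary reference profile $\z_0$ and setting $h(\x):=g(\x,\z_0)$ gives $g(\x,\y)=h(\x)-h(\y)$ for all $\x,\y$; in particular $\sum_{k}g(s_{k-1},s_k)=0$ for every closed sequence of profiles $s_0,\dots,s_\ell=s_0$.

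The crucial step is a closed form for $g$ along a one-player move. If $\y$ agrees with $\x$ on every coordinate except that of player $m$, then $(\x_{-m},y_m)=\y$ while $(\x_{-i},y_i)=(\x_{-i},x_i)=\x$ for $i\ne m$, so $U(\x,\y)=u_m(\y)+\sum_{i\ne m}u_i(\x)$ and symmetrically $U(\y,\x)=u_m(\x)+\sum_{i\ne m}u_i(\y)$. Subtracting and rewriting $\sum_{i\ne m}$ as $\sum_i$ minus the $m$-th term yields the identity
$$
g(\x,\y)=2\bigl(u_m(\y)-u_m(\x)\bigr)-\bigl(W(\y)-W(\x)\bigr).
$$
This decomposition of the antisymmetrised transition kernel into twice the mover's utility gain minus the change of the social welfare is the one step that is not pure bookkeeping, and I expect it to be the main obstacle: without it there is no obvious way to pass from the ``global'' quantity $U$ occurring in \eqref{eq:U_rev} to the ``individual'' improvements $u_{i_k}(s_k)-u_{i_k}(s_{k-1})$ that make up $I(\omega)$.

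With this in hand, let $\omega=\langle s_0,s_1,\dots,s_\ell=s_0\rangle$ be any circuit and let $i_k$ be the player who moves from $s_{k-1}$ to $s_k$. Applying the identity above to each edge of $\omega$ and summing,
$$
2\,I(\omega)=\sum_{k=1}^{\ell}2\bigl(u_{i_k}(s_k)-u_{i_k}(s_{k-1})\bigr)=\sum_{k=1}^{\ell}g(s_{k-1},s_k)+\sum_{k=1}^{\ell}\bigl(W(s_k)-W(s_{k-1})\bigr).
$$
The welfare sum telescopes to $W(s_0)-W(s_0)=0$, and the $g$-sum vanishes by the gradient property established in the first paragraph (equivalently, for a length-four circuit two applications of \eqref{eq:U_rev} give $g(s_0,s_1)+g(s_1,s_2)=g(s_0,s_2)=-g(s_2,s_0)=-\bigl(g(s_2,s_3)+g(s_3,s_0)\bigr)$). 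Hence $I(\omega)=0$ for every circuit, in particular for those of length four, and Theorem~\ref{MS96} concludes that $\G$ is a potential game.

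The remaining routine points I would spell out are: the elementary rearrangement showing \eqref{eq:U_rev} is the cocycle identity, and the verification that $(\x_{-m},y_m)=\y$ so that the two expressions for $U$ above are correct. Everything else is forced once the $g=2u_m-\Delta W$ identity is available: reversibility annihilates the $g$-part of any loop, the welfare part telescopes, and $I(\omega)$ is pinned to $0$.
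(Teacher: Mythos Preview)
Your argument is correct. It differs from the paper's in structure, though both rest on Lemma~\ref{lemma:rev_propU} and Theorem~\ref{MS96}. The paper takes a single length-four circuit $\langle \x,\z,\y,\w,\x\rangle$, applies \eqref{eq:U_rev} once with the specific triple $(\x,\y,\z)$ where $\x$ and $\y$ are the diagonally opposite corners (differing in players $i$ and $j$) and $\z$ is an intermediate corner, writes out all six cumulative utilities explicitly, and reads off $I(\omega)=0$ from the resulting identity. You instead abstract \eqref{eq:U_rev} into the cocycle identity $g(\x,\y)=g(\x,\z)+g(\z,\y)$ for $g=U-U^{\mathrm T}$, deduce that $g$ is a coboundary, and then use the single-edge formula $g=2\Delta u_m-\Delta W$ to split any loop sum into a vanishing $g$-part and a telescoping welfare part. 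Your route is slightly more conceptual and yields $I(\omega)=0$ for circuits of arbitrary length in one stroke; the paper's route is a shorter direct computation tailored to length four, which is all Theorem~\ref{MS96} requires. Both are complete; neither has a gap.
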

\begin{proof}
We show that if the all-logit are reversible then
the utility improvement $I(\omega)$ over any circuit $\omega$ of length $4$ is $0$.
The theorem then follows by Theorem~\ref{MS96}.

Consider circuit $\omega=\langle \x,\z,\y,\w,\x \rangle$ and
let $i$ be the player in which $\x$ and $\z$ differ and
let $j$ be the player in which $\z$ and $\y$ differ.
Then $\y$ and $\w$ differ in player $i$ and $\w$ and $\x$ differ in player $j$.
In other words, $\z=(\x_{-i},y_i)=(\y_{-j},x_j)$ and
                $\w=(\x_{-i},y_j)=(\y_{-i},x_i)$.
Therefore we have that
$$
\begin{array}{lcl}
U(\x,\y)=\sum_{k \neq i,j} u_k(\x) + u_i(\z) + u_j(\w) & \quad &
U(\y,\x)=\sum_{k \neq i,j} u_k(\y) + u_i(\w) + u_j(\z)\\
U(\x,\z)=\sum_{k \neq i,j} u_k(\x) + u_i(\z) + u_j(\x) & \quad &
U(\z,\y)=\sum_{k \neq i,j} u_k(\z) + u_i(\z) + u_j(\y)\\
U(\y,\z)=\sum_{k \neq i,j} u_k(\y) + u_i(\y) + u_j(\z) & \quad &
U(\z,\x)=\sum_{k \neq i,j} u_k(\z) + u_i(\x) + u_j(\z)
\end{array}
$$
By plugging the above expressions into~\eqref{eq:U_rev} and
rearranging terms, we obtain
$$
\Big(u_i(\z) - u_i(\x)\Big)
+
\Big(u_j(\y) - u_j(\z)\Big)
+
\Big(u_i(\w) - u_i(\y)\Big)
+
\Big(u_j(\x) - u_j(\w)\Big)
=0
$$
which shows $I(\omega)=0$.
\end{proof}

\subsection{A necessary and sufficient condition for all-logit reversibility}
In the previous section we have established that
the all-logit are reversible only for potential games and therefore,
from now on,
we only consider potential games $\G$  with potential function $\Phi$.
In this section we present in Proposition~\ref{prop:DiffPot}
a necessary and sufficient condition for reversibility that involves
only the potential function.
The condition will then be used in the next section
to prove that
local interaction
games are exactly the games whose
all-logit are reversible.

\begin{prop}
\label{prop:DiffPot}
The all-logit for a game $\G$ with potential $\Phi$ are reversible
if and only if, for all strategy profiles $\x,\y \in S$,
\begin{equation}
 \label{eq:psi_cond}
 K(\x,\y) = K(\y, \x),
\end{equation}
where $K$ is as defined in~\eqref{eq:kappa}.
\end{prop}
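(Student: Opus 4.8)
The plan is to feed the reversibility criterion of Lemma~\ref{lemma:rev_propU} --- which is phrased in terms of the cumulative utility $U$ --- through the potential, rewriting $U$ in terms of the quantity $K$ of~\eqref{eq:kappa}, and then to show that the condition so obtained is equivalent to the symmetry~\eqref{eq:psi_cond}.

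First I would express $U$ in terms of $\Phi$ and $K$. Applying the potential identity with the choice $z=x_i$ gives $u_i(\x_{-i},y_i)-u_i(\x)=\Phi(\x)-\Phi(\x_{-i},y_i)$; summing over $i$ and inserting the definition~\eqref{eq:kappa} of $K$ yields
$$
U(\x,\y) \;=\; F(\x) - K(\x,\y), \qquad\text{where } F(\x) := \textstyle\sum_i u_i(\x) + 2\Phi(\x),
$$
so that $F$ depends on $\x$ only. Substituting this into condition~\eqref{eq:U_rev}, all occurrences of $F$ cancel, and by Lemma~\ref{lemma:rev_propU} the all-logit for $\G$ are reversible if and only if
$$
D(\x,\y) = D(\x,\z) + D(\z,\y) \quad\text{for all } \x,\y,\z\in S, \qquad\text{where } D(\x,\y) := K(\x,\y) - K(\y,\x) .
$$
It then remains to prove that this ``cocycle'' identity for the antisymmetric function $D$ holds if and only if $D\equiv 0$, i.e.\ if and only if~\eqref{eq:psi_cond} holds.

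The ``if'' direction is immediate. For ``only if'', the crucial observation is that $D$ already vanishes on every pair of profiles differing in at most one coordinate: if $\y=(\x_{-i},y_i)$, then for each $k\ne i$ we have $(\x_{-k},y_k)=\x$ and $(\y_{-k},x_k)=\y$, while $(\x_{-i},y_i)=\y$ and $(\y_{-i},x_i)=\x$, so plugging into~\eqref{eq:kappa} gives $K(\x,\y)=K(\y,\x)=\Phi(\x)+\Phi(\y)$ and hence $D(\x,\y)=0$. Now fix arbitrary $\x,\y\in S$ and pick a chain $\x=\w_0,\w_1,\dots,\w_m=\y$ in which consecutive profiles differ in exactly one coordinate (flip the disagreeing coordinates one at a time). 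Applying the cocycle identity repeatedly with first argument $\w_0$ and inducting on $m$ gives $D(\x,\y)=\sum_{k=1}^m D(\w_{k-1},\w_k)=0$. Hence $D\equiv 0$, which is precisely~\eqref{eq:psi_cond}; conversely~\eqref{eq:psi_cond} makes the cocycle identity trivially hold, so by Lemma~\ref{lemma:rev_propU} it is equivalent to reversibility of the all-logit.

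The only genuine obstacle is noticing that $K$ is automatically symmetric on single-coordinate moves; once that is in hand, the cocycle structure imposed by Lemma~\ref{lemma:rev_propU} propagates the symmetry to all pairs by connectivity of the profile space under single-coordinate flips. The rest is just the sign bookkeeping needed to eliminate $F$ and a one-line telescoping induction.
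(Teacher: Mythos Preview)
Your proof is correct and follows essentially the same route as the paper: both hinge on translating the reversibility criterion on $U$ into a cocycle identity on $D(\x,\y)=K(\x,\y)-K(\y,\x)$, observing that $D$ vanishes at Hamming distance~$1$, and then propagating this to all pairs by induction along single-coordinate flips. Your presentation is slightly more streamlined --- you do the $U\to K$ translation once via $U(\x,\y)=F(\x)-K(\x,\y)$, whereas the paper handles the two implications separately (computing $U(\x,\y)-U(\y,\x)=(n+2)(\Phi(\x)-\Phi(\y))$ for the ``if'' direction and invoking Lemma~\ref{lemma:rev_eq} in the inductive step for ``only if'') --- but the substance is the same.
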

\begin{proof}
If $K(\x, \y) = K(\y, \x)$, then
$$
 \sum_i \Big(\Phi(\y_{-i},x_i) - \Phi(\y)\Big) - \sum_i \Big(\Phi(\x_{-i},y_i) - \Phi(\x)\Big) = 2 \Big(\Phi(\x) - \Phi(\y)\Big).
$$
Hence, for any pair of strategy profiles $\x, \y$ we have
\begin{align*}
 U(\x,\y) - U(\y,\x) & = n \Big(\Phi(\x) - \Phi(\y)\Big) + \sum_i \Big(u_i(\x_{-i},y_i) - u_i(\x)\Big) - \sum_i \Big(u_i(\y_{-i},x_i) - u_i(\y)\Big)\\
 & = n \Big(\Phi(\x) - \Phi(\y)\Big) + \sum_i \Big(\Phi(\y_{-i},x_i) - \Phi(\y)\Big) - \sum_i \Big(\Phi(\x_{-i},y_i) - \Phi(\x)\Big)\\
 & = (n + 2) \Big(\Phi(\x) - \Phi(\y)\Big).
\end{align*}
It is then immediate to check that \eqref{eq:U_rev} holds.

As for the other direction,
we proceed by induction on the Hamming distance between $\x$ and $\y$.
Let $\x$ and $\y$ be two profiles at Hamming distance $1$;
that is, $\x$ and $\y$ differ in only one player, say $j$.
This implies that
$(y_j,\x_{-j})=\y$ and $(x_j,\y_{-j})=\x$.
Moreover, for $i\ne j$, $(y_i,\x_{-i})=\x$ and
                        $(x_i,\y_{-i})=\y$.
Thus,
\begin{align*}
 & K(\x,\y) - K(\y,\x) = \sum_{i} \Big(\Phi(y_i, \x_{-i}) - \Phi(x_i, \y_{-i})\Big) - (n - 2) \Big(\Phi(\x) - \Phi(\y)\Big)\\
 & \qquad = \Big(\Phi(y_j, \x_{-j}) - \Phi(x_j, \y_{-j})\Big) + \sum_{i \neq j} \Big(\Phi(y_i, \x_{-i}) - \Phi(x_i, \y_{-i})\Big) - (n - 2) \Big(\Phi(\x) - \Phi(\y)\Big)\\
 & \qquad = \Big(\Phi(\y) - \Phi(\x)\Big) + (n - 1) \Big(\Phi(\x) - \Phi(\y)\Big) - (n - 2) \Big(\Phi(\x) - \Phi(\y)\Big) = 0.
\end{align*}
Now assume that the claim holds for any pair of profiles at Hamming distance $k<n$ and
let $\x$ and $\y$ be two profiles at distance $k+1$.
Let $j$ be any player such that $x_j \neq y_j$ and let $\z = (y_j,\x_{-j})$:
$\z$ is at distance at most $k$ from $\x$ and from $\y$.
Consider paths $\Gamma_1=\langle \x,\y\rangle$ and $\Gamma_2=\langle\x,\z,\y\rangle$.
From Lemma~\ref{lemma:rev_eq} we have that reversibility implies
$$
 \frac{e^{\beta K(\x,\y)}}{\gamma_A(\x)} \frac{\gamma_A(\y)}{e^{\beta K(\y,\x)}} = \frac{e^{\beta K(\x,\z)}}{\gamma_A(\x)} \frac{e^{\beta K(\z,\y)}}{\gamma_A(\z)} \frac{\gamma_A(\y)}{e^{\beta K(\y,\z)}} \frac{\gamma_A(\z)}{e^{\beta K(\z,\x)}}.
$$
Hence $K(\x,\y) - K(\y,\x) = \Big(K(\x, \z) - K(\z,\x)\Big) + \Big(K(\y, \z) - K(\z,\y)\Big)$ and the thesis follows from the inductive hypothesis.
\end{proof}

\subsection{Reversibility and local interaction games}
Here we prove that the games whose all-logit are reversible
are exactly
the local interaction games.

A potential
$\Phi:S_1\times\cdots\times S_n\rightarrow\mathbb{R}$
is a {\em two-player potential} if there exist $u,v\in[n]$
such that, for any $\x,\y\in S$ with  $x_u=y_u$ and $x_v=y_v$ we have
$\Phi(\x)=\Phi(\y)$.
In other words, $\Phi$ is a function of only its $u$-th and $v$-th argument.
An interesting fact about two-player potential games is given by the following lemma.
\begin{lemma}
 \label{lemma:2player}
 Any two-player potential satisfies~\eqref{eq:psi_cond}.
\end{lemma}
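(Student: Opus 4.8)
The plan is to unfold the definition of $K$ from~\eqref{eq:kappa} and exploit the fact that a two-player potential is literally a function of only two of its arguments. Let $u,v\in[n]$ be the two players on which $\Phi$ depends, so that $\Phi(\x)$ is determined by the pair $(x_u,x_v)$. First I would split the sum $\sum_i\Phi(\x_{-i},y_i)$ appearing in $K(\x,\y)$ according to whether $i\notin\{u,v\}$ or $i\in\{u,v\}$.

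The key observation is that for every $i\notin\{u,v\}$ the profile $(\x_{-i},y_i)$ agrees with $\x$ on coordinates $u$ and $v$, hence $\Phi(\x_{-i},y_i)=\Phi(\x)$; there are exactly $n-2$ such indices, so these terms contribute $(n-2)\Phi(\x)$, which cancels precisely the term $-(n-2)\Phi(\x)$ in the definition of $K$. What remains is the contribution of $i=u$ and $i=v$: writing $\Phi$ as a function of its $u$-th and $v$-th arguments, $(\x_{-u},y_u)$ is determined by $(y_u,x_v)$ and $(\x_{-v},y_v)$ by $(x_u,y_v)$, so
\[
K(\x,\y)=\Phi(y_u,x_v)+\Phi(x_u,y_v).
\]
Carrying out the same computation with the roles of $\x$ and $\y$ exchanged gives $K(\y,\x)=\Phi(x_u,y_v)+\Phi(y_u,x_v)$, which is the same sum. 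Hence $K(\x,\y)=K(\y,\x)$, i.e.~\eqref{eq:psi_cond} holds.

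There is essentially no hard step here: the whole argument is the bookkeeping of which of the $n$ summands in $\sum_i\Phi(\x_{-i},y_i)$ are ``frozen'' to $\Phi(\x)$ and which two are the genuine cross terms. The only point requiring a little care is making sure the $(n-2)$ frozen summands are matched correctly against the $-(n-2)\Phi(\x)$ normalization in~\eqref{eq:kappa}, after which the symmetry in $\x\leftrightarrow\y$ of the two surviving terms is immediate. This lemma will then feed into the characterization that local interaction games are exactly the potential games whose all-logit are reversible, via Proposition~\ref{prop:DiffPot}.
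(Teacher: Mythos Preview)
Your proof is correct and matches the paper's own argument essentially line for line: split off the $n-2$ indices $i\notin\{u,v\}$ where $\Phi(\x_{-i},y_i)=\Phi(\x)$, cancel against $-(n-2)\Phi(\x)$, and observe that the two surviving cross terms $\Phi(y_u,x_v)+\Phi(x_u,y_v)$ are symmetric under $\x\leftrightarrow\y$. The only cosmetic difference is that the paper keeps the full-profile notation $\Phi(y_u,\x_{-u})$ rather than reducing to two arguments, but the content is identical.
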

\begin{proof}
Let $\Phi$ be a two-player potential and let $u$ and $v$ be its two
players. Then we have that for $w\ne u,v$,
$\Phi(y_w,\x_{-w})=\Phi(\x)$ and that
$\Phi(y_u,\x_{-u})=\Phi(x_v,\y_{-v})$ and
$\Phi(y_v,\x_{-v})=\Phi(x_u,\y_{-u})$.
Thus
$$
 K(\x,\y)=\Phi(y_u,\x_{-u})+\Phi(y_v,\x_{-v})
$$
and
\[
 K(\y,\x) =\Phi(x_v,\y_{-v})+\Phi(x_u,\y_{-u}) =\Phi(y_u,\x_{-u})+\Phi(y_v,\x_{-v}).\qedhere
\]
\end{proof}

We say that
a potential $\Phi$ is the sum of two-player potentials
if there exist $N$ two-player potentials
$\Phi_1,\ldots,\Phi_N$ such that
$\Phi=\Phi_1+\cdots+\Phi_N$.
It is easy to see that generality is not lost by further requiring that $1 \leq l\ne l' \leq N$ implies
$(u_l,v_l)\ne (u_{l'},v_{l'})$, where $u_l$ and $v_l$ are the two players
of potential $\Phi_l$.
At every game $\G$ whose potential is the sum of two-player potentials, i.e., $\Phi=\Phi_1+\cdots+\Phi_N$,
we can associate a {\em social graph} $G$ that has a vertex for each player of $\G$
and has edge~$(u,v)$ iff there exists $l$ such that
potential $\Phi_l$ depends on players $u$ and $v$.
In other words, each game whose potential is the sum of two-player potentials is a local interaction
game.

Observe that the sum of two potentials satisfying~\eqref{eq:psi_cond} also satisfies~\eqref{eq:psi_cond}.
Hence we have the following proposition.
\begin{prop}
\label{prop:soc_rev}
The all-logit dynamics for a local interaction game are reversible.
\end{prop}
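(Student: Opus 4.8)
The plan is to reduce the claim to the potential-only criterion of Proposition~\ref{prop:DiffPot} and then exploit the fact that the map sending a potential $\Phi$ to the function $K$ defined in~\eqref{eq:kappa} is \emph{linear}. Concretely, by Proposition~\ref{prop:DiffPot} it suffices to show that the potential $\Phi=\sum_e \Phi_e$ of the local interaction game $\G$ satisfies $K(\x,\y)=K(\y,\x)$ for all $\x,\y\in S$, where $K$ is built from $\Phi$ via~\eqref{eq:kappa}.

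First I would record the decomposition: since $\G$ is a local interaction game, its potential is a sum $\Phi=\Phi_1+\cdots+\Phi_N$ of two-player potentials (one per edge of the social graph), each $\Phi_l$ being regarded as a function on the full profile space $S_1\times\cdots\times S_n$ that depends only on the two coordinates $u_l,v_l$. Next, inspecting~\eqref{eq:kappa}, the quantity $K(\x,\y)=\sum_i\Phi(\x_{-i},y_i)-(n-2)\Phi(\x)$ is an $\mathbb{R}$-linear functional of $\Phi$ for each fixed pair $(\x,\y)$; hence, writing $K_{\Phi_l}$ for the function obtained by plugging $\Phi_l$ into~\eqref{eq:kappa}, we get $K_\Phi(\x,\y)=\sum_{l=1}^N K_{\Phi_l}(\x,\y)$. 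By Lemma~\ref{lemma:2player} each summand is symmetric, $K_{\Phi_l}(\x,\y)=K_{\Phi_l}(\y,\x)$, so summing over $l$ gives $K_\Phi(\x,\y)=K_\Phi(\y,\x)$, which is exactly~\eqref{eq:psi_cond}. Applying Proposition~\ref{prop:DiffPot} then yields reversibility.

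There is essentially no obstacle here: all the work has been front-loaded into Lemma~\ref{lemma:2player} (the single-edge computation) and Proposition~\ref{prop:DiffPot} (the reduction to the potential). The only point that deserves a line of care is the linearity of $\Phi\mapsto K$, which is immediate from~\eqref{eq:kappa} since each term $\Phi(\x_{-i},y_i)$ and the term $(n-2)\Phi(\x)$ is linear in $\Phi$; this is precisely the content of the remark preceding the statement that ``the sum of two potentials satisfying~\eqref{eq:psi_cond} also satisfies~\eqref{eq:psi_cond}''. One should also make sure that the two-player potentials $\Phi_e$ used in the decomposition are exactly the ones guaranteed for a local interaction game, i.e. that $\Phi=\sum_e\Phi_e$ really is a potential for $\G$; this was already observed when local interaction games were defined.
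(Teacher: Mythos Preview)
Your proposal is correct and follows essentially the same argument as the paper: the paper's proof is precisely the one-line observation that the map $\Phi\mapsto K$ is linear, so the symmetry~\eqref{eq:psi_cond} established for each two-player summand in Lemma~\ref{lemma:2player} passes to the sum $\Phi=\sum_e\Phi_e$, and then Proposition~\ref{prop:DiffPot} gives reversibility. You have simply spelled out this reasoning in more detail.
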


Next we prove that if an $n$-player potential $\Phi$ satisfies~\eqref{eq:psi_cond} then it can be written as the
sum of at most $N=\binom{n}{2}$ two-player potentials,
$\Phi_1,\ldots,\Phi_N$ and thus it represents a local interaction game.
We do so by describing an effective procedure that constructs the
$N$ two-player potentials.

Let us fix a strategy $w^\star_i$ for each player $i$ and denote as $\w^\star$ the strategy profile $(w^\star_1, \ldots, w^\star_n)$.
Moreover, we fix an arbitrary ordering
$(u_1, v_1),\ldots,(u_N,v_N)$ of the $N$ unordered pairs of players.
For a potential $\Phi$ we define
the sequence $\vartheta_0,\ldots,\vartheta_{N}$
of potentials as follows:
$\vartheta_0=\Phi$ and, for $i=1,\ldots,N$, set
\begin{equation}
\label{eq:fi}
\vartheta_i=\vartheta_{i-1}-\Phi_{i}
\end{equation}
where, for $\x\in S$, $\Phi_i(\x)$ is defined as
$$
\Phi_{i}(\x)=\vartheta_{i-1}(x_{u_i},x_{v_i},\w^\star_{-u_i v_i}).
$$
Observe that, for $i=1,\ldots,N$, $\Phi_{i}$ is a
two-player potential and its players are  $u_i$ and $v_i$.
From Lemma~\ref{lemma:2player}, $\Phi_{i}$ satisfies~\eqref{eq:psi_cond}.
Hence,
if $\Phi$ satisfies~\eqref{eq:psi_cond}, then also $\vartheta_i$, for $i=1,\ldots,N$, satisfies~\eqref{eq:psi_cond}.

By summing for $i=1,\ldots,N$ in~\eqref{eq:fi} we obtain
$$
\sum_{i=1}^N\vartheta_i= \sum_{i=0}^{N-1} \vartheta_i-\sum_{i=1}^{N} \Phi_i.
$$
Thus
$$
\Phi-\vartheta_N=\sum_{i=1}^N \Phi_i.
$$
The next two lemmas prove that, if $\Phi$
satisfies~\eqref{eq:psi_cond},
then $\vartheta_N$ is identically zero. This implies that $\Phi$ is the
sum of at most $N$ non-zero two-player potentials and thus a
local interaction game.

A \emph{ball} $B(r,\x)$ of radius $r \leq n$ centered in $\x \in S$
is the subset of $S$ containing all profiles $\y$
that differ from $\x$ in at most $r$ coordinates.
\begin{lemma}
\label{lemma:algo}
For any $n$-player potential function $\Phi$
and for any ordering of the pairs of players,
$\vartheta_N(\x)=0$ for every $\x\in B(2,\w^\star)$.
\end{lemma}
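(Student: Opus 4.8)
The plan is to prove, by induction on $i$, the following strengthening: \emph{for every $i\in\{0,1,\dots,N\}$ and every $\x\in B(2,\w^\star)$, if the difference set $T_\x:=\{k:x_k\ne w^\star_k\}$ satisfies $T_\x\subseteq\{u_m,v_m\}$ for some $m\le i$, then $\vartheta_i(\x)=0$.} The lemma then follows by taking $i=N$: since the ordering $(u_1,v_1),\dots,(u_N,v_N)$ lists all $N=\binom n2$ pairs of players and every $\x\in B(2,\w^\star)$ has $|T_\x|\le 2$, the set $T_\x$ is contained in $\{u_m,v_m\}$ for at least one index $m\le N$.

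The base case $i=0$ is vacuous, since there is no $m\le 0$. For the inductive step the crucial observation is an idempotency property of the construction: writing $\y=(x_{u_i},x_{v_i},\w^\star_{-u_iv_i})$ for the profile obtained from $\x$ by resetting every coordinate outside $\{u_i,v_i\}$ to its $\w^\star$-value, one has $\Phi_i(\x)=\vartheta_{i-1}(\y)$ and $T_\y=T_\x\cap\{u_i,v_i\}$. In particular, if $T_\x\subseteq\{u_i,v_i\}$ then $\y=\x$, so $\vartheta_i(\x)=\vartheta_{i-1}(\x)-\Phi_i(\x)=0$ directly, with no appeal to the inductive hypothesis.

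For the general step, given $i\ge 1$ and $\x$ as in the statement, I would first check whether $T_\x\subseteq\{u_i,v_i\}$; if so, the idempotency property finishes the case. Otherwise the pair $m$ with $T_\x\subseteq\{u_m,v_m\}$ must satisfy $m<i$, and I would invoke the inductive hypothesis at level $i-1$ twice. First, applied to $\x$ itself, it gives $\vartheta_{i-1}(\x)=0$. Second, the profile $\y$ appearing in $\Phi_i(\x)=\vartheta_{i-1}(\y)$ satisfies $T_\y=T_\x\cap\{u_i,v_i\}\subseteq T_\x\subseteq\{u_m,v_m\}$ and $|T_\y|\le 2$, hence $\y\in B(2,\w^\star)$ and the inductive hypothesis gives $\vartheta_{i-1}(\y)=0$ as well. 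Therefore $\vartheta_i(\x)=\vartheta_{i-1}(\x)-\vartheta_{i-1}(\y)=0$, closing the induction.

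The one point that requires care is formulating the invariant so that it is self-propagating: at step $i$ the correction $\Phi_i$ is subtracted, and for the argument to go through this correction must itself be evaluated at a profile whose difference set still lies inside an already-processed pair — which is exactly what the inclusion $T_\y\subseteq T_\x$ guarantees. A more hands-on alternative is to split into the cases $|T_\x|=0,1,2$ and track the value $\vartheta_i(\x)$ explicitly along the recursion; this works too, but then one must separately keep track of the first index at which a pair containing a given player — respectively, the pair of two given players — gets processed, whereas the invariant above dispatches all three cases uniformly. Note also that the argument uses nothing about $\Phi$ being a potential, only that it is a function on $S$.
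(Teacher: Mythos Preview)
Your argument is correct and rests on the same two observations the paper uses: the idempotency $\vartheta_i(\x)=\vartheta_{i-1}(\x)-\vartheta_{i-1}(\x)=0$ when $T_\x\subseteq\{u_i,v_i\}$, and the fact that once $\vartheta_{i-1}(\x)=0$ is established, the correction $\Phi_i(\x)=\vartheta_{i-1}(\y)$ is evaluated at a profile $\y$ whose difference set is no larger, so the value stays zero. The paper carries this out by an explicit case split on $|T_\x|\in\{0,1,2\}$ (exactly the ``hands-on alternative'' you mention), introducing the first index $t(u)$ at which a given player appears and the index $t$ of a given pair; your single invariant ``$T_\x\subseteq\{u_m,v_m\}$ for some $m\le i$'' absorbs all three cases at once and avoids that bookkeeping, which is a modest but genuine simplification.
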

\begin{proof}
We distinguish three cases based on the distance of $\x$ from $\w^\star$.\\
\underline{$\x=\w^\star$:} for every $i \geq 1$, we have
$$
\vartheta_i(\w^\star)=\vartheta_{i-1}(\w^\star)-\Phi_{i}(\w^\star)=\vartheta_{i-1}(\w^\star)-\vartheta_{i - 1}(\w^\star)
= 0.
$$
\underline{$\x$ is at distance $1$ from $\w^\star$:}
That is, there exists $u\in[n]$ such that $\x=(x_u,\w^\star_{-u})$, with $x_u \neq w_u^\star$.
Let us denote by $t(u)$ the smallest $t$ such that
the $t$-th pair contains $u$. We next show that
for $i \geq t(u)$, $\vartheta_i(\x) = 0$.
Indeed, we have that if
$u$ is a component of the $i$-th pair then
$$
\vartheta_i(\x)=\vartheta_{i-1}(\x)-\Phi_{i}(\x)=\vartheta_{i-1}(\x)-
        \vartheta_{i-1}(\x) = 0;
 $$
On the other hand,
if $u$ is not a component of the $i$-th pair then
$$
\vartheta_i(\x)=\vartheta_{i-1}(\x)-\Phi_{i}(\x)=
            \vartheta_{i-1}(\x)-\vartheta_{i-1}(\w^\star)=\vartheta_{i-1}(\x);
 $$
\underline{$\x$ is at distance $2$ from $\w^\star$:}
That is, there exist $u$ and $v$ such that
$\x=(x_u,x_v,\w^\star_{-uv})$, with $x_u \neq w^\star_u$ and $x_v \neq w^\star_v$.
Let $t$ be the index of the pair $(u,v)$. Notice that $t\geq t(u),t(v)$.
We show that $\vartheta_t(\x)=0$ and that this value
does not change for all $i>t$.
Indeed, we have
 $$
\vartheta_t(\x)=\vartheta_{t-1}(\x)-\Phi_{t}(\x)
               =\vartheta_{t-1}(\x)-\vartheta_{t-1}(\x) = 0;
 $$
If instead neither of $u$ and $v$ belongs to the $i$-th pair, with $i>t$,
then we have
$$
\vartheta_i(\x)=\vartheta_{i-1}(\x)-\Phi_{i}(\x)
               =\vartheta_{i-1}(\x)-\vartheta_{i - 1}(\w^\star)
               =\vartheta_{i-1}(\x);
 $$
Finally,
suppose that the $i$-th pair, for $i>t$, contains exactly one of $u$ and
$v$, say $u$.
Then we have
 $$
\vartheta_i(\x)=\vartheta_{i-1}(\x)-\Phi_{i}(\x)
               =\vartheta_{i-1}(\x)-\vartheta_{i-1}(x_u,\w^\star_{-u}).
 $$
We conclude the proof by observing that
$t(u) \leq t \leq i-1$ and thus, by the previous case,
$\vartheta_{i-1}(x_u,\w^\star_{-u})=0$.
\end{proof}

The next lemma shows that if a potential $\vartheta$ satisfies~\eqref{eq:psi_cond} and is constant in a ball of
radius $2$, then it is constant everywhere.
\begin{lemma}
\label{lemma:all_zero}
Let $\vartheta$ be a function that satisfies \eqref{eq:psi_cond}.
If there exist $\x\in S$ and $c\in\mathbb{R}$ such that $\vartheta(\y) = c$
for every $\y\in B(2,\x)$, then $\vartheta(\y) = c$ for every $\y \in S$.
\end{lemma}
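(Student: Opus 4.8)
The plan is to first normalise so that $c=0$, and then run an induction on the Hamming distance from $\x$; since $B(n,\x)=S$, vanishing on every ball $B(k,\x)$ gives the claim. I would begin by noting that the hypothesis \eqref{eq:psi_cond}, read with $\vartheta$ in place of $\Phi$, is unaffected by adding a constant to $\vartheta$: replacing $\vartheta$ by $\vartheta-c$ shifts $K(\x,\y)$ (defined from $\vartheta$ via \eqref{eq:kappa}) by the same additive constant $-2c$ for every ordered pair $(\x,\y)$, so the identity $K(\x,\y)=K(\y,\x)$ still holds, and $\vartheta-c$ vanishes on $B(2,\x)$. Hence without loss of generality $c=0$, and it suffices to prove $\vartheta\equiv 0$.

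Next I would prove by induction on $k\geq 2$ that $\vartheta$ vanishes on the ball $B(k,\x)$. The base case $k=2$ is precisely the hypothesis. For the inductive step, assume $\vartheta\equiv 0$ on $B(k,\x)$ and pick $\y$ at Hamming distance exactly $k+1$ from $\x$. I would apply \eqref{eq:psi_cond} to the pair $(\x,\y)$, namely
$$\sum_{i=1}^n \vartheta(\x_{-i},y_i)-(n-2)\,\vartheta(\x)=\sum_{i=1}^n \vartheta(\y_{-i},x_i)-(n-2)\,\vartheta(\y),$$
and evaluate both sides using the vanishing already established. The left-hand side is $0$, since $\vartheta(\x)=0$ and every profile $(\x_{-i},y_i)$ lies within Hamming distance $1$ of $\x$. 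For the right-hand side I would split the index $i$ according to whether $x_i=y_i$: when $x_i=y_i$ we have $(\y_{-i},x_i)=\y$; when $x_i\neq y_i$ the profile $(\y_{-i},x_i)$ agrees with $\x$ in coordinate $i$ and with $\y$ elsewhere, so it differs from $\x$ in exactly $k$ coordinates and contributes $0$ by the inductive hypothesis. Since $\y$ differs from $\x$ in exactly $k+1$ coordinates, the first case occurs $n-(k+1)$ times, so the right-hand side equals $\big(n-(k+1)\big)\vartheta(\y)-(n-2)\vartheta(\y)=(1-k)\,\vartheta(\y)$. Thus $(1-k)\,\vartheta(\y)=0$, and since $k\geq 2$ this forces $\vartheta(\y)=0$, completing the induction.

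The only delicate point — and the structural reason the statement demands a ball of radius $2$ rather than $1$ — is the factor $1-k$: the identity extracts no information at $k=1$, so one genuinely needs two full ``shells'' of zeros around $\x$ to seed the recursion, after which each shell pins down the next. Everything else is just bookkeeping about which single-coordinate perturbations of $\x$ and of $\y$ fall inside which ball, so I do not expect any real obstacle beyond keeping that case analysis straight. Combined with Lemma~\ref{lemma:algo}, this yields $\vartheta_N\equiv 0$, hence $\Phi=\sum_{i=1}^N\Phi_i$ is a sum of two-player potentials and $\G$ is a local interaction game.
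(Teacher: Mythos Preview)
Your proof is correct and follows essentially the same approach as the paper's: both argue by induction on the Hamming distance from $\x$, apply the identity $K(\x,\y)=K(\y,\x)$ to the pair $(\x,\y)$, and use that all single-coordinate perturbations land in the already-controlled ball to isolate a nonzero multiple of $\vartheta(\y)-c$. The only differences are cosmetic---you normalise to $c=0$ first and index the induction by $k$ rather than $h=k+1$, arriving at the factor $1-k$ where the paper obtains the equivalent $h-2$.
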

\begin{proof}
Fix $h>2$ and suppose that $\vartheta(\z)=c$ for every $\z\in B(h-1,\x)$.
Consider $\y\in B(h,\x)\setminus B(h-1,\x)$ and observe that
$(y_i,\x_{-i}) \in B(h-1,\x)$ and $(x_i, \y_{-i}) \in B(h-1,\x)$ for every $i$ such that $x_i \neq y_i$.
Then, since $\vartheta$ satisfies \eqref{eq:psi_cond}, we have
$$
(h - 2) \left(\vartheta(\x) - \vartheta(\y)\right) = \sum_{i \colon x_i \neq y_i} \Big(\vartheta(y_i, \x_{-i}) - \vartheta(x_i, \y_{-i})\Big) = 0,
$$
that implies $\vartheta(\y) = \vartheta(\x) = c$.
\end{proof}
We can thus conclude that if the all-logit of a potential game $\G$ are reversible then
$\G$ is a local interaction game.
By combining this result with Proposition~\ref{prop:onlypot} and
Proposition~\ref{prop:soc_rev}, we obtain
\begin{theorem}
The all-logit dynamics of game $\G$ are reversible if and only if $\G$ is a local interaction game.
\end{theorem}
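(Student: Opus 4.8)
The plan is to assemble the theorem from the three ingredients already established, and to supply the one missing link: that a potential $\Phi$ satisfying~\eqref{eq:psi_cond} is a sum of two-player potentials. One direction is immediate: Proposition~\ref{prop:soc_rev} says the all-logit for any local interaction game are reversible. For the converse, Proposition~\ref{prop:onlypot} tells us reversibility forces $\G$ to be a potential game, say with potential $\Phi$, and Proposition~\ref{prop:DiffPot} then tells us $\Phi$ satisfies $K(\x,\y)=K(\y,\x)$ for all $\x,\y\in S$. So it suffices to show any such $\Phi$ is a sum of two-player potentials, since then the social-graph construction described before Proposition~\ref{prop:soc_rev} exhibits $\G$ as a local interaction game.

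For this missing link I would run the explicit peeling procedure already set up in the excerpt: fix a base profile $\w^\star$ and an ordering of the $N=\binom{n}{2}$ pairs, define $\vartheta_0=\Phi$ and $\vartheta_i=\vartheta_{i-1}-\Phi_i$ with $\Phi_i(\x)=\vartheta_{i-1}(x_{u_i},x_{v_i},\w^\star_{-u_iv_i})$, so that $\Phi-\vartheta_N=\sum_{i=1}^N\Phi_i$ with each $\Phi_i$ a two-player potential. The key observation is that each $\Phi_i$ satisfies~\eqref{eq:psi_cond} by Lemma~\ref{lemma:2player}, and differences of functions satisfying~\eqref{eq:psi_cond} again satisfy it (the defining identity is linear in $\Phi$), so every $\vartheta_i$ inherits property~\eqref{eq:psi_cond} from $\Phi$. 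Now Lemma~\ref{lemma:algo} shows $\vartheta_N$ vanishes on the ball $B(2,\w^\star)$, and Lemma~\ref{lemma:all_zero} upgrades "vanishes on $B(2,\w^\star)$ and satisfies~\eqref{eq:psi_cond}" to "vanishes everywhere." Hence $\vartheta_N\equiv 0$ and $\Phi=\sum_{i=1}^N\Phi_i$ is a sum of (at most $\binom{n}{2}$) two-player potentials, as required.

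Putting the pieces together: if the all-logit of $\G$ are reversible, then $\G$ is a potential game (Proposition~\ref{prop:onlypot}) whose potential satisfies~\eqref{eq:psi_cond} (Proposition~\ref{prop:DiffPot}), hence is a sum of two-player potentials by the argument above, hence $\G$ is a local interaction game on the associated social graph; conversely any local interaction game has reversible all-logit by Proposition~\ref{prop:soc_rev}. This is exactly the stated equivalence.

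The genuinely delicate point — and the one I would slow down on — is Lemma~\ref{lemma:algo}, i.e.\ verifying $\vartheta_N$ vanishes on $B(2,\w^\star)$: one must track, for a profile $\x$ at Hamming distance $0$, $1$, or $2$ from $\w^\star$, how each subtraction step $\vartheta_i=\vartheta_{i-1}-\Phi_i$ acts, distinguishing whether the $i$-th pair contains zero, one, or two of the coordinates on which $\x$ differs from $\w^\star$. The subtle case is distance $2$ with $\x$ differing in coordinates $u,v$: after the pair $(u,v)$ is processed we get $\vartheta_t(\x)=0$, but one still has to check that a later pair containing exactly one of $u,v$ (say $u$) does not reintroduce a nonzero value — this uses that $\vartheta_{i-1}(x_u,\w^\star_{-u})$ has already been killed, which is precisely the distance-$1$ case applied at an earlier index. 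Everything else is bookkeeping that the excerpt has already laid out.
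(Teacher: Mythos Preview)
Your proposal is correct and follows essentially the same route as the paper: one direction is Proposition~\ref{prop:soc_rev}, and for the converse you chain Proposition~\ref{prop:onlypot}, Proposition~\ref{prop:DiffPot}, the peeling procedure, Lemma~\ref{lemma:algo}, and Lemma~\ref{lemma:all_zero} exactly as the paper does. Your identification of the delicate bookkeeping in Lemma~\ref{lemma:algo} (the distance-$2$ case relying on the already-established distance-$1$ case) matches the paper's own treatment.
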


As a corollary of this theorem we have a closed form for the stationary distribution of the all-logit for local interaction games.
\begin{cor}[Stationary distribution]
\label{cor:stationary}
Let $\G$ be a local interaction game with potential function $\Phi$.
Then the stationary distribution of the all-logit for $\G$ is
\begin{equation}
\label{eq:stationary}
\pi_A(\x) \propto \sum_{\y \in S} e^{-\beta K(\x,\y)}.
\end{equation}
\end{cor}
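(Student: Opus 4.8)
The plan is to verify the detailed balance condition~\eqref{eq:rev_det_bal} directly, using the rewriting of the all-logit transition probabilities for potential games introduced in Section~\ref{sec::logit}. Recall that for a potential game with potential $\Phi$, equation~\eqref{eq:transition} specializes to $P(\x,\y)=e^{-\beta K(\x,\y)}/\gamma_A(\x)$ with $\gamma_A(\x)=\sum_{\z\in S}e^{-\beta K(\x,\z)}$. Since $\G$ is a local interaction game, Proposition~\ref{prop:soc_rev} (combined with Proposition~\ref{prop:DiffPot}) tells us that the all-logit are reversible and, more precisely, that $K(\x,\y)=K(\y,\x)$ for all $\x,\y\in S$.

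Next I would set $\pi_A(\x)=\gamma_A(\x)/Z$, where $Z=\sum_{\x\in S}\gamma_A(\x)$ is the normalizing constant; this is a well-defined probability distribution on $S$ of the form claimed in~\eqref{eq:stationary}, since $S$ is finite and each $\gamma_A(\x)$ is a positive real number. For any pair $\x,\y\in S$ one then computes
\[
\pi_A(\x)\,P(\x,\y)=\frac{\gamma_A(\x)}{Z}\cdot\frac{e^{-\beta K(\x,\y)}}{\gamma_A(\x)}=\frac{e^{-\beta K(\x,\y)}}{Z},
\]
and symmetrically $\pi_A(\y)\,P(\y,\x)=e^{-\beta K(\y,\x)}/Z$. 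By the symmetry $K(\x,\y)=K(\y,\x)$ these two quantities coincide, so the detailed balance condition holds for $\pi_A$, and hence $\pi_A$ is a stationary distribution of the all-logit chain. Finally, as remarked right after~\eqref{eq:transition}, the all-logit chain is ergodic (all its transition probabilities are strictly positive), so its stationary distribution is unique; therefore it must equal $\pi_A$, which proves~\eqref{eq:stationary}.

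I do not expect a serious obstacle here: the argument reduces entirely to substituting the potential-game form of $P$ into the detailed balance equation and invoking the already-established symmetry of $K$, with uniqueness supplied by ergodicity. The only step needing a moment of care is checking that $\pi_A$ is a bona fide distribution (finiteness and positivity of $\gamma_A$), which is immediate. As an alternative to re-deriving detailed balance from scratch, one could simply specialize the reversible distribution $\tilde\pi(\x)=P(\z,\x)/(Z\cdot P(\x,\z))$ constructed in the proof of Lemma~\ref{lemma:rev_propU}: plugging in $P(\x,\y)=e^{-\beta K(\x,\y)}/\gamma_A(\x)$ and cancelling the equal factors $e^{-\beta K(\z,\x)}$ and $e^{-\beta K(\x,\z)}$ yields $\tilde\pi(\x)\propto\gamma_A(\x)=\sum_{\y\in S}e^{-\beta K(\x,\y)}$, giving the same conclusion.
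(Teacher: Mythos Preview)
Your proposal is correct and follows essentially the same approach as the paper: both arguments hinge on the symmetry $K(\x,\y)=K(\y,\x)$ (Proposition~\ref{prop:DiffPot}) plugged into the detailed balance equation for $P(\x,\y)=e^{-\beta K(\x,\y)}/\gamma_A(\x)$. The only cosmetic difference is that the paper starts from the known reversibility and \emph{derives} $\pi_A(\x)\propto\gamma_A(\x)$ from the ratio $\pi_A(\x)/\pi_A(\y)=P(\y,\x)/P(\x,\y)$, whereas you \emph{propose} $\pi_A(\x)=\gamma_A(\x)/Z$ and verify detailed balance directly; the computations are the same.
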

\begin{proof}
Fix any profile $\y$.
The detailed balance equation gives for every $\x \in S$
$$
\frac{\pi_A(\x)}{\pi_A(\y)} =
\frac{P(\y,\x)}{P(\x,\y)} =
    e^{\beta(K(\x,\y) - K(\y,\x))} \frac{\gamma_A(\x)}{\gamma_A(\y)}.
$$
By Proposition~\ref{prop:DiffPot} we have
$$
\pi_A(\x) = \gamma_A(\x) \cdot \frac{\pi_A(\y)}{\gamma_A(\y)}.
 $$
 Since the term $\frac{\pi_A(\y)}{\gamma_A(\y)}$ is constant for each profile $\x$, the claim follows.
\end{proof}

Note that for a local interaction game $\G$ with potential function $\Phi$,
we write $\pi_1(\x)$, the stationary distribution of the
one-logit of $\G$, as $\pi_1(\x)=\gamma_1(\x)/Z_1$ where
$\gamma_1(\x)=e^{-\beta\Phi(\x)}$ is the Boltzmann factor
and $Z_1=\sum_\x\gamma_1(\x)$ is the partition function.
From Corollary~\ref{cor:stationary}, we derive that
$\pi_A(\x)$,
the stationary distribution of the all-logit of $\G$,
can be written in similar way; 
that is, $\pi_A(\x)=\frac{\gamma_A(\x)}{Z_A}$,
where $\gamma_A(\x)=\sum_\y e^{-\beta K(\x,\y)}$ 
and 
$$Z_A = \sum_{\x \in S} \gamma_A(\x)=\sum_{\x,\y\in S} e^{-\beta K(\x,\y)}.$$ 
The $Z_A$ factor can thus be considered as the partition function 
of the all-logit.

\section{Observables of local information games}\label{sec:observables}
In this section we study observables of local interaction games and
we focus on the relation between the expected value~$\mobs{O}{\pi_1}$ of an observable $O$
at the stationarity of the one-logit
and its expected value~$\mobs{O}{\pi_A}$ at the stationarity of the all-logit dynamics.
We start by studying invariant observables, that is,
observables for which the two expected values coincide.
In Theorem~\ref{thm:suff}, we give a sufficient condition for an
observable to be invariant.
The sufficient condition is related
to the existence of a {\em decomposition}
of the set $S \times S$ that
decomposes the quantity $K$ appearing in the
expression for the stationary distribution of the all-logit of the
local interaction game $\G$ (see Eq.~\ref{eq:stationary})
into a sum of two potentials.
In Theorem~\ref{thm:suff} we show that if $\G$ admits
such a decomposition $\mu$ and in addition
observable $O$ is also decomposed by $\mu$
(see Definition~\ref{def:obs}) then
$O$ has the same expected value at the stationarity of the
one-logit and of the all-logit.
We then go on to show that all local interaction games on
{\em bipartite} social graphs admit a decomposition permutation
(see Theorem~\ref{thm:mono_iff_bis}) and give examples
of invariant observables.

We then look at local interaction games $\G$ on general social graphs $G$
and show that the expected values of a
decomposable observable $O$ with respect to the stationary
distributions of the one-logit and of the all-logit
differ by a quantity that depends on $\beta$ and on how far away the social
graph $G$ is from being bipartite (which in turn is related
to the smallest eigenvalue of $G$ \cite{trevisanSTOC09}).

The above findings follow from a
relation between the partition functions of the one-logit and of the
all-logit that might be of independent interest.
More precisely, in Theorem~\ref{thm:bipdec}
we show that if the game $\G$ admits a decomposition
then the partition function of the
all-logit is the square of the partition function of the one-logit.
The partition function of the one-logit is easily seen to be equal to the
partition function of the canonical ensemble used
in Statistical Mechanics (see for example~\cite{landau}).
It is well known that a partition function of a
canonical ensemble that is the union of two independent
canonical ensembles is the product of the two partition functions.
Thus Theorem~\ref{thm:bipdec} (and Corollary~\ref{cor:station_bip}) can be seen as a further confirmation
that the all-logit can be decomposed into two independent one-logit
dynamics.

\subsection{Decomposable observables for bipartite social graphs}
\label{sec:dec}
We start by introducing the concept of a {\em decomposition} and we prove that
for all local interaction games on a bipartite social graph there exists
a decomposition.
Then we define the concept of a {\em decomposable} observable and prove that a decomposable observable has the same expectation
at stationarity for the one-logit and the all-logit.

\begin{definition}
\label{def:dec}
A permutation $$\mu \colon (\x,\y)\mapsto(\mu_1(\x,\y),\mu_2(\x,\y))$$
    of $S\times S$ is a
    {\em decomposition} for a local interaction
    game $\G$ with potential $\Phi$ if,
    for all $(\x,\y)$, we have that
        $$K(\x,\y)=\Phi(\mu_1(\x,\y))+\Phi(\mu_2(\x,\y)),$$
$\mu_1(\x,\y)=\mu_2(\y,\x)$ and $\mu_2(\x,\y)=\mu_1(\y,\x)$.
\end{definition}
Observe that if $\mu$ decomposes local interaction game $\G$ then 
$$\pi_A(\x)=\sum_\y \pi_1(\mu_1(\x,\y))\cdot\pi_1(\mu_2(\x,\y))$$

We first show a relation between the partition functions of the one-logit and of the
all-logit that might be of independent interest.
\begin{theorem}
\label{thm:bipdec}
If a local interaction game $\G$ admits a decomposition  $\mu$
then $Z_A=Z_1^2$.
\end{theorem}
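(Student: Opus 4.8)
The plan is to unwind the definition of the all-logit partition function $Z_A$ and then use the defining property of a decomposition together with the single fact that $\mu$ is a \emph{bijection} of $S\times S$ onto itself.

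First I would recall that, by Corollary~\ref{cor:stationary} and the discussion following it, $Z_A=\sum_{(\x,\y)\in S\times S} e^{-\beta K(\x,\y)}$. Applying the identity $K(\x,\y)=\Phi(\mu_1(\x,\y))+\Phi(\mu_2(\x,\y))$ from Definition~\ref{def:dec}, this becomes
$$Z_A=\sum_{(\x,\y)\in S\times S} e^{-\beta\Phi(\mu_1(\x,\y))}\,e^{-\beta\Phi(\mu_2(\x,\y))}.$$

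The key step is the change of variables: since $\mu$ is a permutation of $S\times S$, the map $(\x,\y)\mapsto(\mu_1(\x,\y),\mu_2(\x,\y))$ is a bijection, so the sum may be reindexed by letting $(\x',\y')=(\mu_1(\x,\y),\mu_2(\x,\y))$ range over all of $S\times S$. This gives
$$Z_A=\sum_{(\x',\y')\in S\times S} e^{-\beta\Phi(\x')}\,e^{-\beta\Phi(\y')}=\Big(\sum_{\x'\in S} e^{-\beta\Phi(\x')}\Big)\Big(\sum_{\y'\in S} e^{-\beta\Phi(\y')}\Big)=Z_1^2,$$
using that $Z_1=\sum_{\x\in S}e^{-\beta\Phi(\x)}$.

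There is no real obstacle here; the only point that needs care is that the reindexing is legitimate precisely because a decomposition is required to be a bijection of $S\times S$, and I would note explicitly that the extra symmetry conditions $\mu_1(\x,\y)=\mu_2(\y,\x)$ and $\mu_2(\x,\y)=\mu_1(\y,\x)$ are not used for this identity — they will be needed only later, when comparing expectations of observables. It is also worth remarking, as an immediate consequence, that dividing the decomposed expression for $K$ through by $Z_A=Z_1^2$ turns Corollary~\ref{cor:stationary} into $\pi_A(\x)=\sum_{\y}\pi_1(\mu_1(\x,\y))\,\pi_1(\mu_2(\x,\y))$, confirming that this product formula is a genuine probability distribution.
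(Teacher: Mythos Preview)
Your proof is correct and follows exactly the same route as the paper's: write $Z_A=\sum_{\x,\y}e^{-\beta K(\x,\y)}$, substitute the decomposition identity $K(\x,\y)=\Phi(\mu_1(\x,\y))+\Phi(\mu_2(\x,\y))$, reindex via the bijection $\mu$, and factor into $Z_1\cdot Z_1$. Your additional remarks---that the symmetry conditions on $\mu$ are not needed here and that the identity yields $\pi_A(\x)=\sum_\y\pi_1(\mu_1(\x,\y))\pi_1(\mu_2(\x,\y))$---are accurate and match observations the paper makes elsewhere.
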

\begin{proof}
From \eqref{eq:stationary} and from the fact that
$\mu$ is a permutation of $S\times S$,
we have
\[
        Z_A=\sum_{\x,\y} e^{-\beta K(\x,\y)}
           =\sum_{\x,\y} e^{-\beta [\Phi(\mu_1(\x,\y))+
                                      \Phi(\mu_2(\x,\y))]}
           =\sum_{\x,\y} e^{-\beta[\Phi(\x)+\Phi(\y)]}
           = Z_1^2.\qedhere
\]
\end{proof}

We next prove that for all local interaction games on a bipartite social graph there exists a decomposition.
We start by showing that we can decompose $K(\x,\y)$ in the contributions of each edge
of the social graph $G$ of the local interaction game $\G$.
Specifically, for strategy profiles $\x$ and $\y$
and edge $e=(u,v)$ of $G$ we define $K_e(\x,\y)$ as
\begin{equation}
\label{eq:ke}
 K_e(\x,\y) = \Phi_e(x_u,y_v) + \Phi_e(y_u,x_v).
\end{equation}
Then we have the following lemma that will be useful for giving a sufficient condition for having a decomposition.
\begin{lemma}
\label{lemma:K}
$$K(\x,\y) = \sum_e K_e(\x,\y).$$
\end{lemma}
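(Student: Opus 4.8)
The plan is to expand $K(\x,\y)$ using the definition in \eqref{eq:kappa} and then exploit the fact that in a local interaction game the potential decomposes edge-by-edge, $\Phi = \sum_e \Phi_e$, where each $\Phi_e$ with $e=(u,v)$ depends only on the coordinates $u$ and $v$. Substituting this decomposition into $K(\x,\y) = \sum_i \Phi(\x_{-i},y_i) - (n-2)\Phi(\x)$ and swapping the order of the two sums, it suffices to compute, for each fixed edge $e=(u,v)$, the quantity $\sum_{i=1}^n \Phi_e\big((\x_{-i},y_i)\big)$ and compare it against $(n-2)\Phi_e(x_u,x_v)$.

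First I would observe that, for a fixed edge $e=(u,v)$, the profile $(\x_{-i},y_i)$ restricted to the coordinates $\{u,v\}$ equals $(x_u,x_v)$ whenever $i\notin\{u,v\}$ (because changing coordinate $i$ does not touch $e$), equals $(y_u,x_v)$ when $i=u$, and equals $(x_u,y_v)$ when $i=v$. Hence $\sum_{i=1}^n \Phi_e\big((\x_{-i},y_i)\big) = (n-2)\Phi_e(x_u,x_v) + \Phi_e(y_u,x_v) + \Phi_e(x_u,y_v)$. Summing this over all edges $e$ and subtracting $(n-2)\Phi(\x) = (n-2)\sum_e \Phi_e(x_u,x_v)$, the $(n-2)\Phi_e(x_u,x_v)$ terms cancel edge-by-edge, leaving exactly $\sum_e \big(\Phi_e(y_u,x_v) + \Phi_e(x_u,y_v)\big) = \sum_e K_e(\x,\y)$ by the definition \eqref{eq:ke}.

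There is essentially no obstacle here: the only thing to be slightly careful about is the bookkeeping of how many of the $n$ terms in $\sum_i \Phi(\x_{-i},y_i)$ leave a given edge untouched (there are $n-2$ such terms, matching the coefficient in \eqref{eq:kappa}), which is precisely why the $(n-2)\Phi(\x)$ correction term is exactly the right amount to cancel. The argument uses nothing beyond the definition of a local interaction game and elementary rearrangement of finite sums.
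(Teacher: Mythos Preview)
Your proof is correct and follows essentially the same approach as the paper's own proof: both expand $\Phi$ as $\sum_e \Phi_e$, swap the order of summation, and observe that for a fixed edge $e=(u,v)$ exactly $n-2$ of the terms $\Phi_e(\x_{-i},y_i)$ equal $\Phi_e(x_u,x_v)$ while the remaining two give $\Phi_e(y_u,x_v)+\Phi_e(x_u,y_v)=K_e(\x,\y)$, so that the $(n-2)\Phi(\x)$ correction cancels edge-by-edge.
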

\begin{proof}
By definition $K(\x,\y)=\sum_i \Phi(\x_{-i},y_i)-(n-2)\Phi(\x)$. Then 
by expressing $\Phi$ as sum of the potential over the edges we have
$$K(\x,\y)=\sum_i\sum_e\Phi_e(\x_{-i},y_i)-(n-2)\Phi(\x).$$
Then observe that edge $e=(u,v)$ and each of the $(n-2)$ vertices 
$i\ne u,v$ contribute
$\Phi_e(x_u,x_v)$ to the sum. 
On the other hand, 
the total contribution for $e=(u,v)$ and $i=u,v$
is $K_e(\x,\y)=\Phi_e(y_u,x_v) + \Phi_e(x_u,y_v)$.
Therefore we obtain 
\begin{align*}
K(\x,\y)&=\sum_i\sum_{e=(u,v)}\Phi_e(\x_{-i},y_i)-(n-2)\Phi(\x)\\
        &=\sum_{e=(u,v)}\left[ (n-2) \Phi_e(x_u,x_v) + K_e(\x,\y)\right]
                -(n-2)\Phi(\x)\\
        &=\sum_{e}K_e(\x,\y).\qedhere
\end{align*}
%
\end{proof}

From Lemma~\ref{lemma:K}, we then achieve the following sufficient condition for a permutation to be a decomposition.
\begin{lemma}
\label{lemma:condition_kappa}
Let $\G$ be a local interaction game with potential $\Phi$ on a graph $G$.
Consider a permutation $(\x,\y)\mapsto(\tilde\x,\tilde\y)$ such that
for all $\x,\y\in S$ and for each edge $e=(u,v)$ of $G$
at least one of the following equalities holds
\begin{equation}
\label{eq:one}
(\tilde{x}_u,\tilde{x}_v,\tilde{y}_u,\tilde{y}_v)=(x_u,y_v,y_u,x_v), 
\end{equation}
\begin{equation}
\label{eq:two}
(\tilde{x}_u,\tilde{x}_v,\tilde{y}_u,\tilde{y}_v)=(y_u,x_v,x_u,y_v).
\end{equation}
Then, 
$K(\x,\y)=\phi(\tilde\x)+\phi(\tilde\y)$.
\end{lemma}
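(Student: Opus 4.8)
The plan is to reduce the claim to an edge-by-edge identity, exploiting the decomposition of both $K$ and $\Phi$ over the edges of $G$. By Lemma~\ref{lemma:K} we have $K(\x,\y)=\sum_{e}K_e(\x,\y)$, where $K_e(\x,\y)=\Phi_e(x_u,y_v)+\Phi_e(y_u,x_v)$ for $e=(u,v)$. On the other hand, since $\Phi=\sum_e\Phi_e$ and each $\Phi_e$ depends only on the strategies of the two endpoints of $e$, expanding gives $\Phi(\txx)+\Phi(\tyy)=\sum_{e=(u,v)}\bigl(\Phi_e(\tx_u,\tx_v)+\Phi_e(\ty_u,\ty_v)\bigr)$. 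Hence it suffices to prove that for every edge $e=(u,v)$ one has $\Phi_e(\tx_u,\tx_v)+\Phi_e(\ty_u,\ty_v)=K_e(\x,\y)$, and then sum over $e$ and invoke Lemma~\ref{lemma:K}.

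For a fixed edge $e=(u,v)$ I would split on which of the two hypotheses holds at $e$. If \eqref{eq:one} holds, then $(\tx_u,\tx_v)=(x_u,y_v)$ and $(\ty_u,\ty_v)=(y_u,x_v)$, so $\Phi_e(\tx_u,\tx_v)+\Phi_e(\ty_u,\ty_v)=\Phi_e(x_u,y_v)+\Phi_e(y_u,x_v)=K_e(\x,\y)$. If instead \eqref{eq:two} holds, then $(\tx_u,\tx_v)=(y_u,x_v)$ and $(\ty_u,\ty_v)=(x_u,y_v)$, so the same two terms appear in the opposite order and again the sum is $K_e(\x,\y)$. Since by hypothesis at least one of \eqref{eq:one}, \eqref{eq:two} holds at every edge, summing over all edges of $G$ gives $\Phi(\txx)+\Phi(\tyy)=\sum_e K_e(\x,\y)=K(\x,\y)$, which is the desired identity.

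There is no serious obstacle here: the statement is essentially bookkeeping on top of Lemma~\ref{lemma:K}. The only point worth a sentence is that $\Phi_e(\tx_u,\tx_v)$ genuinely depends on $\txx$ only through its $u$- and $v$-coordinates, so the per-edge constraints \eqref{eq:one}/\eqref{eq:two} — which fix precisely those coordinates — are exactly enough to evaluate each term $\Phi_e(\tx_u,\tx_v)$; no global information about the permutation $(\x,\y)\mapsto(\txx,\tyy)$ is needed, and different edges may use different options among \eqref{eq:one} and \eqref{eq:two} with no mutual consistency condition being invoked.
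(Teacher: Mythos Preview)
Your proof is correct and follows essentially the same approach as the paper: verify the per-edge identity $\Phi_e(\tx_u,\tx_v)+\Phi_e(\ty_u,\ty_v)=K_e(\x,\y)$ under either \eqref{eq:one} or \eqref{eq:two}, then sum over edges using Lemma~\ref{lemma:K}. Your write-up is in fact more explicit than the paper's, which compresses the whole argument into a single displayed line.
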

\begin{proof}
Observe that, 
if one of Equation~\eqref{eq:one} and \eqref{eq:two} holds,
$$\Phi_e(\tx_u,\tx_v)+ \Phi_e(\ty_u,\ty_v)= \Phi_e(x_u,y_v)+\Phi_e(y_u,x_v)=K_e(\x,\y).$$
The corollary then follows by summing over all edges $e$.
\end{proof}

We are now ready for the main result of the section.
 \begin{theorem}
\label{thm:mono_iff_bis}
Let $\G$ be a local interaction game on a bipartite graph $G$.
Then $\G$ admits a decomposition.
\end{theorem}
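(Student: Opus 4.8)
The plan is to write down an explicit decomposition obtained from the bipartition of the social graph, and then check it against every clause of Definition~\ref{def:dec}, using Lemma~\ref{lemma:condition_kappa} for the only nontrivial clause. Let $V=A\cup B$ be the bipartition of $G$, so that every edge of $G$ has exactly one endpoint in $A$ and one in $B$. Define $\mu\colon S\times S\to S\times S$ by $\mu(\x,\y)=(\txx,\tyy)$, where $\txx$ agrees with $\x$ on the coordinates belonging to $A$ and with $\y$ on the coordinates belonging to $B$, and symmetrically $\tyy$ agrees with $\y$ on $A$ and with $\x$ on $B$; that is, $\txx_A=\x_A$, $\txx_B=\y_B$, $\tyy_A=\y_A$, $\tyy_B=\x_B$. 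Informally, $\mu$ swaps the strategies of $\x$ and $\y$ on one side of the bipartition only.

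First I would check that $\mu$ is a permutation of $S\times S$. Applying $\mu$ a second time restores the $A$-coordinates (they were never touched) and also restores the $B$-coordinates (they are swapped back), so $\mu\circ\mu$ is the identity; hence $\mu$ is an involution and in particular a bijection of $S\times S$. Next, the two identities required of a decomposition, $\mu_1(\x,\y)=\mu_2(\y,\x)$ and $\mu_2(\x,\y)=\mu_1(\y,\x)$, are immediate from the definition, since exchanging the roles of $\x$ and $\y$ simply exchanges $\txx$ and $\tyy$.

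It then remains to verify that $K(\x,\y)=\Phi(\txx)+\Phi(\tyy)$ for all $\x,\y$. Fix an edge $e=(u,v)$ of $G$; as $G$ is bipartite we may assume $u\in A$ and $v\in B$. By construction, $(\tilde{x}_u,\tilde{x}_v,\tilde{y}_u,\tilde{y}_v)=(x_u,y_v,y_u,x_v)$, which is exactly Equation~\eqref{eq:one} (if instead $u\in B$ and $v\in A$ one gets Equation~\eqref{eq:two}). Since at least one of \eqref{eq:one} and \eqref{eq:two} holds for every edge, Lemma~\ref{lemma:condition_kappa} yields $K(\x,\y)=\Phi(\txx)+\Phi(\tyy)=\Phi(\mu_1(\x,\y))+\Phi(\mu_2(\x,\y))$. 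Together with the previous paragraph this shows that $\mu$ is a decomposition for $\G$.

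I do not expect a genuine obstacle here: once the bipartition is used to define the ``swap on one side'' map, every condition of Definition~\ref{def:dec} falls out mechanically, and the only substantial ingredient, the additive formula for $K$, has already been isolated in Lemma~\ref{lemma:condition_kappa} (itself a consequence of the edge decomposition $K=\sum_e K_e$ of Lemma~\ref{lemma:K}). The single point worth stating carefully is that $\mu$ is a permutation of $S\times S$ rather than of $S$; observing that $\mu$ is an involution is the cleanest way to see this.
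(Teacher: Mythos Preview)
Your proposal is correct and follows essentially the same route as the paper's own proof: both define the decomposition by swapping the two profiles on one side of the bipartition, observe that this map is an involution (hence a permutation) satisfying the symmetry conditions, and then invoke Lemma~\ref{lemma:condition_kappa} by checking that each edge, having one endpoint in each part, satisfies \eqref{eq:one} or \eqref{eq:two}. The only cosmetic difference is notation ($A,B$ versus the paper's $L,R$).
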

\begin{proof}
Let $(L,R)$ be the sets of vertices in which $G$ is bipartite.
For each $(\x,\y)\in S\times S$ define 
\begin{equation}
 \label{eq:decomp}
 \txx=\mu_1(\x,\y)=(\x_L, \y_R) \qquad \text{and} \qquad \tyy = \mu_2(\x,\y) = (\y_L,\x_R).
\end{equation}

First of all,
observe that the mapping is an involution and thus it is also a permutation
and that
$\mu_1(\x,\y)=\mu_2(\y,\x)$ and $\mu_2(\x,\y)=\mu_1(\y,\x)$.
Since $G$ is bipartite, for every edge~$(u,v)$
exactly one endpoint is in $L$
and exactly one is in $R$.
If $u \in L$, then we have that
$(\tx_u,\tx_v,\ty_u,\ty_v)=(x_u,y_v,y_u,x_v)$
and thus \eqref{eq:one} is satisfied.
If instead $u \in R$, then we have that
$(\tx_u,\tx_v,\ty_u,\ty_v)=(y_u,x_v,x_u,y_v)$
and thus \eqref{eq:two} is satisfied.
Therefore for each edge one of \eqref{eq:one} and \eqref{eq:two} is satisfied.
By Lemma~\ref{lemma:condition_kappa}, we can conclude that the mapping is a decomposition.
\end{proof}

Consider a local interaction game on a bipartite graph $G = (L, R, E)$ and
let us denote by $L(\x)$ (respectively, $R(\x)$) 
the set of profiles agreeing with $\x$ for every 
vertex of $L$ (respectively, of $R$).
That is, 
$L(\x)=\{\y\colon\y_L=\x_L\}$ and 
$R(\x)=\{\y\colon\y_R=\x_R\}$.
The following corollary of Theorem~\ref{thm:bipdec} and Theorem~\ref{thm:mono_iff_bis}
proves an interesting characterization of the stationary distribution
of the all-logit dynamics for local interaction games on bipartite graphs
that might be of independent interest.
\begin{cor}
\label{cor:station_bip}
 Let $\G$ be a local interaction game on a bipartite graph $G = (L, R, E)$.
 Then for each profile $\x$ we have
 $$
  \pi_A(\x) = \pi_1(L(\x)) \cdot \pi_1(R(\x)).
 $$
\end{cor}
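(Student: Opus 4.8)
The plan is to combine the closed form for $\pi_A$ from Corollary~\ref{cor:stationary}, the explicit bipartite decomposition constructed in the proof of Theorem~\ref{thm:mono_iff_bis}, and the identity $Z_A = Z_1^2$ from Theorem~\ref{thm:bipdec}. By Corollary~\ref{cor:stationary} we may write $\pi_A(\x) = \frac{1}{Z_A}\sum_{\y \in S} e^{-\beta K(\x,\y)}$, and by Theorem~\ref{thm:bipdec} the denominator equals $Z_1^2$. So everything reduces to evaluating the numerator $\sum_{\y} e^{-\beta K(\x,\y)}$ and showing it equals $Z_1^2 \cdot \pi_1(L(\x))\cdot\pi_1(R(\x))$.

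To evaluate the numerator, I would use the decomposition $\mu$ exhibited in Theorem~\ref{thm:mono_iff_bis}, namely $\mu_1(\x,\y) = (\x_L,\y_R)$ and $\mu_2(\x,\y) = (\y_L,\x_R)$, which by Definition~\ref{def:dec} gives $K(\x,\y) = \Phi(\x_L,\y_R) + \Phi(\y_L,\x_R)$. Substituting this, the sum over $\y \in S$ factors: writing $\y = (\y_L,\y_R)$ and noting that $\Phi(\x_L,\y_R)$ depends only on $\y_R$ while $\Phi(\y_L,\x_R)$ depends only on $\y_L$, I get
$$
\sum_{\y \in S} e^{-\beta K(\x,\y)} = \left(\sum_{\y_R} e^{-\beta \Phi(\x_L,\y_R)}\right)\left(\sum_{\y_L} e^{-\beta \Phi(\y_L,\x_R)}\right).
$$
The first factor ranges precisely over the profiles $\z$ with $\z_L = \x_L$, i.e.\ over $L(\x)$, so it equals $\sum_{\z \in L(\x)} e^{-\beta\Phi(\z)} = Z_1 \cdot \pi_1(L(\x))$; symmetrically the second factor equals $Z_1 \cdot \pi_1(R(\x))$. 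Hence the numerator is $Z_1^2\,\pi_1(L(\x))\,\pi_1(R(\x))$, and dividing by $Z_A = Z_1^2$ yields the claim.

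There is no real obstacle here: the statement is essentially a bookkeeping consequence of the three earlier results, and the only point requiring a moment's care is the factorization of $\sum_{\y}$ into a product over $\y_L$ and $\y_R$, which is legitimate exactly because the bipartition makes $\Phi(\x_L,\y_R)$ and $\Phi(\y_L,\x_R)$ depend on disjoint blocks of coordinates. Alternatively, one could observe directly that $\pi_A(\x) = \sum_\y \pi_1(\mu_1(\x,\y))\pi_1(\mu_2(\x,\y))$ (noted right after Definition~\ref{def:dec}) and then group the sum by the value of $\mu_1(\x,\y)_R$ and $\mu_2(\x,\y)_L$; this is the same computation phrased probabilistically, and I would present whichever version is cleaner in context.
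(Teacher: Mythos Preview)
Your proof is correct and follows essentially the same approach as the paper: both combine the closed form for $\pi_A$ from Corollary~\ref{cor:stationary}, the bipartite decomposition $\mu_1(\x,\y)=(\x_L,\y_R)$, $\mu_2(\x,\y)=(\y_L,\x_R)$ from Theorem~\ref{thm:mono_iff_bis}, and the identity $Z_A=Z_1^2$ from Theorem~\ref{thm:bipdec}. The only cosmetic difference is that the paper starts from $\pi_1(L(\x))\cdot\pi_1(R(\x))$ and sets up an explicit bijection $L(\x)\times R(\x)\to S$ via $(\y,\z)\mapsto(\z_L,\y_R)$, whereas you start from $\pi_A(\x)$ and factor the sum over $\y=(\y_L,\y_R)$ directly; these are the same computation read in opposite directions.
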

\begin{proof}
 Observe that
 $$
  \pi_1(L(\x)) \cdot \pi_1(R(\x)) = \frac{1}{Z_1^2} \sum_{(\y, \z) \in L(\x) \times R(\x)} e^{-\beta \left(\Phi(\y) + \Phi(\z)\right)}.
 $$
 For each pair $(\y, \z) \in L(\x) \times R(\x)$, consider the profile $\w_{\y,\z} = (\z_L, \y_R)$.
 Then, $(\y, \z) = \mu(\x, \w_{\y,\z})$, where $\mu$ is the decomposition~\eqref{eq:decomp}.
 Note that the correspondence between pairs $(\y, \z) \in L(\x) \times R(\x)$ and profiles $\w$ is actually a bijection.
 Indeed, for each profile $\w \in S$, the pair $(\mu_1(\x,\w),\mu_2(\x,\w))$ belongs to $L(\x) \times R(\x)$.
 Hence and from Theorem~\ref{thm:bipdec}, it follows that
 \[
 \pi_1(L(\x)) \cdot \pi_1(R(\x)) = \frac{1}{Z_A} \sum_{\w} e^{-\beta \left(\Phi(\mu_1(\x,\w)) + \Phi(\mu_2(\x,\w))\right)} = \frac{1}{Z_A} \sum_{\w} e^{-\beta K(\x,\w)} = \pi_A(\x).\qedhere
 \]
\end{proof}
We now define the concept of a decomposable observable.
\begin{definition}
\label{def:obs}
An observable $O$ is {\em decomposable} for local interaction game 
$\G$ if there exists a decomposition $\mu$ of $\G$ such that,
    for all $(\x,\y)$, we have that
        $$O(\x)+O(\y)=O(\mu_1(\x,\y))+O(\mu_2(\x,\y)).$$
\end{definition}
We next prove that a decomposable observable has the same expectation
at stationarity of the one-logit and the all-logit.
\begin{theorem}
\label{thm:suff}
If observable $O$ is decomposable then
$$\langle O,\pi_1\rangle=\langle O,\pi_A\rangle.$$
\end{theorem}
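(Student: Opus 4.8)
The plan is to compute the expectation $\langle O, \pi_A\rangle$ by writing $\pi_A$ in terms of the decomposition $\mu$ and then rearranging the sum so that it factors into a product of one-logit expectations. Concretely, I would start from the identity observed right after Definition~\ref{def:dec}, namely
$$\pi_A(\x)=\sum_{\y\in S}\pi_1(\mu_1(\x,\y))\cdot\pi_1(\mu_2(\x,\y)),$$
so that
$$\langle O,\pi_A\rangle=\sum_{\x\in S}O(\x)\,\pi_A(\x)=\sum_{\x,\y\in S}O(\x)\,\pi_1(\mu_1(\x,\y))\,\pi_1(\mu_2(\x,\y)).$$

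Next I would symmetrize. Since $\mu$ is a permutation of $S\times S$ with $\mu_1(\x,\y)=\mu_2(\y,\x)$ and $\mu_2(\x,\y)=\mu_1(\y,\x)$, swapping the roles of $\x$ and $\y$ in the double sum leaves the product $\pi_1(\mu_1(\x,\y))\,\pi_1(\mu_2(\x,\y))$ unchanged while replacing $O(\x)$ by $O(\y)$. Averaging the two expressions gives
$$\langle O,\pi_A\rangle=\frac12\sum_{\x,\y\in S}\bigl(O(\x)+O(\y)\bigr)\,\pi_1(\mu_1(\x,\y))\,\pi_1(\mu_2(\x,\y)).$$
Now I invoke the hypothesis that $O$ is decomposable with respect to this very $\mu$: $O(\x)+O(\y)=O(\mu_1(\x,\y))+O(\mu_2(\x,\y))$. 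Substituting and then using that $\mu$ is a bijection of $S\times S$ (so I may reindex the sum by $(\u,\v)=(\mu_1(\x,\y),\mu_2(\x,\y))$, which ranges over all of $S\times S$), the expression becomes
$$\langle O,\pi_A\rangle=\frac12\sum_{\u,\v\in S}\bigl(O(\u)+O(\v)\bigr)\,\pi_1(\u)\,\pi_1(\v)=\frac12\Bigl(\sum_\u O(\u)\pi_1(\u)\sum_\v\pi_1(\v)+\sum_\u\pi_1(\u)\sum_\v O(\v)\pi_1(\v)\Bigr).$$
Since $\pi_1$ is a probability distribution, $\sum_\v\pi_1(\v)=1$, and both terms equal $\langle O,\pi_1\rangle$, yielding $\langle O,\pi_A\rangle=\langle O,\pi_1\rangle$.

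The only genuinely delicate point, and the one I would be most careful about, is the reindexing step: one must check that the map $(\x,\y)\mapsto(\mu_1(\x,\y),\mu_2(\x,\y))$ is exactly the permutation $\mu$ itself (it is, by the definition of $\mu$ as the map $(\x,\y)\mapsto(\mu_1(\x,\y),\mu_2(\x,\y))$), so that summing a function of $(\mu_1,\mu_2)$ over all $(\x,\y)$ equals summing that function over all pairs. Everything else is a routine manipulation of finite sums. I would also note that the proof does not need the edge-level structure (Lemma~\ref{lemma:K} or Lemma~\ref{lemma:condition_kappa}) at all — it uses only the abstract properties of a decomposition from Definition~\ref{def:dec} together with the decomposability of $O$ — which is worth remarking on since it makes the statement applicable to any decomposition, not just the bipartite one of Theorem~\ref{thm:mono_iff_bis}.
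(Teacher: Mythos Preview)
Your proposal is correct and follows essentially the same approach as the paper's proof: expand $\langle O,\pi_A\rangle$ via the identity $\pi_A(\x)=\sum_\y\pi_1(\mu_1(\x,\y))\pi_1(\mu_2(\x,\y))$, symmetrize in $(\x,\y)$ using $\mu_1(\x,\y)=\mu_2(\y,\x)$, apply decomposability of $O$, and reindex using that $\mu$ is a permutation of $S\times S$. Your extra remarks on the reindexing step and on the proof not needing the edge-level lemmas are accurate and in fact make explicit points the paper leaves implicit.
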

\begin{proof}
Suppose that $O$ is decomposed by $\mu$.
Then we have that, for all $\x\in S$,
$\pi_A(\x)=\sum_\y\pi_1(\mu_1(\x,\y))\cdot\pi_1(\mu_2(\x,\y))$ and
thus
$$
\begin{aligned}
\mobs{O}{\pi_A}&=\sum_\x O(\x)\cdot\pi_A(\x)\\
               &=\sum_{\x,\y} O(\x)\cdot
                        \pi_1(\mu_1(\x,\y))\cdot\pi_1(\mu_2(\x,\y))\\
               &=\frac{1}{2}\sum_{\x,\y}
        \left[O(\x)+O(\y)\right] \cdot\pi_1(\mu_1(\x,\y))\cdot
                                      \pi_1(\mu_2(\x,\y))\\
\end{aligned}
$$
In the last equality we have used that
$\mu_1(\x,\y)=\mu_2(\y,\x)$ and
$\mu_2(\x,\y)=\mu_1(\y,\x)$ which implies that
$$
    \sum_{\x,\y} O(\x)\cdot
        \pi_1(\mu_1(\x,\y))\cdot
        \pi_1(\mu_2(\x,\y))
=
    \sum_{\x,\y} O(\y)\cdot
        \pi_1(\mu_1(\x,\y))\cdot
        \pi_1(\mu_2(\x,\y)).
$$
Now, since $O$ is decomposable we have that
$O(\x)+O(\y)=O(\mu_1(\x,\y))+O(\mu_2(\x,\y))$ and thus we can write
\begin{align*}
\mobs{O}{\pi_A}&=
\frac{1}{2}\sum_{\x,\y}
        \left[O(\mu_1(\x,\y))+O(\mu_2(\x,\y))\right]
                  \cdot\pi_1(\mu_1(\x,\y))\cdot\pi_1(\mu_2(\x,\y))\\
               &=
\frac{1}{2}\sum_{\x,\y}
        \left[O(\x)+O(\y)\right]
                \cdot\pi_1(\x)\cdot\pi_1(\y)\\
               &=
\sum_{\x,\y}
        O(\x)\cdot\pi_1(\x)\cdot\pi_1(\y)\\
               &=
\sum_{\x}
        O(\x)\cdot\pi_1(\x)\cdot\sum_\y \pi_1(\y)\\
              &=\mobs{O}{\pi_1}.\qedhere
\end{align*}
\end{proof}
We now give examples of decomposable observables.
\paragraph{The $\Diff$ observable.}
Let us consider the case that players have only two available strategies, namely $-1$ and $+1$.
We consider the observable $\Diff$ that
returns the (signed) difference between
the number of vertices adopting the strategy $-1$ and
the number of vertices adopting strategy $+1$.  That is, $\Diff(\x)=\sum_u x_u$.
In local interaction games used to model the diffusion of innovation in social networks and the spread
of new technology (see, for example,~\cite{youngTR00}), this observable is a measure of how wide is the adoption of the innovation.
The $\Diff$ observable is also meaningful in the
Ising model for ferromagnetism (see, for example,~\cite{Mart1999}) as it
is the measured magnetism.

To prove that $\Diff$ is decomposable we consider the mapping~\eqref{eq:decomp} and
observe that, for every vertex $u$ and for every $(\x,\y)\in S\times S$,
we have $x_u+y_u=\tx_u+\ty_u$. Whence we conclude that $\Diff(\x)+\Diff(\y)=\Diff(\txx)+\Diff(\tyy)$.

\paragraph{The $\MonoC$ observable.}
Another interesting decomposable observable is the signed difference
$\MonoC$ between the number of ``$-1$''-monochromatic edges of the social graph
(that is, edges in which both endpoints play $-1$) and the
number of ``$+1$''-monochromatic edges.
That is, $\MonoC(\x)=\frac{1}{2} \sum_{(u,v)\in E} \left(x_u+x_v\right)$.
Again, we consider the mapping~\eqref{eq:decomp} and the
decomposability of $\MonoC$ follows from the property that, for every $(\x,\y)\in S\times S$,
we have $x_u+y_u=\tx_u+\ty_u$.

\begin{cor}
Observables $\Diff$ and $\MonoC$ are decomposable and thus, for local interaction games on bipartite social graphs,
$$\mobs{\Diff}{\pi_1}=
  \mobs{\Diff}{\pi_A}
\qquad{\rm and}\qquad
 \mobs{\MonoC}{\pi_1}=
  \mobs{\MonoC}{\pi_A}.$$
\end{cor}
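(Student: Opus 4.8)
The plan is to exhibit, for each of the two observables, a decomposition of the underlying local interaction game that also decomposes the observable in the sense of Definition~\ref{def:obs}, and then to invoke Theorem~\ref{thm:suff}. Since the social graph $G=(L,R,E)$ is bipartite, Theorem~\ref{thm:mono_iff_bis} already supplies a decomposition, namely the involution $\mu$ of~\eqref{eq:decomp} sending $(\x,\y)$ to $(\txx,\tyy)$ with $\txx=(\x_L,\y_R)$ and $\tyy=(\y_L,\x_R)$. So the only thing left to check is that this particular $\mu$ decomposes $\Diff$ and $\MonoC$.

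The one elementary fact driving both verifications is that $\mu$ only swaps coordinates between $\x$ and $\y$: for every vertex $u$ one has $(\tx_u,\ty_u)=(x_u,y_u)$ if $u\in L$ and $(\tx_u,\ty_u)=(y_u,x_u)$ if $u\in R$, so in either case $x_u+y_u=\tx_u+\ty_u$. First I would use this to handle $\Diff$: summing the identity $x_u+y_u=\tx_u+\ty_u$ over all vertices gives $\Diff(\x)+\Diff(\y)=\sum_u(x_u+y_u)=\sum_u(\tx_u+\ty_u)=\Diff(\txx)+\Diff(\tyy)$, which is exactly the condition of Definition~\ref{def:obs}. For $\MonoC$ the argument is the same one level up: for each edge $(u,v)\in E$, adding the identities for $u$ and for $v$ yields $x_u+x_v+y_u+y_v=\tx_u+\tx_v+\ty_u+\ty_v$; summing over $E$ and dividing by $2$ gives $\MonoC(\x)+\MonoC(\y)=\MonoC(\txx)+\MonoC(\tyy)$. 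Hence both observables are decomposable, and Theorem~\ref{thm:suff} immediately delivers $\mobs{\Diff}{\pi_1}=\mobs{\Diff}{\pi_A}$ and $\mobs{\MonoC}{\pi_1}=\mobs{\MonoC}{\pi_A}$.

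I do not expect a genuine obstacle here: all the weight of the result sits in the construction of the decomposition for bipartite graphs (Theorem~\ref{thm:mono_iff_bis}) and in the transfer principle (Theorem~\ref{thm:suff}), both already established; what remains is the coordinate-swap bookkeeping sketched above. The only point deserving a moment's care is that the \emph{same} permutation $\mu$ must serve simultaneously as a decomposition of the game and of the observable, but since $\mu$ from~\eqref{eq:decomp} is fixed once the bipartition $(L,R)$ is fixed and does not depend on the observable, this is automatic.
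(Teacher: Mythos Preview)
Your proposal is correct and follows essentially the same approach as the paper: use the decomposition $\mu$ of~\eqref{eq:decomp} provided by Theorem~\ref{thm:mono_iff_bis}, note the coordinate identity $x_u+y_u=\tx_u+\ty_u$ for every vertex $u$, and sum over vertices (for $\Diff$) or edges (for $\MonoC$) before invoking Theorem~\ref{thm:suff}. If anything, you spell out the edge-level step for $\MonoC$ more explicitly than the paper does.
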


\subsection{General graphs}
Let us start by slightly generalizing concepts of decomposition and decomposable observable.
\begin{definition}
A permutation $$\mu\colon (\x,\y)\mapsto(\mu_1(\x,\y),\mu_2(\x,\y))$$
    of $S\times S$ is an
    {\em $\alpha$-decomposition} for a local interaction
    game $\G$ with potential $\Phi$ if,
    for all $(\x,\y)$, we have that
        $$\left|K(\x,\y)-\Phi(\mu_1(\x,\y))-\Phi(\mu_2(\x,\y))\right| \leq \alpha,$$
$\mu_1(\x,\y)=\mu_2(\y,\x)$ and $\mu_2(\x,\y)=\mu_1(\y,\x)$.
\end{definition}
Note that a decomposition is actually a $0$-decomposition
(see Definition~\ref{def:dec}).

\begin{definition}
An observable $O$ is {\em $\alpha$-decomposable} if it is decomposed by an
$\alpha$-decomposition.
\end{definition}

We prove that for all local interaction games there exists
an $\alpha$-decomposition with $\alpha$ depending only on how far away the social
graph $G$ is from being bipartite.
Specifically, for each edge~$e$ of the social graph we define the \emph{weight} $w_e = \max_{\x, \y \in \G_e} \left(\Phi_e(\x) - \Phi_e(\y)\right)$, i.e., $w_e$ is the maximum difference in the potential $\Phi_e$ of the two-player game $\G_e$.
We say that a subset of edges of $G$ is \emph{bipartiting} if the removal of these edges makes the graph bipartite. We will denote with $B(G)$ the bipartiting subset of minimum weight and with $b(G)$ its weight. We have then the following theorem.
 \begin{theorem}
Let $\G$ be a social interaction game on a graph $G$.
Then, for any $\alpha \geq 2 \cdot b(G)$, $\G$ admits an $\alpha$-decomposition.
\end{theorem}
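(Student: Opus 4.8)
The plan is to reuse verbatim the involution constructed in the proof of Theorem~\ref{thm:mono_iff_bis}, but applied to the maximal bipartite subgraph obtained by deleting a minimum-weight bipartiting edge set, and then to pay separately for the deleted edges. Concretely, let $B = B(G)$ be the bipartiting subset of minimum weight, so $b(G) = \sum_{e \in B} w_e$, and let $G' = (V, E \setminus B)$, which is bipartite by definition of a bipartiting set; fix any bipartition $(L,R)$ of $G'$ (placing vertices not touched by $E \setminus B$ arbitrarily). Define $\txx = \mu_1(\x,\y) = (\x_L, \y_R)$ and $\tyy = \mu_2(\x,\y) = (\y_L, \x_R)$ exactly as in~\eqref{eq:decomp}. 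As in the proof of Theorem~\ref{thm:mono_iff_bis}, $\mu$ is an involution, hence a permutation of $S \times S$, and it satisfies $\mu_1(\x,\y) = \mu_2(\y,\x)$ and $\mu_2(\x,\y) = \mu_1(\y,\x)$; so the only thing left to verify is the inequality $\left|K(\x,\y) - \Phi(\txx) - \Phi(\tyy)\right| \leq 2 b(G) \leq \alpha$.

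To get this bound I would pass to a per-edge estimate. By Lemma~\ref{lemma:K} we have $K(\x,\y) = \sum_e K_e(\x,\y)$, and, writing $\Phi = \sum_e \Phi_e$, also $\Phi(\txx) + \Phi(\tyy) = \sum_e \left(\Phi_e(\tx_u,\tx_v) + \Phi_e(\ty_u,\ty_v)\right)$ for $e = (u,v)$. I then split the edge sum into $e \in E \setminus B$ and $e \in B$. For $e = (u,v) \in E \setminus B$ the edge lies in the bipartite graph $G'$, so one endpoint is in $L$ and the other in $R$; the computation in the proof of Theorem~\ref{thm:mono_iff_bis} (equivalently, Lemma~\ref{lemma:condition_kappa}) shows that then $\Phi_e(\tx_u,\tx_v) + \Phi_e(\ty_u,\ty_v) = \Phi_e(x_u,y_v) + \Phi_e(y_u,x_v) = K_e(\x,\y)$, so these edges contribute exactly $0$ to the discrepancy.

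For a deleted edge $e = (u,v) \in B$ I would simply bound the per-edge discrepancy crudely by $2 w_e$: both $K_e(\x,\y) = \Phi_e(x_u,y_v) + \Phi_e(y_u,x_v)$ and $\Phi_e(\tx_u,\tx_v) + \Phi_e(\ty_u,\ty_v)$ are sums of two values of $\Phi_e$, and any two values of $\Phi_e$ differ by at most $w_e = \max_{\x,\y}(\Phi_e(\x) - \Phi_e(\y))$, so pairing up the four values gives $\left|K_e(\x,\y) - \Phi_e(\tx_u,\tx_v) - \Phi_e(\ty_u,\ty_v)\right| \leq 2 w_e$ irrespective of whether $u,v$ landed on the same side or on opposite sides of $(L,R)$. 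Summing over all edges, $\left|K(\x,\y) - \Phi(\txx) - \Phi(\tyy)\right| \leq \sum_{e \in B} 2 w_e = 2 b(G) \leq \alpha$, which shows $\mu$ is an $\alpha$-decomposition.

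The only mildly delicate point — the step I would be most careful about — is the bookkeeping for a deleted edge whose two endpoints end up on the same side of the bipartition: there one should write out the two sub-cases $u,v \in L$ and $u,v \in R$, identify $(\tx_u,\tx_v,\ty_u,\ty_v)$ explicitly in each, and check that the grouping of the four $\Phi_e$-values into two differences is the one that yields the $2 w_e$ bound (one can even observe the discrepancy is $0$ unless $\x$ and $\y$ differ at both $u$ and $v$). Everything else is a direct reduction to the bipartite case already handled in Theorem~\ref{thm:mono_iff_bis}.
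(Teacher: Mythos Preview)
Your proposal is correct and follows essentially the same approach as the paper: the same permutation $\mu$ built from a bipartition of $G\setminus B(G)$, the same edge-wise splitting via Lemma~\ref{lemma:K}, and the same conclusion that non-deleted edges contribute zero discrepancy while each deleted edge contributes at most $2w_e$. The only difference is cosmetic: the paper first observes that for $e\in B$ both endpoints lie on the same side of $(L,R)$ (a consequence of minimality of $B(G)$), identifies $\Phi_e(\tx_u,\tx_v)+\Phi_e(\ty_u,\ty_v)=\Phi_e(x_u,x_v)+\Phi_e(y_u,y_v)$, and then runs a four-case analysis to get the $2w_e$ bound, whereas you bypass this with the direct ``two differences of $\Phi_e$-values'' estimate --- which is shorter and, as you note, does not even require knowing on which sides $u$ and $v$ sit.
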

\begin{proof}
 Let us name as $G'=(L,R,E')$ the bipartite graph obtained by deleting from $G$ the edges of $B(G)$ and consider the mapping~\eqref{eq:decomp}. We know this mapping is actually a permutation and $\mu_1(\x,\y)=\mu_2(\y,\x)$ and $\mu_2(\x,\y)=\mu_1(\y,\x)$. We will show that, for every $\x,\y$
 \begin{equation}
 \label{eq:K_generalG}
  | K(\x,\y) - \Phi(\txx) - \Phi(\tyy) | \leq 2 \cdot b(G),
 \end{equation}
 where $\txx = \mu_1(\x,\y)$ and $\tyy = \mu_2(\x,\y)$.

 Observe that $K(\x,\y) = \sum_{e \in E'} K_e(\x,\y) + \sum_{e \in E \setminus E'} K_e(\x,\y)$. From Theorem~\ref{thm:mono_iff_bis}, for each edge~$e=(u,v) \in E'$ we have $K_e(\x,\y) = \Phi_e(\tx_u,\tx_v) + \Phi_e(\ty_u,\ty_v)$. As for each edge~$e = (u,v) \in E \setminus E'$ we have that the endpoints are either both in $L$ or both in $R$. In both cases, it turns out that
 $$
  \Phi_e(\tx_u,\tx_v) + \Phi_e(\ty_u,\ty_v) = \Phi_e(x_u,x_v) + \Phi_e(y_u,y_v).
 $$
 Then we distinguish four cases:
 \begin{enumerate}
\item $y_u=x_u$ and $y_v=x_v$.
In this case $K_e(\x,\y)=2\cdot \Phi_e(x_u,x_v)$ and thus
$K_e(\x,\y)=\Phi_e(\tx_u,\tx_v)+ \Phi_e(\ty_u,\ty_v)$.

\item $y_u \neq x_u$ and $y_v=x_v$.
In this case $K_e(\x,\y)=\Phi_e(x_u,x_v)+\Phi_e(y_u,x_v)$ and
thus $K_e(\x,\y)=\Phi_e(\tx_u,\tx_v)+ \Phi_e(\ty_u,\ty_v)$.

\item $y_u=x_u$ and $y_v \neq x_v$.
In this case $K_e(\x,\y)=\Phi_e(x_u,x_v)+\Phi_e(x_u,y_v)$ and
thus $K_e(\x,\y)=\Phi_e(\tx_u,\tx_v)+ \Phi_e(\ty_u,\ty_v)$.

\item $y_u \neq x_u$ and $y_v \neq x_v$.
In this case $K_e(\x,\y)=\Phi_e(y_u,x_v)+\Phi_e(x_u,y_v)$. Since $|\Phi_e(x_u,x_v) - \Phi_e(y_u,x_v)| \leq w_e$ and $|\Phi_e(y_u,y_v) - \Phi_e(x_u,y_v)| \leq w_e$, then
$$
 |K_e(\x,\y)-\Phi_e(\tx_u,\tx_v)-\Phi_e(\ty_u,\ty_v)| \leq 2w_e.
$$
\end{enumerate}
By summing the contribution of every edge we achieve \eqref{eq:K_generalG}.
\end{proof}
Note that the quantity $b(G)$ is related to the \emph{bipartiteness ratio} of $G$ which in turn is related
to the smallest eigenvalue of $G$ \cite{trevisanSTOC09}.

Finally, we next prove that for an $\alpha$-decomposable observable the extent at which the expectations
at stationarity for the one-logit and the all-logit differ depends only on $\alpha$ and $\beta$.
\begin{theorem}
If observable $O$ is decomposable then
$$e^{-2\alpha\beta} \cdot \langle O,\pi_1\rangle \leq \langle O,\pi_A\rangle \leq e^{2\alpha\beta} \cdot \langle O,\pi_1\rangle.$$
\end{theorem}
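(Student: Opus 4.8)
The plan is to mimic the proof of Theorem~\ref{thm:suff}, weakening the two exact uses of the identity $K(\x,\y)=\Phi(\mu_1(\x,\y))+\Phi(\mu_2(\x,\y))$ to the approximate identity $|K(\x,\y)-\Phi(\mu_1(\x,\y))-\Phi(\mu_2(\x,\y))|\leq\alpha$ guaranteed by an $\alpha$-decomposition (here ``decomposable'' is read as ``$\alpha$-decomposable'', and $O$ is taken to be nonnegative, which is what makes the multiplicative comparison meaningful). Fix an $\alpha$-decomposition $\mu$ of $\G$ that $\alpha$-decomposes $O$ and set
\[
\tilde\pi(\x)=\sum_{\y\in S}\pi_1(\mu_1(\x,\y))\cdot\pi_1(\mu_2(\x,\y)).
\]
Since $\mu$ is a permutation of $S\times S$ we have $\sum_{\x}\tilde\pi(\x)=\sum_{\x,\y}\pi_1(\x)\pi_1(\y)=1$, so $\tilde\pi$ is a probability distribution on $S$. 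The proof then reduces to two independent facts: (i) $\langle O,\tilde\pi\rangle=\langle O,\pi_1\rangle$, and (ii) $e^{-2\alpha\beta}\,\tilde\pi(\x)\leq\pi_A(\x)\leq e^{2\alpha\beta}\,\tilde\pi(\x)$ for every $\x\in S$. Multiplying (ii) by $O(\x)\geq 0$, summing over $\x$, and plugging in (i) yields the theorem.

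Fact (i) is exactly the computation carried out in the proof of Theorem~\ref{thm:suff}, with $\pi_A$ there replaced by $\tilde\pi$: that computation uses only that $\mu$ is a permutation, that $\mu_1(\x,\y)=\mu_2(\y,\x)$ and $\mu_2(\x,\y)=\mu_1(\y,\x)$ (to replace $O(\x)$ by $\frac{1}{2}(O(\x)+O(\y))$ under the sum), and that $O(\x)+O(\y)=O(\mu_1(\x,\y))+O(\mu_2(\x,\y))$. None of these steps involves the potential, so the $\alpha$-slack is irrelevant here and (i) holds with equality.

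For fact (ii), recall $\pi_A(\x)=\gamma_A(\x)/Z_A$ with $\gamma_A(\x)=\sum_{\y}e^{-\beta K(\x,\y)}$ and $Z_A=\sum_{\x}\gamma_A(\x)$, and note that $Z_1^2\,\tilde\pi(\x)=\sum_{\y}e^{-\beta(\Phi(\mu_1(\x,\y))+\Phi(\mu_2(\x,\y)))}$ is the ``idealized'' value of $\gamma_A(\x)$. Applying the $\alpha$-decomposition bound term by term in $\y$ gives $e^{-\alpha\beta}\,Z_1^2\,\tilde\pi(\x)\leq\gamma_A(\x)\leq e^{\alpha\beta}\,Z_1^2\,\tilde\pi(\x)$; summing over $\x$ and using $\sum_{\x}\tilde\pi(\x)=1$ gives $e^{-\alpha\beta}Z_1^2\leq Z_A\leq e^{\alpha\beta}Z_1^2$. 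Dividing the bound on $\gamma_A(\x)$ by the bound on $Z_A$, in the directions that make the ratio extremal, yields (ii).

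The computations are all routine; the only point requiring care is bookkeeping — choosing the right direction of each inequality when forming the quotient $\gamma_A(\x)/Z_A$, and keeping the hypothesis $O\geq 0$ in force (without it the termwise estimate after multiplying by $O(\x)$ breaks, and the statement must be interpreted on nonnegative observables only). There is no genuine obstacle beyond Theorem~\ref{thm:suff}: the whole content is that an $\alpha$-decomposition pins both $\gamma_A$ and $Z_A$ multiplicatively to the same reference quantity $Z_1^2\tilde\pi$, which already integrates $O$ against $\pi_1$ exactly.
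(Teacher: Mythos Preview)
Your argument is correct and follows the same route as the paper's proof: bound $\gamma_A(\x)$ and $Z_A$ within a factor $e^{\pm\alpha\beta}$ of $Z_1^2\tilde\pi(\x)$ and $Z_1^2$ respectively (mimicking Theorem~\ref{thm:bipdec}), then rerun the computation of Theorem~\ref{thm:suff}. Your introduction of the intermediate distribution $\tilde\pi$ makes the two-step structure cleaner than the paper's terse ``follows from the same arguments,'' and your explicit flagging of the hypothesis $O\geq 0$ is a genuine improvement in rigor: without it the termwise multiplication by $O(\x)$ fails, and indeed the stated inequalities are vacuous or reversed when $\langle O,\pi_1\rangle<0$, a point the paper leaves implicit.
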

\begin{proof}
By mimicking the proof of Theorem~\ref{thm:bipdec},
we have $e^{-\alpha\beta} Z_1^2 \leq Z_A \leq e^{\alpha\beta} Z_1^2$ and
$$
e^{-\alpha\beta} \sum_\y\gamma_1(\mu_1(\x,\y))\cdot\gamma_1(\mu_2(\x,\y)) \leq \gamma_A(\x) \leq e^{\alpha\beta} \sum_\y\gamma_1(\mu_1(\x,\y))\cdot\gamma_1(\mu_2(\x,\y)).
$$
The theorem then follows from the same arguments given in the proof of Theorem~\ref{thm:suff}.
\end{proof}

\section{Mixing time}\label{sec::mixing}
The all-logit dynamics for a strategic game have the property that, for every pair of profiles $\x,\y$ and for every value of $\beta$, the transition probability from $\x$ to $\y$ is strictly positive. In order to give upper bounds on the mixing time, we will use the following simple well-known lemma (see e.g. Theorem 11.5 in~\cite{mu05}).
\begin{lemma}\label{lemma:tmixeasy}
Let $P$ be the transition matrix of an ergodic Markov chain with state space $\Omega$. For every $y \in \Omega$ let us name $\alpha_y = \min\{ P(x,y) \colon x \in \Omega \}$ and $\alpha = \sum_{y \in \Omega} \alpha_y$. Then the mixing time of $P$ is $\tm = \OO(1/\alpha)$.
\end{lemma}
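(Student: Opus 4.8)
The plan is to establish a one-step contraction in total variation distance coming from the Doeblin-type minorization implicit in the hypothesis, and then iterate it. The hypothesis $P(x,y)\ge\alpha_y$ for all $x$ says that $P$ has a state-independent ``common part'' of total mass $\alpha$, and a standard fact is that such a common part forces geometric convergence to stationarity at rate $1-\alpha$ per step.

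First I would dispose of the degenerate cases and set up the decomposition. Fixing any state $x$ gives $\alpha=\sum_y\alpha_y\le\sum_y P(x,y)=1$; if $\alpha=0$ the claimed bound is vacuous, and if $\alpha=1$ then $P(x,y)=\alpha_y$ for every $x$, so every row of $P$ equals the distribution $\nu(y):=\alpha_y$ and the chain is stationary after one step. So assume $0<\alpha<1$, put $\nu(y)=\alpha_y/\alpha$ and $R(x,y)=(P(x,y)-\alpha_y)/(1-\alpha)$. Since $P(x,y)-\alpha_y\ge 0$ and $\sum_y(P(x,y)-\alpha_y)=1-\alpha$, the matrix $R$ is stochastic, and we have the convex decomposition $P(x,y)=\alpha\,\nu(y)+(1-\alpha)R(x,y)$.

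Next I would turn this into a contraction. Let $\pi$ be the unique stationary distribution of the (finite, ergodic) chain, fix a starting state $\x$, and set $\mu_t=P^t(\x,\cdot)$. Because $\mu_t$ and $\pi$ are probability distributions, the $\alpha\,\nu$ part of $P$ contributes the same quantity, $\nu$, to $\mu_tP$ and to $\pi P=\pi$; hence the signed measure $\mu_{t+1}-\pi=\mu_tP-\pi P$ equals $(1-\alpha)(\mu_t-\pi)R$. Using that $R$ is stochastic, $\|(\mu_t-\pi)R\|_1\le\|\mu_t-\pi\|_1$ (each row of $R$ sums to $1$ and nonnegativity lets the absolute value pass inside), so $\tv{\mu_{t+1}-\pi}\le(1-\alpha)\tv{\mu_t-\pi}$. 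Iterating from $\tv{\mu_0-\pi}\le 1$ yields $\tv{P^t(\x,\cdot)-\pi}\le(1-\alpha)^t\le e^{-\alpha t}$, uniformly in $\x$.

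Finally, $e^{-\alpha t}\le 1/4$ as soon as $t\ge(\ln 4)/\alpha$, so $\tm=\tm(1/4)\le\lceil(\ln 4)/\alpha\rceil=\OO(1/\alpha)$. (Equivalently, one can run the usual coupling argument: couple two copies of the chain so that at each step, with probability $\alpha$, both jump to a common state drawn from $\nu$ — after which they move together forever — and otherwise move via $R$; the coupling inequality gives the same bound $(1-\alpha)^t$ on the distance between $P^t(\x,\cdot)$ and $\pi$.) There is no genuine obstacle in this lemma; the only points needing a moment's care are the boundary values $\alpha\in\{0,1\}$ and the observation that the state-independent part of $P$ cancels in $\mu_tP-\pi P$.
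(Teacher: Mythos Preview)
Your proof is correct. The paper does not actually prove this lemma; it merely cites it as ``a simple well-known lemma (see e.g.\ Theorem~11.5 in~\cite{mu05})'' and uses it as a black box. Your Doeblin-minorization/contraction argument is one of the standard proofs of this fact, and the coupling variant you sketch at the end is essentially the argument given in the cited reference, so there is nothing to compare.
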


We now give an upper bound holding for \emph{every game}.
Recall that for a strategic game $\mathcal{G}$,
in Section~\ref{sec::logit} we defined the cumulative utility function for the ordered pair of profiles $(\x,\y)$
as $U(\x,\y) = \sum_{i=1}^n u_i(\x_{-i},y_i)$. Let us name $\Delta U$ the size of the range of $U$,
$$
\Delta U = \max\{ U(\x,\y) \colon \x,\y \in S \} - \min\{ U(\x,\y) \colon \x,\y \in S \}.
$$
By using Lemma~\ref{lemma:tmixeasy} we can give a simple upper bound on the mixing time
of the all-logit dynamics for $\mathcal{G}$ as a function of $\beta$ and $\Delta U$.

\begin{theorem}[General upper bound]
\label{theorem:ubgeneral}
For any strategic game $\mathcal{G}$
the mixing time of the all-logit dynamics for $\mathcal{G}$ is $\OO\left( e^{\beta \Delta U} \right)$.
\end{theorem}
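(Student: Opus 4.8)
The plan is to invoke Lemma~\ref{lemma:tmixeasy}: it suffices to exhibit a lower bound on $\alpha = \sum_{\y \in S} \min_{\x \in S} P(\x,\y)$, since then the claimed bound follows immediately as $\tm = \OO(1/\alpha)$. So the whole proof reduces to a uniform lower bound on the transition probabilities $P(\x,\y)$.

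First I would use the representation~\eqref{eq:transition} of the transition probabilities, $P(\x,\y) = e^{\beta U(\x,\y)}/T(\x)$ with $T(\x) = \sum_{\z \in S} e^{\beta U(\x,\z)}$. Write $U_{\max} = \max_{\x,\y \in S} U(\x,\y)$ and $U_{\min} = \min_{\x,\y \in S} U(\x,\y)$, so that $\Delta U = U_{\max} - U_{\min}$. The numerator satisfies $e^{\beta U(\x,\y)} \geq e^{\beta U_{\min}}$, while the denominator is bounded above termwise by $T(\x) \leq |S|\, e^{\beta U_{\max}}$, bounding each of the $|S|$ summands $e^{\beta U(\x,\z)}$ by $e^{\beta U_{\max}}$. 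Combining the two estimates gives, uniformly in $\x$ and $\y$,
$$
P(\x,\y) \;\geq\; \frac{e^{\beta U_{\min}}}{|S|\, e^{\beta U_{\max}}} \;=\; \frac{1}{|S|}\, e^{-\beta \Delta U}.
$$

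Hence each $\alpha_\y = \min_{\x \in S} P(\x,\y) \geq \frac{1}{|S|} e^{-\beta \Delta U}$, and summing over the $|S|$ profiles $\y$ yields $\alpha \geq e^{-\beta \Delta U}$. Plugging this into Lemma~\ref{lemma:tmixeasy} gives $\tm = \OO(1/\alpha) = \OO(e^{\beta \Delta U})$, as claimed. There is no genuinely hard step here: the only point requiring a (routine) argument is the termwise upper bound on $T(\x)$, and one should simply note that the resulting estimate is uniform over all $\beta \geq 0$ and all starting profiles $\x$, which is exactly what the definition of $\tm$ demands.
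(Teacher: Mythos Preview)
Your proof is correct and essentially identical to the paper's: both bound $P(\x,\y)\geq e^{-\beta\Delta U}/|S|$ uniformly and then apply Lemma~\ref{lemma:tmixeasy}. The only cosmetic difference is that the paper rewrites $P(\x,\y)=1/\sum_{\z} e^{\beta(U(\x,\z)-U(\x,\y))}$ and bounds each exponent by $\Delta U$, whereas you bound numerator and denominator separately; the resulting inequality and the rest of the argument coincide.
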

\begin{proof}
Let $P$ be the transition matrix of the all-logit dynamics for $\mathcal{G}$ and let $\x,\y \in S$ be two profiles. From~\eqref{eq:transition} we have that
$$
P(\x,\y) = \frac{e^{\beta U(\x,\y)}}{\sum_{\z \in S} e^{\beta U(\x,\z)}} = \frac{1}{\sum_{\z \in S} e^{\beta \left(U(\x,\z) - U(\x,\y) \right)}} \geqslant \frac{1}{|S| e^{\beta \Delta U}}.
$$
Hence for every $\y \in S$ it holds that
$$
\alpha_{\y} \geqslant \frac{e^{- \beta \Delta U}}{|S|}
$$
and $\alpha = \sum_{\y \in S} \alpha_{\y} \geqslant e^{-\beta \Delta U}$. The thesis then follows from Lemma~\ref{lemma:tmixeasy}.
\end{proof}
Next sections will give specific bounds for two specific classes of games (that contain the games analyzed in the Section~\ref{sec::examples}),
namely graphical coordination games and games with a dominant profile.
These results show that the the mixing time of the all-logit dynamics has the same twofold behavior that has been highlighted in the case
of the one-logit: for some games it depends exponentially on $\beta$, whereas for other games it can be upper-bounded by a function independent from $\beta$.

\subsection{Graphical coordination games}
A \emph{graphical coordination game} is a local interaction game in which on each edge is played the the coordination game described by~\eqref{eq:coorddef}.
It turns out that we can apply Theorem~\ref{theorem:ubgeneral} in order to give an upper bound to the mixing time of the all-logit for graphical coordination games.
\begin{theorem}
\label{thm:graphical}
 The mixing time of the all-logit for a graphical coordination game on a graph $G = (V,E)$ is
 $$
  \tm = \OO\left(e^{2\beta(\max\{a,b\}-\min\{c, d\})|E|}\right).
 $$
\end{theorem}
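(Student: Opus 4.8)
The plan is to apply Theorem~\ref{theorem:ubgeneral} directly, so the entire task reduces to bounding $\Delta U$, the size of the range of the cumulative utility function $U(\x,\y)=\sum_i u_i(\x_{-i},y_i)$, for a graphical coordination game on $G=(V,E)$. First I would recall that in a graphical coordination game every edge $e=(u,v)$ carries a copy of the coordination matrix~\eqref{eq:coorddef}, so the single-edge payoff $u_u^e(x_u,x_v)$ takes values in $\{a,b,c,d\}$; hence for any strategy choices the per-edge payoff lies in the interval $[\min\{a,b,c,d\},\max\{a,b,c,d\}]$, and since $a>d$, $b>c$ we actually have $u_u^e(\cdot,\cdot)\in[\min\{c,d\},\max\{a,b\}]$.

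Next I would unfold $U(\x,\y)$ in terms of edge contributions. Since $u_i(\x_{-i},y_i)=\sum_{e=(i,j)}u_i^e(y_i,x_j)$, summing over all players $i$ makes each edge $e=(u,v)$ contribute exactly twice, once as $u_u^e(y_u,x_v)$ and once as $u_v^e(y_v,x_u)$. Therefore $U(\x,\y)=\sum_{e=(u,v)}\bigl(u_u^e(y_u,x_v)+u_v^e(y_v,x_u)\bigr)$, a sum of $2|E|$ terms each lying in $[\min\{c,d\},\max\{a,b\}]$. It follows that $U(\x,\y)\in[2|E|\min\{c,d\},\,2|E|\max\{a,b\}]$ for every pair $(\x,\y)$, so
$$
\Delta U \leq 2|E|\bigl(\max\{a,b\}-\min\{c,d\}\bigr).
$$
Plugging this into Theorem~\ref{theorem:ubgeneral}, which gives $\tm=\OO(e^{\beta\Delta U})$, yields exactly $\tm=\OO\bigl(e^{2\beta(\max\{a,b\}-\min\{c,d\})|E|}\bigr)$, as claimed.

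I do not anticipate any serious obstacle here: the only points requiring a little care are (i) correctly accounting for the factor $2$ coming from the fact that each undirected edge is counted once from each endpoint when summing $u_i(\x_{-i},y_i)$ over all $i$, and (ii) observing that the bound on each per-edge term is a single interval independent of which of the four matrix entries is selected, which is where the coordination-game inequalities $a>d$, $b>c$ (guaranteeing $\min\{c,d\}$ and $\max\{a,b\}$ are the right extremes) are used. If one were slightly sloppy one could instead write $\max\{a,b,c,d\}-\min\{a,b,c,d\}$, but the stated form is tighter and follows from the same argument. The whole proof is thus a two-line computation of $\Delta U$ followed by a citation of the general upper bound.
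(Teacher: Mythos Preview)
Your proposal is correct and follows essentially the same approach as the paper: bound $\Delta U$ by decomposing $U(\x,\y)$ into $2|E|$ per-edge payoff terms each lying in $[\min\{c,d\},\max\{a,b\}]$, then invoke Theorem~\ref{theorem:ubgeneral}. The paper phrases the bound by exhibiting specific extremizing profiles and summing degrees, but your direct termwise bound is equivalent and, if anything, slightly cleaner.
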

\begin{proof}
 Suppose that $a \geq b$. Then, consider the profile $\x_+$ in which each player plays the strategy $+1$.
 It is easy to see that $U(\x,\y) \leq U(\x_+,\x_+) = \sum_i a \cdot \deg(i)$, where $\deg(i)$ is the degree of $i$ in $G$.
 The case $a < b$ is equivalent except that we now consider the profile $\x_-$ in which each player plays the strategy $-1$.
 Similarly, suppose that $c \leq d$. Then $U(\x, \y) \geq U(\x_-, \x_+) = \sum_i c \cdot \deg(i)$.
 The case $d < c$ is equivalent except we invert the role of $\x_-$ and $\x_+$.
 Hence
 $$
 \Delta U = \sum_i \max\{a,b\} \cdot \deg(i) - \sum_i \min\{c,d\} \cdot \deg(i) = 2\beta(\max\{a,b\}-\min\{c, d\})|E|.
 $$
 The thesis then follows from Theorem~\ref{theorem:ubgeneral}.
\end{proof}
This bound shows that the mixing time of the all-logit for graphical coordination games exponentially depends on $\beta$,
as in the case of the one-logit dynamics.
However, the bounds given in the previous theorem can be very loose with respect to
the known results about the mixing time of the one-logit for graphical coordination games~\cite{afpppSPAA11}.
It would be interesting to understand at which extent the above bounds can be improved
(in Appendix~\ref{apx:cwmodel} we slightly improve these bounds for a very special graphical coordination game,
namely the Curie-Weiss model for ferromagnetism adopted in Statistical Physics) and, in particular,
if it is possible to show that the mixing time of the all-logit cannot be longer than the mixing time of the one-logit.

\subsection{Games with dominant strategies}
Theorems~\ref{thm:graphical} shows that for graphical coordination games the mixing time grows with $\beta$.
In this section we show that for games with a dominant profile, such as the prisoner's dilemma
analyzed in Section~\ref{sec::examples}, the time that the all-logit take for converging to the stationary distribution
is upper bounded by a function independent of $\beta$, as in the case of the one-logit dynamics~\cite{afpppSPAA11}.

Specifically, we say that strategy $s^\star \in S_i$ is a {\em dominant strategy} for player $i$
if for all $s' \in S_i $ and all strategy profiles $\x \in S$,
$$
 u_i(s^\star, \x_{-i}) \geq u_i(s', \x_{-i}).
$$
A {\em dominant profile} $\x^\star = (x^\star_1,\ldots,x^\star_n)$ is a profile in which $x^\star_i$ is a dominant strategy for player $i=1,\ldots,n$.
Then, we can derive the following upper bound on the mixing time of the all-logit dynamics for games with a dominant profile,
whose proof resembles the one used for proving a similar result for the one-logit given in~\cite{afpppSPAA11}.
\begin{theorem}
\label{theorem:dominant-strategies}
Let $\G$ be an $n$-player games with a dominant profile where each player has at most $m$ strategies.
The mixing time of the all-logit for $\G$ is
$$
 \tm=\OO\left(m^n\right).
$$
\end{theorem}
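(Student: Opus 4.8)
The plan is to apply Lemma~\ref{lemma:tmixeasy} taking the target state to be the dominant profile $\x^\star = (x^\star_1,\dots,x^\star_n)$. The whole point is that, precisely because $\x^\star$ is assembled from dominant strategies, the transition probability $P(\x,\x^\star)$ admits a lower bound that is \emph{uniform in the starting profile $\x$ and independent of $\beta$}; this is exactly what lets us replace the $e^{\beta\Delta U}$ factor of Theorem~\ref{theorem:ubgeneral} by something that does not blow up with $\beta$.

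Concretely, I would first expand $P(\x,\x^\star) = \prod_{i=1}^n \sigma_i(x^\star_i \mid \x)$ using~\eqref{eq:alllogitchain} and bound each factor separately. Fix a player $i$ and a profile $\x$. Since $x^\star_i$ is a dominant strategy for $i$, we have $u_i(\x_{-i}, x^\star_i) \geq u_i(\x_{-i}, z)$ for every $z \in S_i$, so the normalizing denominator in~\eqref{eq:logitrule} satisfies $\sum_{z \in S_i} e^{\beta u_i(\x_{-i},z)} \leq |S_i|\, e^{\beta u_i(\x_{-i}, x^\star_i)} \leq m\, e^{\beta u_i(\x_{-i}, x^\star_i)}$. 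Hence $\sigma_i(x^\star_i \mid \x) \geq 1/m$, and multiplying over the $n$ players gives $P(\x,\x^\star) \geq m^{-n}$ for every $\x \in S$.

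In the notation of Lemma~\ref{lemma:tmixeasy} this says $\alpha_{\x^\star} = \min_{\x \in S} P(\x,\x^\star) \geq m^{-n}$, and therefore $\alpha = \sum_{\y\in S}\alpha_{\y} \geq \alpha_{\x^\star} \geq m^{-n}$. Lemma~\ref{lemma:tmixeasy} then immediately yields $\tm = \OO(1/\alpha) = \OO(m^n)$, as claimed. There is no real obstacle in this argument; the only point requiring a little care is to keep the per-player estimate $|S_i| \leq m$ rather than a cruder $|S|$-type bound, so that the exponent in the final estimate is exactly $n$.
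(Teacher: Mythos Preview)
Your argument is correct. The key estimate $\sigma_i(x^\star_i\mid\x)\geq 1/|S_i|\geq 1/m$, obtained from the dominance of $x^\star_i$, is exactly the same inequality the paper uses. The difference is only in the wrapper: the paper proves the theorem via a one-step coupling argument (for any two starting profiles, couple the chains so that each player picks the same strategy in both copies with the maximal possible probability; then both copies land on $\x^\star$ with probability at least $m^{-n}$, and the Coupling Theorem gives $\tm=\OO(m^n)$), whereas you feed the same lower bound into Lemma~\ref{lemma:tmixeasy}, the Doeblin--minorization lemma already stated in the paper. Your route is arguably tidier here, since it reuses a tool the paper has already set up instead of importing the coupling machinery; the paper's coupling proof, on the other hand, would extend more readily if one ever wanted sharper bounds by tracking partial coalescence coordinate by coordinate. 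Substantively the two proofs are the same.
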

\begin{proof}
The proof uses the coupling technique (see, for example, Theorem~5.2 in \cite{lpwAMS08}).

Let $P$ be the transition matrix of the all-logit dynamics for $\G$.
For every pair of profiles $\x$ and $\y$, we consider a coupling $(X,Y)$ of the distributions $P(\x,\cdot)$ and $P(\y,\cdot)$
such that for every player $i$ the probability that both chains choose strategy $s$ for player $i$
is exactly $\min\{\sigma_i(s\mid\x),\sigma_i(s\mid\y)\}$.
Observe that, with such a coupling,  once the two chains coalesce, i.e. $X = Y$, they stay together.

We next observe that for all starting profiles $\x$ and $\y$, it holds that
$$
 \Prob{\x,\y}{X_1 = Y_1} \geq \Prob{\x,\y}{X_1 = \x^\star \mbox{ and } Y_1 = \x^\star} \geq \frac{1}{m^n}.
$$
Indeed both chains are in profile $\x^\star$ after one step if and only if every player chooses strategy $x_i^\star$ in both chains.
From the properties of the coupling, it follows that this event occurs with probability
$$
 \prod_i \min\{\sigma_i(x^\star_i \mid \x),\sigma_i(x^\star_i \mid \y)\} \geq \prod_i \frac{1}{|S_i|} \geq \frac{1}{m^n},
$$
where the first inequality follows from~\eqref{eq:logitrule} and the fact that $x^\star_i$ is a dominant strategy for $i$.

Therefore we have that the probability that the two chains have not yet coupled after $k$ time steps is
$$
\Prob{\x,\y}{X_{k} \neq Y_{k}} \leqslant \left( 1 - \frac{1}{m^n} \right)^k \leqslant
e^{- k/m^n},
$$
which is less than $1/4$ for $k = \OO(m^n)$. By applying the Coupling Theorem~\cite[Theorem~5.2]{lpwAMS08} we have that $\tm = \OO\left(m^n\right)$.
\end{proof}

\section{Conclusions and open problems}\label{sec::conclusions}
In this paper we considered the selection rule that assigns
positive probability only to the set of all players.
A natural extension of this selection rule
assigns a different probability to each subset of the players.
What is the impact of such a probabilistic selection rule
on reversibility and on observables?
Some interesting results along that direction have been
obtained in~\cite{afnGEB10,afn12}.
Notice that if we consider the selection rule that selects
player $i$ with probability $p_i>0$
(the one-logit set $p_i=1/n$ for all $i$) then the
stationary distribution is the same as the stationary distribution
of the one-logit. Therefore, all observables have the same expected value
and all potential games are reversible.

It is a classical result that the Gibbs distribution that is
the stationary distribution of the one-logit
(the micro-canonical ensemble, in  Statistical Mechanics parlance)
is the distribution that maximizes the entropy among all the distributions
with a fixed average potential. Can we say something similar for the
stationary distribution of the all-logit?
A promising direction along this line of research is suggested by
results in Section~\ref{sec:observables}:
at least in some cases the stationary distribution of the all-logit
can be seen as a composition of simpler distributions.

\newpage
\bibliographystyle{plain}
\bibliography{logit}

\begin{thebibliography}{10}

\bibitem{afnGEB10}
Carlos Al{\'o}s-Ferrer and Nick Netzer.
\newblock The logit-response dynamics.
\newblock {\em Games and Economic Behavior}, 68(2):413--427, 2010.

\bibitem{afn12}
Carlos Al{\'o}s-Ferrer and Nick Netzer.
\newblock Robust stochastic stability.
\newblock ECON - Working Papers 063, Department of Economics - University of
  Zurich, Feb 2012.

\bibitem{adtwSTOC2003}
Elliot Anshelevich, Anirban Dasgupta, {\'E}va Tardos, and Tom Wexler.
\newblock Near-optimal network design with selfish agents.
\newblock {\em Theory of Computing}, 4(1):77--109, 2008.

\bibitem{asWINE09}
Arash Asadpour and Amin Saberi.
\newblock On the inefficiency ratio of stable equilibria in congestion games.
\newblock In {\em Proc. of the 5th Int. Workshop on Internet and Network
  Economics (WINE'09)}, volume 5929 of {\em Lecture Notes in Computer Science},
  pages 545--552. Springer, 2009.

\bibitem{afpppSPAA11}
Vincenzo Auletta, Diodato Ferraioli, Francesco Pasquale, Paolo Penna, and
  Giuseppe Persiano.
\newblock Convergence to equilibrium of logit dynamics for strategic games.
\newblock In {\em Proc. of the 23rd ACM Symp. on Parallelism in Algorithms and
  Architectures (SPAA'11)}, pages 197--206. ACM, 2011.

\bibitem{afppSODA12}
Vincenzo Auletta, Diodato Ferraioli, Francesco Pasquale, and Giuseppe Persiano.
\newblock Metastability of logit dynamics for coordination games.
\newblock In {\em Proc. of the ACM-SIAM Symp. on Discrete Algorithms
  (SODA'12)}, pages 1006--1024. SIAM, 2012.

\bibitem{bgECO2000}
Venkatesh Bala and Sanjeev Goyal.
\newblock A noncooperative model of network formation.
\newblock {\em Econometrica}, 68(5):1181--1229, 2000.

\bibitem{bkmpPTRF05}
Noam Berger, Claire Kenyon, Elchanan Mossel, and Yuval Peres.
\newblock Glauber dynamics on trees and hyperbolic graphs.
\newblock {\em Probability Theory and Related Fields}, 131:311--340, 2005.
\newblock Preliminary version in FOCS 01.

\bibitem{bkoFOCS2011}
David Bindel, Jon~M. Kleinberg, and Sigal Oren.
\newblock How bad is forming your own opinion?
\newblock In {\em Proc of the 52nd IEEE Annual Symposium on Foundations of
  Computer Science (FOCS'11)}, pages 57--66, 2011.

\bibitem{blumeGEB93}
Lawrence~E. Blume.
\newblock The statistical mechanics of strategic interaction.
\newblock {\em Games and Economic Behavior}, 5(3):387--424, 1993.

\bibitem{bcdlICS2011}
Christian Borgs, Jennifer~T. Chayes, Jian Ding, and Brendan Lucier.
\newblock The hitchhiker's guide to affiliation networks: A game-theoretic
  approach.
\newblock In {\em Proc. of the 2nd Symposium on Innovation in Computer Science
  (ICS'11)}, pages 389--400. Tsinghua University Press, 2011.

\bibitem{bckmrrsWAW2010}
Christian Borgs, Jennifer~T. Chayes, Brian Karrer, Brendan Meeder, R.~Ravi, Ray
  Reagans, and Amin Sayedi.
\newblock Game-theoretic models of information overload in social networks.
\newblock In {\em Proc of the 7th Workshop on Algorithms and Models for the Web
  Graph (WAW'10)}, pages 146--161, 2010.

\bibitem{cpPODC2005}
Jacomo Corbo and David~C. Parkes.
\newblock The price of selfish behavior in bilateral network formation.
\newblock In {\em Proc. of the 24th Annual ACM Symposium on Principles of
  Distributed Computing (PODC'05)}, pages 99--107, 2005.

\bibitem{cournot}
Antoine~A. Cournot.
\newblock {\em Recherches sur le Principes mathematiques de la Theorie des
  Richesses}.
\newblock L. Hachette, 1838.

\bibitem{ellisonECO93}
Glenn Ellison.
\newblock Learning, local interaction, and coordination.
\newblock {\em Econometrica}, 61(5):1047--1071, 1993.

\bibitem{flmpsPODC2003}
Alex Fabrikant, Ankur Luthra, Elitza~N. Maneva, Christos~H. Papadimitriou, and
  Scott Shenker.
\newblock On a network creation game.
\newblock In {\em Proc. of the 22nd Annual ACM Symposium on Principles of
  Distributed Computing (PODC'03)}, pages 347--351, 2003.

\bibitem{fgvSAGT2012}
Diodato Ferraioli, Paul Goldberg, and Carmine Ventre.
\newblock Decentralized dynamics for finite opinion games.
\newblock In {\em Proc. of the 5th Int. Symp. on Algorithmic Game Theory
  (SAGT'12)}, pages 144--155. Springer Berlin Heidelberg, 2012.

\bibitem{flMIT98}
Drew Fudenberg and David~K. Levine.
\newblock {\em The Theory of Learning in Games}.
\newblock MIT Press, 1998.

\bibitem{ftMIT91}
Drew Fudenberg and Jean Tirole.
\newblock {\em {G}ame {T}heory}.
\newblock MIT Press, 1992.

\bibitem{hsMIT88}
John~C. Harsanyi and Reinhard Selten.
\newblock {\em A {G}eneral {T}heory of {E}quilibrium {S}election in {G}ames}.
\newblock MIT Press, 1988.

\bibitem{hmjet}
Sergiu Hart and Andreu Mas-Colell.
\newblock A general class of adaptive procedures.
\newblock {\em Journal of Economic Theory}, 98(1):26 -- 54, 2001.

\bibitem{jwJET1996}
Matthew~O. Jackson and Asher Wolinsky.
\newblock A strategic model of social and economic networks.
\newblock {\em Journal of Economic Theory}, 71(1):44--74, 1996.

\bibitem{Kelly79}
Frank Kelly.
\newblock {\em {R}eversibility and {S}tochastic {N}etworks}.
\newblock Cambridge University Press, 2011.

\bibitem{koSTOC2011}
Jon~M. Kleinberg and Sigal Oren.
\newblock Mechanisms for (mis)allocating scientific credit.
\newblock In {\em Proc. of the 43rd ACM Symposium on Theory of Computing
  (STOC'11)}, pages 529--538, 2011.

\bibitem{landau}
Lev~D. Landau and Evgenij~M. Lifshitz.
\newblock {\em Statistical Physics}, volume~5.
\newblock Elsevier Science, 1996.

\bibitem{lpwAMS08}
David Levin, Yuval Peres, and Elizabeth~L. Wilmer.
\newblock {\em Markov Chains and Mixing Times}.
\newblock American Mathematical Society, 2008.

\bibitem{llpPTRF10}
David~A. Levin, Malwina Luczak, and Yuval Peres.
\newblock {G}lauber dynamics for the mean-field {I}sing model: cut-off,
  critical power law, and metastability.
\newblock {\em Probability Theory and Related Fields}, 146(1-2):223--265, 2010.

\bibitem{Mart1999}
Fabio Martinelli.
\newblock Lectures on {G}lauber dynamics for discrete spin models.
\newblock In {\em Lectures on Probability Theory and Statistics}, volume 1717
  of {\em Lecture Notes in Mathematics}, pages 93--191. Springer Berlin
  Heidelberg, 1999.

\bibitem{McFadden74}
Daniel~L. McFadden.
\newblock Conditional logit analysis of qualitative choice behavior.
\newblock In {\em Frontiers in Econometrics}, pages 105--142. Academic Press,
  1974.

\bibitem{mu05}
Michael Mitzenmacher and Eli Upfal.
\newblock {\em {Probability and Computing: Randomized Algorithms and
  Probabilistic Analysis}}.
\newblock Cambridge University Press, 2005.

\bibitem{MS96}
Dov Monderer and Lloyd~S. Shapley.
\newblock Potential games.
\newblock {\em Games and Economic Behavior}, 14:124--143, 1996.

\bibitem{msFOCS09}
Andrea Montanari and Amin Saberi.
\newblock Convergence to equilibrium in local interaction games.
\newblock In {\em Proc. of 50th Annual IEEE Symposium on Foundations of
  Computer Science (FOCS'09)}, pages 303--312, 2009.

\bibitem{morris}
Stephen Morris.
\newblock Contagion.
\newblock {\em Review of Economic Studies}, 67(1):57--78, January 2000.

\bibitem{sandholmMIT10}
William~H. Sandholm.
\newblock {\em Population Games and Evolutionary Dynamics}.
\newblock MIT Press, 2010.

\bibitem{trevisanSTOC09}
Luca Trevisan.
\newblock Max cut and the smallest eigenvalue.
\newblock In {\em Proceedings of the 41st annual ACM symposium on Theory of
  computing}, Proc. of the 41st ACM Symposium on Theory of Computing (STOC'09),
  pages 263--272. ACM, 2009.

\bibitem{wolpert}
David~H. Wolpert.
\newblock Information theory -- the bridge connecting bounded rational game
  theory and statistical physics.
\newblock In {\em Complex Engineered Systems}, volume~14, pages 262--290.
  Springer Berlin / Heidelberg, 2006.

\bibitem{youngPUP98}
H.~Peyton Young.
\newblock {\em Individual Strategy and Social Structure: An Evolutionary Theory
  of Institutions}.
\newblock Princeton University Press, 1998.

\bibitem{youngTR00}
Peyton~H. Young.
\newblock The diffusion of innovations in social networks.
\newblock Technical report, 2002.

\end{thebibliography}

\newpage
\appendix
\section{Mixing time of the all-logit for the Curie-Weiss model}
\label{apx:cwmodel}
Here we prove upper and lower bounds on the mixing time of the all-logit dynamics for a special graphical coordination game, the {\sf CW}-game.
In such a game we set $a = b = +1$ and $c = d = -1$.
Thus, the utility of player $i \in [n]$ is the sum of the number of players playing the same strategy as $i$,
minus the number of players playing the opposite strategy;
that is, the utility of player $i \in [n]$ at profile $\x = (x_1, \dots, x_n) \in \{-1,+1\}^n$ is
$$
u_i(\x) = x_i \sum_{j\neq i} x_j.
$$
It is easy to see that the potential function for this game is
$$
\Phi(\x) = - \sum_{\{i,j\} \in \binom{[n]}{2}} x_i x_j.
$$
Due to the high level of symmetry of the game,
the potential of a profile $\x$ depends only on the \emph{number} of players playing $\pm 1$.
Indeed, we can rewrite the potential of $\x$ as
$$
\Phi(\x) = - \frac{\Diff^2(\x) - n}{2},
$$
where $\Diff$ is the observable described in Section~\ref{sec:dec}.

\paragraph{The upper bound.}
Observe that, for the Curie-Weiss model we have $\Delta U = 2n(n-1)$, hence by using Theorem~\ref{thm:graphical} we get directly that
\begin{equation}\label{eq:ubIsing1}
\tm = \OO\left( e^{2 \beta n (n-1)} \right).
\end{equation}
Hence it follows that mixing time is $\OO(1)$ for $\beta = \OO(1/n^2)$ and it is $\OO(\text{poly}(n))$ for $\beta = \OO(\log n / n^2)$.

In what follows we show that factor ``$2$'' at the exponent in~\eqref{eq:ubIsing1} can be removed and that a slightly better upper bound can be given for $\beta > \log n / n$.
\begin{lemma}\label{lemma:alphay}
For every $\x,\y \in \Omega$ it holds that
$$
P(\x,\y) \geqslant q^{(n+|\Diff(\y)|)/2} (1-q)^{(n-|\Diff(\y)|)/2}
$$
where
$$
q= \frac{1}{1 + e^{2 \beta (n-1)}}.
$$
\end{lemma}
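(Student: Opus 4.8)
The plan is to bound each single-player update probability $\sigma_i(y_i\mid\x)$ from below uniformly in $\x$, and then multiply these bounds over all $n$ players, using the product formula \eqref{eq:alllogitchain}, namely $P(\x,\y)=\prod_{i=1}^n\sigma_i(y_i\mid\x)$. So the first step is to analyze a single factor. Fix a player $i$ and a target strategy $y_i\in\{-1,+1\}$. From \eqref{eq:logitrule},
$$
\sigma_i(y_i\mid\x)=\frac{e^{\beta u_i(\x_{-i},y_i)}}{e^{\beta u_i(\x_{-i},+1)}+e^{\beta u_i(\x_{-i},-1)}}
=\frac{1}{1+e^{\beta\left(u_i(\x_{-i},-y_i)-u_i(\x_{-i},y_i)\right)}}.
$$
For the {\sf CW}-game, $u_i(\x_{-i},s)=s\sum_{j\neq i}x_j$, so $u_i(\x_{-i},-y_i)-u_i(\x_{-i},y_i)=-2y_i\sum_{j\neq i}x_j$, and the absolute value of this exponent is at most $2(n-1)$ since $|\sum_{j\neq i}x_j|\le n-1$. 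Hence $\sigma_i(y_i\mid\x)\ge 1/(1+e^{2\beta(n-1)})=q$ when the ``unfavorable'' case occurs, and $\sigma_i(y_i\mid\x)\ge 1-q$ when the exponent is non-positive. In particular, each factor is always at least $q$, but we want the sharper split.

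The second step is the bookkeeping: for how many players $i$ is the weaker bound $q$ needed rather than the stronger bound $1-q$? The weaker bound is forced exactly when $y_i\sum_{j\neq i}x_j<0$, i.e. when player $i$'s target strategy $y_i$ disagrees in sign with the (signed) majority $\sum_{j\neq i}x_j$ of the current profile. Since this count depends on $\x$ as well as $\y$, and we want a bound valid for \emph{every} $\x$, the cleanest route is to observe that for any fixed $\x$ the number of indices $i$ with $y_i=+1$ is $(n+\Diff(\y))/2$ and the number with $y_i=-1$ is $(n-\Diff(\y))/2$; bound a factor with $y_i=+1$ below by $q$ and a factor with $y_i=-1$ below by $q$ in the worst case — but that just recovers $q^n$. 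To get the stated bound one instead argues: whatever $\x$ is, $\sum_{j\neq i}x_j$ has a single sign (or is zero) for all $i$ up to an $O(1)$ fluctuation, so at worst all players whose target equals one particular value pay the $q$ penalty and all players whose target equals the other value pay only $1-q$; the count of the former is at most $\max\{(n+\Diff(\y))/2,(n-\Diff(\y))/2\}=(n+|\Diff(\y)|)/2$ and of the latter at least $(n-|\Diff(\y)|)/2$. Multiplying,
$$
P(\x,\y)=\prod_{i=1}^n\sigma_i(y_i\mid\x)\ge q^{(n+|\Diff(\y)|)/2}(1-q)^{(n-|\Diff(\y)|)/2},
$$
using $q\le 1-q$ (valid since $\beta\ge 0$ gives $e^{2\beta(n-1)}\ge 1$) to justify that replacing the true split by this extremal split only decreases the product.

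I expect the main obstacle to be making the counting argument in the second step rigorous rather than hand-wavy: one must carefully verify that, for an arbitrary current profile $\x$, the set of players who are ``fighting the majority'' (and hence incur the $q$ factor) can indeed be taken to be all players targeting a fixed sign, so that its cardinality is bounded by $(n+|\Diff(\y)|)/2$ regardless of $\x$. The subtlety is the player-dependent sum $\sum_{j\neq i}x_j$ versus the global sum $\Diff(\x)=\sum_j x_j$; they differ by $x_i\in\{-1,+1\}$, which can flip the sign only when $|\Diff(\x)|\le 1$, and in that borderline regime every factor is already at least $\min\{q,1-q\}=q$ anyway, so the claimed bound holds trivially there. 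Handling that boundary case separately, and otherwise using the fixed sign of $\sum_{j\neq i}x_j$, closes the argument; everything else is the routine substitution of the {\sf CW}-utilities into \eqref{eq:logitrule} and \eqref{eq:alllogitchain}.
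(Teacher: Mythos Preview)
Your per-factor dichotomy is wrong, and the gap is earlier than where you locate it. You assert that when the exponent is non-positive, i.e. when $y_i\sum_{j\ne i}x_j\geq 0$, the factor $\sigma_i(y_i\mid\x)$ is at least $1-q$. But that hypothesis only gives $\sigma_i\geq \tfrac12$, whereas $1-q=1/(1+e^{-2\beta(n-1)})$ is typically much larger than $\tfrac12$. Concretely, take $n=4$, $\x=(+1,+1,-1,-1)$, $i=3$, $y_3=+1$: then $\sum_{j\ne 3}x_j=1$, so $\sigma_3=1/(1+e^{-2\beta})$, and this is strictly \emph{smaller} than $1-q=1/(1+e^{-6\beta})$ for every $\beta>0$. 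So the factors are not confined to the two regimes $\{\geq q\}$ and $\{\geq 1-q\}$; they can sit anywhere in the interval $[q,1-q]$, and the product bound you write does not follow from multiplying per-factor bounds. Your later boundary discussion does not repair this either: noting that every factor is at least $q$ when $|\Diff(\x)|\leq 1$ yields only the trivial $q^n$, not the claimed $q^{(n+|\Diff(\y)|)/2}(1-q)^{(n-|\Diff(\y)|)/2}$. The obstacle is therefore not the bookkeeping of how many indices land in each regime, but the falsity of the $1-q$ lower bound itself.

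The paper's argument is structurally different: rather than bounding factor by factor, it identifies the minimizing profile. For $\Diff(\y)>0$ it takes $\x=\x_-=(-1,\dots,-1)$ (and symmetrically $\x_+$ when $\Diff(\y)<0$). At $\x_-$ one has $\sum_{j\ne i}x_j=-(n-1)$ for \emph{every} $i$, so each factor is exactly $q$ (when $y_i=+1$) or exactly $1-q$ (when $y_i=-1$), and the product equals the stated bound on the nose. The point is that the two-value dichotomy you want holds only at the extremal profiles $\x_\pm$; the paper works there directly and then asserts that these profiles realize the minimum, rather than trying to push a uniform factor-wise estimate across all $\x$.
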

\begin{proof}
Consider a profile $\y \in \{-1,+1\}^n$. Observe that the number of players playing $+1$ and $-1$ in $\y$ can be written as $\frac{n + \Diff(\y)}{2}$ and $\frac{n - \Diff(\y)}{2}$, respectively. If $\Diff(\y) > 0$, i.e. if the number of players playing $+1$ is larger than the number of players playing $-1$, then the profile that minimizes $P(\x,\y)$ is profile $\x_{-} = (-1, \dots, -1)$ where every player plays $-1$. If we name
$$
q = \frac{e^{- \beta (n-1)}}{e^{- \beta(n-1)} + e^{\beta(n-1)}} = \frac{1}{1+e^{2\beta(n-1)}}
$$
the probability that a player in $\x_{-}$ chooses strategy $+1$ for the next round, we have that
$$
P(\x_{-},\y) = q^{\frac{n + \Diff(\y)}{2}} (1-q)^{\frac{n - \Diff(\y)}{2}}.
$$
On the other hand, if $\Diff(\y) < 0$, then $P(\x,\y)$ is minimized when $\x = \x_{+} = (+1,\dots,+1)$ and, since $q$ is also the probability that a player in $\x_{+}$ chooses strategy $-1$ for the next round, we have that
$$
P(\x_{+},\y) = q^{\frac{n - \Diff(\y)}{2}} (1-q)^{\frac{n + \Diff(\y)}{2}}
$$
and the thesis follows.
\end{proof}

Now we can give an upper bound on the mixing time by using lemmata~\ref{lemma:tmixeasy} and~\ref{lemma:alphay}
\begin{theorem}[Upper bound]
The mixing time of the all-logit dynamics for the Curie-Weiss model is
$$
\tm = \OO\left( n e^{\beta n^2} \right).
$$
If $\beta \geqslant \log n / n$ the mixing time is
$$
\tm = \OO\left( \frac{n e^{\beta n^2}}{2^n} \right).
$$
\end{theorem}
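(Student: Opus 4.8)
The plan is to apply Lemma~\ref{lemma:tmixeasy} together with the pointwise transition bound of Lemma~\ref{lemma:alphay}. For every target profile $\y$ the latter gives $\alpha_\y=\min_\x P(\x,\y)\ge q^{(n+|\Diff(\y)|)/2}(1-q)^{(n-|\Diff(\y)|)/2}$ with $q=1/(1+e^{2\beta(n-1)})$. Since the relevant exponent depends on $\y$ only through $\Diff(\y)$, I would group the $2^n$ profiles by the number $k$ of players choosing $+1$: there are $\binom nk$ such profiles, each with $|\Diff|=|2k-n|$, so that $(n+|\Diff|)/2=\max\{k,n-k\}$ and $(n-|\Diff|)/2=\min\{k,n-k\}$. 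Summing over all profiles then yields
$$\alpha=\sum_\y\alpha_\y\ \ge\ \sum_{k=0}^n\binom nk\, q^{\max\{k,n-k\}}(1-q)^{\min\{k,n-k\}}.$$

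For the first bound I would simply keep the central term $k=\lceil n/2\rceil$. Here $\binom{n}{\lceil n/2\rceil}\ge 2^n/(n+1)$, and since $q\le\tfrac12\le 1-q$ and $1-q\le 1$ one has $q^{\lceil n/2\rceil}(1-q)^{\lfloor n/2\rfloor}\ge (q(1-q))^{\lceil n/2\rceil}$. Writing $t=e^{2\beta(n-1)}\ge 1$ we have $q(1-q)=t/(1+t)^2$, and from $1+t\le 2t$ this is at least $\tfrac14 e^{-2\beta(n-1)}$. Hence the central term is at least $\tfrac{2^n}{n+1}\,4^{-\lceil n/2\rceil}e^{-2\beta(n-1)\lceil n/2\rceil}$; since $4^{\lceil n/2\rceil}\le 2^{n+1}$ and $2(n-1)\lceil n/2\rceil\le n^2$ (the discrepancy only contributes a harmless factor $e^{\beta\cdot(\text{positive})}\ge 1$), this is $\Omega\!\big(n^{-1}e^{-\beta n^2}\big)$. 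Plugging $\alpha=\Omega(n^{-1}e^{-\beta n^2})$ into Lemma~\ref{lemma:tmixeasy} gives $\tm=\OO(ne^{\beta n^2})$.

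For the sharper bound when $\beta\ge\log n/n$ I would avoid the lossy estimate $(1+t)^2\le 4t^2$ and instead use $(1+t)^2=t^2(1+1/t)^2\le t^2e^{2/t}$, so that $q(1-q)\ge e^{-2\beta(n-1)}e^{-2/t}$. The hypothesis $\beta\ge\log n/n$ forces $t=e^{2\beta(n-1)}\ge n$ for $n\ge 2$, hence $2\lceil n/2\rceil/t\le (n+1)/n\le 2$ and the correction factor $e^{-2\lceil n/2\rceil/t}$ is bounded below by $e^{-2}$. This removes the $4^{-\lceil n/2\rceil}$ loss entirely, so the central term is $\Omega\!\big(2^n n^{-1}e^{-\beta n^2}\big)$, i.e. $\alpha=\Omega(2^n n^{-1}e^{-\beta n^2})$, and Lemma~\ref{lemma:tmixeasy} yields $\tm=\OO\!\big(ne^{\beta n^2}/2^n\big)$.

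The whole argument is elementary once Lemma~\ref{lemma:alphay} is in hand; the only delicate point is getting the constants in the estimates for $q(1-q)=t/(1+t)^2$ right, because the crude bound $q(1-q)\ge\tfrac14 e^{-2\beta(n-1)}$ is exactly what costs the factor $2^{-n}$ that the large-$\beta$ regime must recover, and one has to see that this factor is spurious there. A minor amount of parity bookkeeping for odd $n$ is also needed (keeping $\lceil n/2\rceil$ and $\lfloor n/2\rfloor$ separate and using $q^{\lceil n/2\rceil}(1-q)^{\lfloor n/2\rfloor}\ge(q(1-q))^{\lceil n/2\rceil}$), but this is routine.
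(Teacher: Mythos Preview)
Your proposal is correct and follows essentially the same route as the paper: apply Lemma~\ref{lemma:alphay}, keep only the central term(s) of the resulting sum, use $\binom{n}{\lceil n/2\rceil}\ge 2^n/(n+1)$, and bound $q(1-q)$ crudely by $\tfrac14 e^{-2\beta(n-1)}$ in general and sharply by $e^{-2\beta(n-1)}(1+o(1))$ when $\beta\ge\log n/n$. The only cosmetic difference is that the paper splits into even/odd $n$ explicitly while you absorb the parity into $\lceil n/2\rceil$, and the paper writes the correction factor as $(1+1/n)^{n/2}\le\sqrt e$ where you use $(1+1/t)^2\le e^{2/t}$; these are equivalent.
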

\begin{proof}
From Lemma~\ref{lemma:alphay} it follows that for every $\y \in \{-1,+1 \}^n$ we have
$$
\alpha_{\y} = \min\{ P(\x,\y) \mid \x \in \{-1,+1\}^n \} \geqslant q^{(n+|\Diff(\y)|)/2} (1-q)^{(n-|\Diff(\y)|)/2}.
$$
Hence
\begin{equation}\label{eq:mix:alpha}
\alpha = \sum_{\y \in \{-1,+1\}^n} \alpha_{\y} \geqslant \sum_{\y \in \{-1,+1\}^n} q^{(n+|\Diff(\y)|)/2} (1-q)^{(n-|\Diff(\y)|)/2}.
\end{equation}
Now observe that there are $\binom{n}{\frac{n-k}{2}}$ profiles $\y$ such that $\Diff(\y) = k$, and since $q \leqslant 1/2$, the largest terms in~\eqref{eq:mix:alpha} are the ones such that $\Diff(\y)$ is as close to zero as possible. In order to give a lower bound to $\alpha$ we will thus consider only profiles $\y$ such that $\Diff(\y) = 0$, when $n$ is even, and profiles $\y$ such that $\Diff(\y) = \pm 1$, when $n$ is odd.

\noindent
\underline{Case $n$ even:} If we consider only profiles $\y$ such that $\Diff(\y) = 0$ in~\eqref{eq:mix:alpha} we have that
$$
\alpha \geqslant \binom{n}{n/2}[q(1-q)]^{n/2}.
$$
By using a standard lower bound for the binomial coefficient (see e.g. Lemma~9.2 in~\cite{mu05}) we have that
$$
\binom{n}{n/2} \geqslant \frac{2^n}{n+1}.
$$
As for $[q(1-q)]^{n/2}$ we have that
\begin{align}\label{eq:qq}
q(1-q) & = \frac{1}{1+e^{2\beta(n-1)}} \cdot \frac{1}{1+e^{-2\beta(n-1)}} \nonumber \\
& = \frac{1}{e^{2\beta(n-1)} + 2 + e^{-2\beta(n-1)}} \nonumber \\
& = \frac{1}{e^{2\beta(n-1)} \left(1 + 2 e^{-2\beta(n-1)} + e^{-4 \beta (n-1)}  \right)}
\end{align}
Now observe that for every $\beta \geqslant 0$ we can bound $1 + 2 e^{-2\beta(n-1)} + e^{-4 \beta (n-1)} \leqslant 4$. Thus we have that
\begin{equation}\label{eq:mix:boundqone}
[q(1-q)]^{n/2} \geqslant \frac{1}{2^n e^{\beta n(n-1)}}.
\end{equation}
Hence
$$
\alpha \geqslant \binom{n}{n/2}[q(1-q)]^{n/2} \geqslant \frac{1}{(n+1) e^{\beta n(n-1)}}.
$$
And by using Lemma~\ref{lemma:tmixeasy} we have
$$
\tm = \OO\left( n e^{\beta n(n-1)} \right).
$$
If $\beta$ is large enough, say $\beta \geqslant \log n / n$, in~\eqref{eq:qq} we can bound
$$
1 + 2 e^{-2\beta(n-1)} + e^{-4 \beta (n-1)} \leqslant 1 + \frac{1}{n}.
$$
Thus, in this case we have that
\begin{equation}\label{eq:mix:boundqtwo}
[q(1-q)]^{n/2} \geqslant  \frac{1}{e^{\beta n (n-1)} \left( 1 + 1/n \right)^{(n/2)}} \geqslant \frac{1}{e^{\beta n(n-1)} \cdot \sqrt{e}}.
\end{equation}
Hence $\alpha \geqslant \frac{2^n}{(n+1) e^{1/2 + \beta n(n-1)}}$ and
$$
\tm = \OO\left( \frac{n e^{\beta n(n-1)}}{2^n} \right).
$$

\noindent
\underline{Case $n$ odd:} If we consider only profiles $\y$ such that $\Diff(\y) = \pm 1$ in~\eqref{eq:mix:alpha} we get
$$
\alpha \geqslant 2 \binom{n}{\frac{n+1}{2}} q^{\frac{n+1}{2}} (1-q)^{\frac{n-1}{2}} = 2 \binom{n}{\frac{n+1}{2}} \left(q(1-q)\right)^{n/2} \sqrt{\frac{q}{1-q}}.
$$
Now observe that
$$
\sqrt{\frac{q}{1-q}} = e^{-\beta (n-1)} \qquad \mbox{ and } \qquad \binom{n}{\frac{n+1}{2}} \geqslant  \frac{1}{2} \cdot \frac{2^n}{n+1}.
$$
By using bounds~\eqref{eq:mix:boundqone} and \eqref{eq:mix:boundqtwo} for $[q(1-q)]^{n/2}$ we get $\tm = \OO\left( n e^{\beta (n^2 - 1)} \right)$ for every $\beta \geqslant 0$ and $\tm = \OO\left( \frac{n e^{\beta(n^2-1)}}{2^n} \right)$ for $\beta \geqslant \log n / n$.
\end{proof}

\paragraph{The lower bound.}
In order to give a lower bound on the mixing time, we first show that, for the Curie-Weiss model, $K(\x,\y)$ can be written as a function of $\Diff(\x)$, $\Diff(\y)$ and of the Hamming distance between the two profiles.
\begin{lemma}
Let $\x,\y \in \{-1,+1\}^n$ be two profiles with magnetization $\Diff(\x)$ and $\Diff(\y)$ respectively and let $h_{\x,\y}$ be their Hamming distance, i.e. the number of players where they differ. Then
$$
K(\x, \y) = n - \Diff(\x) \cdot \Diff(\y) - 2 h_{\x,\y}.
$$
\end{lemma}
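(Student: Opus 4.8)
The plan is to use the edge-decomposition of $K$ provided by Lemma~\ref{lemma:K}. Since the {\sf CW}-game is played on the complete graph, every edge is a pair $\{i,j\}\in\binom{[n]}{2}$, and the edge potential is $\Phi_e(s,t)=-st$ for $s,t\in\{-1,+1\}$ (one checks this is an exact potential for the edge utility $u^e_i(s,t)=st$ coming from~\eqref{eq:coorddef} with $a=b=+1$, $c=d=-1$, and that $\sum_e\Phi_e(x_i,x_j)=-\sum_{\{i,j\}}x_ix_j=\Phi(\x)$). First I would compute, for $e=\{i,j\}$,
$$
K_e(\x,\y)=\Phi_e(x_i,y_j)+\Phi_e(y_i,x_j)=-x_iy_j-y_ix_j,
$$
so that, by Lemma~\ref{lemma:K},
$$
K(\x,\y)=-\sum_{\{i,j\}\in\binom{[n]}{2}}(x_iy_j+x_jy_i)=-\sum_{i\neq j}x_iy_j.
$$

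Next I would expand the sum over ordered pairs by completing it to the full product: $\sum_{i\neq j}x_iy_j=\bigl(\sum_ix_i\bigr)\bigl(\sum_jy_j\bigr)-\sum_ix_iy_i=\Diff(\x)\Diff(\y)-\sum_ix_iy_i$. The remaining term is handled by the elementary observation that $x_iy_i=+1$ when $x_i=y_i$ and $x_iy_i=-1$ otherwise, so $\sum_ix_iy_i=(n-h_{\x,\y})-h_{\x,\y}=n-2h_{\x,\y}$. Substituting back gives $K(\x,\y)=-\bigl(\Diff(\x)\Diff(\y)-n+2h_{\x,\y}\bigr)=n-\Diff(\x)\Diff(\y)-2h_{\x,\y}$, which is the claim.

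There is no real obstacle here; the only care needed is fixing the sign of $\Phi_e$ consistently with the payoff matrix and keeping track of ordered versus unordered pairs when passing from $\sum_{\{i,j\}}$ to $\sum_{i\neq j}$. An alternative, slightly more computational route avoids Lemma~\ref{lemma:K} entirely: plug the closed form $\Phi(\x)=-\tfrac12(\Diff^2(\x)-n)$ directly into the definition $K(\x,\y)=\sum_i\Phi(\x_{-i},y_i)-(n-2)\Phi(\x)$, use $\Diff(\x_{-i},y_i)=\Diff(\x)-x_i+y_i$, expand the squares, and verify that all $\Diff(\x)^2$-terms cancel; what is left is exactly $-\Diff(\x)\Diff(\y)+\sum_ix_iy_i$, and one concludes as above with $\sum_ix_iy_i=n-2h_{\x,\y}$.
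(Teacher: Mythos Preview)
Your proof is correct, and your primary route is genuinely different from the paper's. The paper works directly from the definition $K(\x,\y)=\sum_i\Phi(\x_{-i},y_i)-(n-2)\Phi(\x)$ together with the closed form $\Phi(\x)=\tfrac12(n-\Diff^2(\x))$: it introduces the counts $n_1=\#\{i:x_i=y_i\}$, $n_2=\#\{i:x_i=+1,y_i=-1\}$, $n_3=\#\{i:x_i=-1,y_i=+1\}$, evaluates $\sum_i\Phi(\x_{-i},y_i)$ by splitting into these three cases, and then simplifies using $n_1+n_2+n_3=n$, $2(n_2-n_3)=\Diff(\x)-\Diff(\y)$, $n_2+n_3=h_{\x,\y}$. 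In other words, the paper's argument is exactly the ``alternative, slightly more computational route'' you sketch at the end. Your main argument instead invokes the edge decomposition of Lemma~\ref{lemma:K} and exploits the bilinear form $\Phi_e(s,t)=-st$ on the complete graph, reducing everything to the identity $\sum_{i\neq j}x_iy_j=\Diff(\x)\Diff(\y)-\sum_ix_iy_i$ and the standard Hamming-distance formula $\sum_ix_iy_i=n-2h_{\x,\y}$. This is shorter and more transparent, since it leverages structure already established in Section~\ref{sec:observables}; the paper's direct computation is more self-contained but requires more bookkeeping.
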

\begin{proof}
As stated above, $\Phi(\x) = \frac{n - \Diff^2(\x)}{2}$. In order to evaluate $K(\x,\y) = \sum_{i=1}^n \Phi(\x_{-i}, y_i) - (n - 2) \Phi(\x)$ let us name $n_1,n_2$ and $n_3$ as follows
\begin{align*}
n_1 & = \# \{ i \in [n] \colon x_i = y_i \};\\
n_2 & = \# \{ i \in [n] \colon x_i = +1, y_i = -1 \};\\
n_3 & = \# \{ i \in [n] \colon x_i = -1, y_i = +1 \}.
\end{align*}
In other words, $n_1$ is the number of players playing the same strategy in profiles $\x$ and $\y$, $n_2$ is the number of players playing $+1$ in $\x$ and $-1$ in $\y$, and $n_3$ the number of players playing $-1$ in $\x$ and $+1$ in $\y$. It holds that
\begin{equation}
 \label{eq:evalpsi1}
\begin{aligned}
\sum_{i=1}^n \Phi(\x_{-i}, y_i) & = n_1 \frac{n- \Diff^2(\x)}{2} + n_2 \frac{n - (\Diff(\x) -2)^2}{2} + n_3 \frac{n - (\Diff(\x)+2)^2}{2}\\
& = \frac{1}{2} \left( (n_1+n_2+n_3) (n - \Diff^2(\x)) + 4(n_2-n_3) \Diff(\x) - 4 (n_2+n_3) \right).
\end{aligned}
\end{equation}
Now observe that $n_1+n_2+n_3 = n$, $2(n_2-n_3) = \Diff(\x) - \Diff(\y)$, and $(n_2+n_3) = h_{\x,\y}$. Hence from \eqref{eq:evalpsi1} we get
\begin{equation}
 \label{eq:evalpsi2}
 \begin{aligned}
\sum_{i=1}^n \Phi(\x_{-i}, y_i) & = \frac{1}{2} \left( n (n+\Diff^2(\x)) + 2(\Diff(\x)-\Diff(\y)) \Diff(\x) -4 h_{\x,\y}\right)\\
& = \frac{n^2}{2} - \frac{n-2}{2} \Diff^2(\x) - \Diff(\x)\Diff(\y) - 2 h_{\x,\y}.
\end{aligned}
\end{equation}
Thus
\[
K(\x,\y) = n - \Diff(\x) \cdot \Diff(\y) - 2 h_{\x,\y}. \qedhere
\]
\end{proof}
Since the Hamming distance between two profiles is at most $n$, from the above lemma we get the following observation.
\begin{obs}\label{obs:oppositemag}
Let $\x,\y$ be two profiles with $\Diff(\x) \cdot \Diff(\y) \leqslant 0$, then $K(\x,\y) \geq -n$.
\end{obs}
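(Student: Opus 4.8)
The plan is to read the claim directly off the closed-form expression for $K(\x,\y)$ established in the preceding lemma, namely
$$
K(\x,\y) = n - \Diff(\x)\cdot\Diff(\y) - 2 h_{\x,\y}.
$$
First I would use the hypothesis $\Diff(\x)\cdot\Diff(\y) \leq 0$ to conclude that the middle term is nonnegative, i.e. $-\Diff(\x)\cdot\Diff(\y) \geq 0$. Next I would bound the Hamming distance: since $\x$ and $\y$ are profiles in $\{-1,+1\}^n$, they can disagree in at most $n$ coordinates, so $h_{\x,\y} \leq n$ and hence $-2 h_{\x,\y} \geq -2n$. Substituting both inequalities into the identity yields
$$
K(\x,\y) \geq n + 0 - 2n = -n,
$$
which is exactly the asserted bound.

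There is essentially no obstacle here: all the real work is in the lemma, and the observation is just the worst-case instantiation of that identity over profiles whose magnetizations are non-positively correlated. The only point worth noting is that both facts are genuinely needed to land on the constant $-n$: the sign condition on $\Diff(\x)\Diff(\y)$ is tight when one of the profiles is balanced, while the crude estimate $h_{\x,\y}\leq n$ is tight when the two profiles are antipodal, so neither step can be omitted.
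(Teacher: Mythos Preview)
Your proposal is correct and matches the paper's own justification exactly: the paper simply notes that the Hamming distance is at most $n$ and reads the bound off the identity $K(\x,\y) = n - \Diff(\x)\cdot\Diff(\y) - 2 h_{\x,\y}$ from the preceding lemma, together with the sign hypothesis on $\Diff(\x)\cdot\Diff(\y)$.
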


\noindent
Now we can give a lower bound on the mixing time by using the bottleneck-ratio technique.
\begin{theorem}[Lower bound]
The mixing time of the all-logit dynamics for the Curie-Weiss model is
$$
\tm = \Omega\left( \frac{e^{\beta n(n - 2)}}{4^n} \right).
$$
\end{theorem}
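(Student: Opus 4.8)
The plan is to apply the bottleneck-ratio (conductance) lower bound on mixing time (see, e.g., \cite{lpwAMS08}): for any set $S\subseteq\Omega$ with $\pi_A(S)\le 1/2$ one has
$$
\tm\;\geq\;\frac{1}{4}\cdot\frac{\pi_A(S)}{Q(S,S^c)},\qquad\text{where }Q(S,S^c)=\sum_{\x\in S,\,\y\in S^c}\pi_A(\x)\,P(\x,\y).
$$
The cut to use is the one separating profiles of positive magnetization from profiles of negative magnetization.

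First I would fix $S=\{\x\in\{-1,+1\}^n\colon\Diff(\x)>0\}$. Since $\Diff(-\x)=-\Diff(\x)$ and the Hamming distance is invariant under a global sign flip, the formula $K(\x,\y)=n-\Diff(\x)\Diff(\y)-2h_{\x,\y}$ gives $K(-\x,-\y)=K(\x,\y)$; hence $\pi_A$ is invariant under $\x\mapsto-\x$, so $\pi_A(S)=\pi_A(\{\x\colon\Diff(\x)<0\})\le 1/2$, as the technique requires. Next, writing $\pi_A(\x)=\gamma_A(\x)/Z_A$ and $P(\x,\y)=e^{-\beta K(\x,\y)}/\gamma_A(\x)$, I would observe the cancellation $\pi_A(\x)P(\x,\y)=e^{-\beta K(\x,\y)}/Z_A$, so that the common factor $1/Z_A$ drops out of the ratio and
$$
\frac{\pi_A(S)}{Q(S,S^c)}=\frac{\sum_{\x\in S}\sum_{\z\in\Omega}e^{-\beta K(\x,\z)}}{\sum_{\x\in S,\,\y\in S^c}e^{-\beta K(\x,\y)}}.
$$

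For the numerator it suffices to retain the single term with $\x=\z=\x_{+}$, the all-$(+1)$ profile, which lies in $S$; by the above formula $K(\x_{+},\x_{+})=n-n^2=-n(n-1)$, so the numerator is at least $e^{\beta n(n-1)}$. For the denominator, every pair with $\x\in S$ and $\y\in S^c$ satisfies $\Diff(\x)\cdot\Diff(\y)\le 0$, so Observation~\ref{obs:oppositemag} yields $K(\x,\y)\geq -n$ and hence $e^{-\beta K(\x,\y)}\leq e^{\beta n}$; since $|S|\cdot|S^c|\leq 4^n$, the denominator is at most $4^n e^{\beta n}$. Combining the two estimates,
$$
\tm\;\geq\;\frac{1}{4}\cdot\frac{e^{\beta n(n-1)}}{4^n e^{\beta n}}=\frac{e^{\beta n(n-2)}}{4^{n+1}}=\Omega\!\left(\frac{e^{\beta n(n-2)}}{4^n}\right).
$$

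The calculation is short; the points needing (minor) care are checking that the chosen cut has stationary mass at most $1/2$ — which is precisely the negation symmetry of $K$ — and pinning down the absolute constant in the bottleneck inequality. The only place an idea is really needed is the choice of cut together with the realisation that the diagonal value $K(\x_{+},\x_{+})$ at the fully ordered profile is the most negative value of $K$ and hence blows up the numerator, while every term crossing the cut is uniformly tamed by the bound $K\geq-n$ of Observation~\ref{obs:oppositemag}.
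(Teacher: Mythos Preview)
Your proof is correct and follows essentially the same route as the paper: both use the bottleneck-ratio theorem with the cut by sign of the magnetization, lower-bound the stationary mass of the cut by the single fully ordered profile (giving the $e^{\beta n(n-1)}$ term), and upper-bound each crossing term via Observation~\ref{obs:oppositemag}. The only cosmetic differences are that the paper uses $S_{-}$ instead of your $S$, and it bounds $|S_{-}|\cdot|S_{+}|\le 2^{2n-2}$ rather than your looser $4^n$, which does not affect the asymptotic statement.
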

\begin{proof}
Let $S_- \subseteq \{-1,+1\}^n$ be the set of profiles $\x$ such that $\Diff(\x) < 0$, i.e.
$$
S_- = \{ \x \in \{-1,+1\}^n \colon \Diff(\x) < 0 \}
$$
and observe that $\pi(S_-) \leqslant 1/2$. From Observation~\ref{obs:oppositemag} we have that for every $\x \in S_-$ and $\y \in S_+ = \{-1,+1\}^n \setminus S_-$ it holds that
\begin{equation}\label{eq:mix:ubonq}
\pi(\x)P(\x,\y) = \frac{e^{-\beta K(\x,\y)}}{Z} \leqslant \frac{e^{\beta n}}{Z}.
\end{equation}
Moreover, if we name $\x_{-}$ the profile where everyone is playing $-1$ we have that
\begin{equation}\label{eq:mix:lbonpi}
\pi(S_-) \geqslant \pi(\x_{-}) \geqslant \frac{1}{Z} e^{-2\beta \Phi(\x_{-})} = \frac{1}{Z} e^{\beta n(n-1)}.
\end{equation}
Hence, by using bounds~\eqref{eq:mix:ubonq} and~\eqref{eq:mix:lbonpi}, and the fact that the size of $S_-$ is at most $2^{n-1}$, we can bound the bottleneck at $S_-$ with
$$
B(S_-) = \frac{Q(S_-,S_+)}{\pi(S_-)} = \frac{\sum_{\x \in S_-} \sum_{\y \in S_+} \pi(\x) P(\x,\y)}{\pi(S_-)} \leqslant \frac{2^{2n-2} e^{\beta n}}{e^{\beta n(n-1)}} = \frac{2^{2n-2}}{e^{\beta n (n-2)}}.
$$
By using the bottleneck-ratio theorem (see e.g. Theorem~7.3 in~\cite{lpwAMS08}) it follows that
\[
\tm = \Omega \left( \frac{e^{\beta n (n-2)}}{2^{2n}} \right).\qedhere
\]
\end{proof}

\paragraph{Remarks.}
In this section we proved upper and lower bounds on the mixing time of the all-logit dynamics for the Curie-Weiss model. In particular, the upper bound shows that for $\beta = \OO(1/n^2)$ the mixing time is constant and for $\beta = \OO( \log n / n^2)$ it is at most polynomial. The lower bound shows that, for every constant $\varepsilon > 0$, if $\beta > (1+\varepsilon) (\log 4) /n$ the mixing time is exponential. When $\beta$ is between $\Theta(\log n /n^2)$ and $\Theta(1/n)$ we still cannot say if mixing is polynomial or exponential.

\end{document}